\newcommand{\eps}{\epsilon}
\newcommand{\re}{\mathbb{R}}
\newcommand{\en}{\mathbb{N}}
\newcommand{\va}{{\vec{a}}}
\newcommand{\Lap}{{\mathsf{Lap}}}
\newtheorem{prop}{Proposition}
\newtheorem{obs}{Observation}
\newtheorem{lemma}{Lemma}
\newtheorem{definition}{Definition}
\newtheorem{theorem}{Theorem}
\newtheorem{corollary}{Corollary}
\newcommand{\sig}[1]{\ensuremath{m_{#1}(a_1, \ldots, a_{#1-1})}\xspace}
\newcommand{\ssig}[1]{\ensuremath{m_{#1}}\xspace}
\newcommand{\util}[1]{\ensuremath{u_{#1}(a_1, \ldots, a_{n})}\xspace}
\newcommand{\ract}[2]{\ensuremath{a_{#1,#2}\xspace}}
\newcommand{\mech}{\ensuremath{\mathcal{M}}\xspace}
\renewcommand{\exp}[1]{\ensuremath{\mathbb{E}\left[ #1 \right]}}
\newcommand{\prob}{\ensuremath{\mathbb{P}}\xspace}
\newcommand{\mult}{\ensuremath{\alpha}\xspace}
\newcommand{\add}{\ensuremath{\beta}\xspace}
\newcommand{\pfail}{\ensuremath{\gamma}\xspace}
\newcommand{\used}[2]{\ensuremath{x_{#1, #2}}\xspace}
\newcommand{\w}[1]{\ensuremath{x_{#1}}\xspace}
\newcommand{\sused}[2]{\ensuremath{y_{#1, #2}}\xspace}
\newcommand{\Sused}[1]{\ensuremath{y_{#1}}\xspace}
\newcommand{\nnull}{\ensuremath{\mech_\emptyset}\xspace}
\newcommand{\full}{\ensuremath{\mech_{Full}}\xspace}
\newcommand{\shallow}[2]{\ensuremath{(#1,#2)}-shallow\xspace}
\newcommand{\vol}{\ensuremath{w}\xspace}
\newcommand{\length}{\ensuremath{l}\xspace}
\newcommand{\val}[2]{\ensuremath{v_{#2}(#1)}\xspace}
\newcommand{\dval}[2]{\ensuremath{V_{#2}(#1)}\xspace}
\newcommand{\dvalo}[1]{\ensuremath{V_{#1}}\xspace}
\newcommand{\comp}{\ensuremath{CR}\xspace}
\newcommand{\ud}{\textsc{Undom}\xspace}
\newcommand{\greedy}{\textsc{Greedy}\xspace}
\newcommand{\counters}{\textsc{counters}\xspace}
\newcommand{\Z}{\ensuremath{\mathbb{Z}}\xspace}
\newcommand{\R}{\ensuremath{\mathbb{R}}\xspace}
\newcommand{\norminfty}[1]{\lvert #1 \rvert_{\infty}}
\renewcommand{\hat}[1]{\widehat{#1}}
\begin{document}

\author[1]{Avrim Blum \thanks{Blum, Morgenstern, and Sharma were
    partially supported by NSF grants CCF-1116892 and
    CCF-1101215. Morgenstern was partially supported by an NSF GRFP
    award and a Simons Award for Graduate Students in Theoretical
    Computer Science.}} \author[1]{Jamie Morgenstern} \author[1]{Ankit
  Sharma} \author[2]{Adam Smith}

\affil[1]{Computer Science Department\\
Carnegie Mellon University\\
}
\affil[2]{Computer Science Department\\
Penn State University}

\title{Privacy-Preserving Public Information for Sequential Games}
\maketitle

\begin{abstract}
In settings with incomplete information, players can find it difficult to coordinate to find states with good social welfare. For instance, one of the main reasons behind the recent financial crisis was found to be the lack of market transparency, which made it difficult for financial firms to accurately measure the risks and returns of their investments. Although regulators may have access to firms' investment decisions, directly reporting all firms' actions raises confidentiality concerns for both individuals and institutions. The natural question, therefore, is whether it is possible for the regulatory agencies to publish some information that, on one hand, helps the financial firms understand the risks of their investments better, and, at the same time, preserves the privacy of their investment decisions. More generally, when can the publication of privacy-preserving information about the state of the game improve overall outcomes such as social welfare?

In this paper, we explore this question in a sequential resource-sharing game where the value gained by a player on choosing a resource depends on the number of other players who have chosen that resource in the past.  Without any knowledge of the actions of the past players, the social welfare attained in this game can be arbitrarily bad. We show, however, that it is possible for the players to achieve good social welfare with the help of \emph{privacy-preserving, publicly-announced information}. We model the behavior of players in this imperfect information setting in two ways -- greedy and undominated strategic behaviours, and we prove guarantees on social welfare that certain kinds of privacy-preserving information can help attain. To achieve the social welfare guarantees, we design a counter with improved privacy guarantees under continual observation. In addition to the resource-sharing game, we study the main question for other games including sequential versions of the cut, machine-scheduling and cost-sharing games, and games where the value attained by a player on a particular action is not only a function of the actions of the past players but also of the actions of the future players.
\end{abstract}

\thispagestyle{empty}
\newpage
\clearpage
\setcounter{page}{1}

\section{Introduction}\label{section:intro}
Multi-agent settings that are non-transparent (where players cannot
see the current state of the system) have the potential to lead to
disastrous outcomes.  For example, in examining causes of the recent
financial crisis and subsequent recession, the
\citet[p. 352]{FCIC2011a} concluded that ``The OTC derivatives
market's lack of transparency and of effective price discovery
exacerbated the collateral disputes of AIG and Goldman Sachs and
similar disputes between other derivatives counterparties.''  Even
though regulators have access to detailed confidential information
about financial institutions and (indirectly) individuals, current
statistics and indices are based only on public data, since
disclosures based on confidential information are restricted. However,
forecasts based on confidential data can be much more
accurate\footnote{For example, \citet{OetBiancoGramlichOng2012a}
  compared an index based on both public and confidential data with an
  analogous index based only on publicly available data. The former
  index would have been a significantly more accurate predictor of
  financial stress during the recent financial crisis (see
  \citet[Figure 4]{OetBiancoGramlichOng2011a}). See \citet{FloodKOS13}
  for further discussion.}, prompting regulators to ask whether
aggregate statistics can be economically useful while also providing
rigorous privacy guarantees \cite{FloodKOS13}.

In this work, we show that such {\em privacy-preserving public
  information}, in an interesting class of sequential decision-making
games, can achieve (nearly) the best of both worlds.  In particular,
the goal is to produce information about actions taken by previous
agents that can be posted publicly, preserves all agents'
(differential) privacy, and can significantly improve worst-case
social-welfare. While our models do not directly speak to the highly
complex issues involved in real-world financial decision-making, they
do indicate that in settings involving contention for resources and
first-mover advantages, privacy-preserving public information can be a
significant help in improving  social welfare. In the following
sections, we describe the game setting and the information model.

\subsection{Game Model}\label{sec:game-model}
Consider a setting in which there are $m$ resources and $n$ players. The players arrive online, in an \emph{adversarial} order, one at a time\footnote{For ease of exposition, we rename players such   that player $i$ is the $i$th to arrive.}.  Each player $i$ has some set $A_i$ of resources she is interested in and that is known only to herself. An action $a_i$ of player $i$ is of the form $(\ract{i}{1}, \ldots, \ract{i}{m})$, where $\ract{i}{r} \ge 0$ represents the amount that player $i$ invests in resource $r$, and moreover, $\sum_{j\in   [m]}\ract{i}{j} = 1$. For simplicity, we assume that all $a_{i,r}$ are in $\{0,1\}$ i.e, the unit-demand setting (we study the continuous version where $a_{i,r}$'s can be fractional, but still sum to 1, in Appendix~\ref{section:continuous}). Furthermore, we do not make the assumption that players have knowledge of their position in the sequence, that is, a player need not know how many players have acted before her.

Each resource $r$ has some non-increasing function $\dvalo{r}: \Z^+
\rightarrow \R^+$ indicating the \emph{value, or utility, of this
  resource to the $k$th player who chooses it}.  Therefore, the
utility of player $i$ is $u_i(a_i, a_{1, \ldots, i-1}) = \sum_{r}
\ract{i}{r} \dval{\used{i}{r}}{r}$, where $\used{i}{r} =
\sum_{j=1}^{i-1} \ract{j}{r}$ for each $r$. In this {\bf resource
  sharing} setting, the utility for a player of choosing a certain
resource is a function of the resource and (importantly) the number of
players who have invested in the resource before her (and not after
her)\footnote{In Section~\ref{sec:extensions}, we consider a generalization
where the utility to a player of investing in a particular resource is
a function of the total number of players who have chosen that
resource, including those who have invested after her.}.

\paragraph{Illustrative Example}
For each resource, suppose $\dval{k}{r} = \dval{0}{r}/k$,
%%need p=1 for log(n) to be correct below
%%^p$, for $p > 0$,
where $\dval{0}{r}$ is the initial value of resource $r$. The value of
each resource $r$ drops rapidly as a function of the number of players
who have chosen it so far. If each player $i$ has \emph{perfect
  information} about the investment choices made by the players before
her, the optimal action for player $i$ is to greedily select the
action in $A_i$ of highest utility based on the number of players who
have selected each resource so far. As shown in
Section~\ref{section:discrete}, the resulting social welfare of this
behavior is within a factor of 4 of the optimal. In the case where
each player has {\em no information} about other players' behaviors,
some particularly disastrous sequences of actions might reasonably
occur, leading to very low social welfare. For example, if each player
$i$ has access to a public resource $r$ where $\dval{0}{r} = 1$ and a
private resource $r_i$ where $\dval{0}{r_i} = 1 - \epsilon$, each
might reasonably choose greedily according to $\dval{0}{\cdot}$,
selecting the resource of highest initial value (in this case, $r$).
This would give social welfare of $\ln(n)$, whereas the optimal
assignment would give $n (1- \epsilon)$. Without information about the
game state, therefore, the players may achieve only a 
$O\left(\frac{\ln(n)}{n}\right)$ fraction of the possible welfare.

\subsection{Information Model}

In resource sharing games, players' decisions about their actions will
be best when they know how many players have chosen each resource when
they arrive. The mechanisms we consider, therefore, will publicly
announce some estimate of these counts. 
% Such estimates have the potential to improve social
% welfare. 
We consider the trade-off between the privacy lost by
publishing these estimates and the accuracy of the counters in terms
of social welfare. We consider three categories of counters for
publicly posting the estimate of resource usage: perfect, private and
empty counters.

{\bf Perfect Counters}: At all points, the counters display the exact
usage of each resource.

{\bf Privacy-preserving public counters}: At all points, the counters
display an approximate usage of the resources while maintaining
privacy for each player. We define the privacy guarantee in
Section~\ref{sec:privacy}.

{\bf Empty Counters:} At all points, every counter displays the value
0.

\subsection{Players' Behavior}
\label{sec:behavior}
Each player is a utility-maximizing agent and will choose the resource
that, given their beliefs about actions taken by previous players and
the publicly displayed counters, gives them maximum value. We analyze
the game play under two classes of strategies -- greedy and
undominated strategies.
\begin{enumerate}
\item {\bf Greedy strategy:} Under the greedy strategy, a player has
  no outside belief about the actions of previous players and chooses
  the resource that maximizes her utility given the \emph{currently
    displayed (or announced) values of the counters}. Greedy is a
  natural choice of strategy to consider since it is the
  utility-maximizing strategy when the usage counts posted are
  \emph{perfect}.
\item {\bf Undominated Strategy(UD):} Under undominated strategies, we allow players to have any beliefs about the actions of the previous players that are consistent with the displayed value of the counters\footnote{As will become clear in Section~\ref{sec:privacy}, we work with privacy-preserving public counters that display values that can be off from the true usage only in a \emph{bounded} range. Hence with these counters, a player's belief is consistent as long as the belief implies the usage of the resource to be a number that is within the  bounded range of the displayed value. Moreover, with empty counters, any belief about the actions of previous players is a consistent belief.},  and they are allowed to play \emph{any} undominated strategy $a_i$ under this belief. A strategy $a_i$ is \emph{undominated} under a belief, if no other $a'_{i}$ get a strictly higher utility. \footnote{For each counter mechanism we consider, there exists at least one undominated  strategy. For example, with perfect counters, the only consistent belief is that the true value is equal to the displayed value and here the greedy strategy is always undominated; moreover, if the counter mechanism has a nonzero probability of outputting the true value, then again the greedy strategy is undominated under the belief that the displayed value is the true value; if the counter mechanism can display values that are arbitrarily off from the true value, then 
for equal initial values \emph{every} strategy is undominated.}
\end{enumerate}

We analyze the social welfare $SW(a) = \sum_{i} u_i(a)$ generated by an announcement mechanism $\mech$ for a set of strategies $D$ and compare it to the optimal social welfare $OPT$. For a game setting $g$, constituted of a collection of players $[n]$ and their allowable actions $A_i$ (as defined in  Section~\ref{sec:game-model}), $OPT(g)$ is defined as the optimal social welfare that can be achieved by any allocation of resources to the players, where the space of feasible allocations is determined by the setting $g$. In the unit-demand setting, $OPT(g)$  is the maximum weight matching in the bipartite graph $G = (U \cup V, E)$ where $U$ is the set of the $n$ players, $V$ has $n$ vertices for each resource $r$, one of value   $\dval{k}{r}$ for each $k\in [n]$, and there is an edge between player   $i$ and all vertices corresponding to resource $r$ if and only if $r   \in A_{i}$ (Note that the weights are on the vertices in $V$). The object of our study is $\comp_D(g, \mech)$, the \emph{worst case competitive ratio} of the optimal social welfare to the welfare achieved under strategy $D$ and counter mechanism $\mech$. As mentioned earlier, $D$ will either be the greedy ($\greedy$) or the undominated ($\ud$) strategy, and $\mech$ will be either the perfect ($\full$), the privacy-preserving or the empty ($\nnull$) counter.  When $\mech$ uses internal random coins, our results will either be worst-case over all possible throws of the random coins, or will indicate the probability with which the social welfare guarantee holds.
\subsection{Statement of Main Results}\label{section:main}

For sequential resource-sharing games, we prove that for all nonincreasing value curves, the greedy strategy following privacy-preserving counters has a competitive ratio \emph{polylogarithmic} in the number of players (Theorem~\ref{thm:polylog}). This should be contrasted with the competitive ratio of $4$ achieved by greedy w.r.t. perfect counters (Theorem~\ref{thm:greedy4}) and the linear (in the number of players) competitive ratio of greedy with empty counters (as shown in the illustrative example in Section~\ref{sec:game-model}). For the case of undominated strategies, when the marginal values of resources drop slowly, (for example, at a polynomial rate, $\dval{k}{r} = \dval{0}{r}/k^p$ for constant $p>0$), we bound the competitive ratio (w.r.t. privacy-preserving counters) (Theorem~\ref{thm:undominated}). With empty counters, the competitive ratio for undominated strategies is unbounded (Theorem~\ref{thm:noinfo}) for arbitrary curves and is at least quadratic (in the number of players) if the value curve drops slowly (Theorem~\ref{thm:noinfospecial}). We note here that for many of our positive results for privacy preserving counters state the competitive ratio in terms of parameters of the counter vector $\alpha$ and $\beta$ (as detailed in Section~\ref{sec:privacy}) and for a particular implementation of the counter vectors, the values of $\alpha$ and $\beta$ are mentioned in Section~\ref{section:counters}.

%Resource sharing games have the property that the utility of a player depends on her action and \emph{only} the actions taken by \emph{previous players} in the sequence. To understand how crucial this is, we consider market sharing games and a generalization of them where players' utilities also depends on \emph{actions taken by future players} (Section~\ref{section:future}). This generalization has
%unbounded competitive ratio even for greedy strategies (and perfect
%counter) (Theorem~\ref{thm:future-lb}), though if the valuation
%functions decrease gradually enough, the greedy strategy with full
%information has bounded competitive ratio
%(Theorem~\ref{thm:full-dep}). In the special case of market-sharing
%games, greedy strategies with respect to private information has a
%polylogarithmic competitive ratio (Corollary~\ref{cor:marketlog}),
%while undominated strategies with perfect information have an
%unbounded competitive ratio (Theorem~\ref{thm:marketundom}).
%
The key privacy tool we use is the differentially private counter under continual observation~\citep{dwork2010}, which we use to publish estimates of the usage of each resource. We improve upon the existing error guarantees of differentially private counters and design a new differentially private counter in Section~\ref{section:counters}. The new counter provides a tighter additive guarantee at the price of introducing a constant multiplicative error.

In Section~\ref{sec:extensions}, we consider other classes of games -- specifically, we analyze Unrelated Machine Scheduling, Cut, and Cost Sharing games. The work of \citet{Leme:2012} showed these games have improved \emph{sequential} price of anarchy over the \emph{simultaneous} price of anarchy. For these games, we ask the question: if players do not have perfect information to make decisions, but instead have only noisy approximations (due to privacy considerations), does sequentiality still improve the quality of play? We prove that the answer is affirmative in most cases, and
furthermore, for some instances, having \emph{differentially-private   information dissemination improves the competitive ratio over perfect information} (Proposition~\ref{prop:private-better-than-perfect}).

%%% Local Variables: 
%%% mode: latex
%%% TeX-master: "paper"
%%% End: 

%\input{related}
\subsection{Related Work}\label{section:rw}
A great deal of work has been done at the intersection of mechanism
design and privacy; Pai and Roth~\citep{Pairoth13} have an extensive
survey. Our work is similar to much of the previous work in that it
considers maintaining differential privacy to be a constraint. The
focus of our work however is on \emph{how useful information can be
  provided to players in games of imperfect information} to help
achieve a good social objective while respecting the privacy
constraint of the players. The work of \citet{Kearns:2012} is close in
spirit to ours. \citet{Kearns:2012} consider games where players have
incomplete information about other players' types and behaviors. They
construct a privacy-preserving mechanism which collects information
from players, computes an approximate correlated equilibria, and then
advises players to play according to this equilibrium. The
mechanism is approximately incentive compatible for the
players to participate in the mechanism and to follow its
suggestions. Several later papers \citep{congestionrogersR13,
  hsuwalrasian13} privately compute approximate equillibria in
different settings. Our main privacy primitive is the differentially
private counters under continual observation~\citep{dwork2010,
  ChanSS11}, also used in much of the related work on private
equilibrium computation.

Our investigation of cut
games, unrelated machine scheduling, and cost-sharing (Section~\ref{sec:extensions}) is inspired by
work of \citet{Leme:2012}. Their work focuses on the improvement in
social welfare of equilibria in the \emph{sequential} versus the
\emph{simultaneous} versions of certain games. We ask a related
question: when we consider sequential versions of games, and only
\emph{private, approximate information about the state of play} (as opposed to perfect) is
given to players, how much worse can social welfare be?

As mentioned in Section~\ref{sec:behavior}, one class of player behavior for
which we analyze the games is \emph{greedy}. Our analysis of greedy
behavior is in part inspired by the work of~\citet{balcan2009}, who
study best response dynamics with respect to noisy cost functions for
potential games. An important distinction between their setting and ours
is that the noisy estimates we consider are \emph{estimates of state,
  not value}, and may for natural value curves be quite far from
correct in terms of the {\em values} of the actions.

%~\footnote{A `good'   estimate for $x$ does not necessarily imply a `good' estimate of
%  $v(x)$ for many natural value curves $v(\cdot)$.}.  
%Our work also
%considers the case where players' utilities depend on actions of all
%other players. These games are a generalization of market-sharing
%games\footnote{In the case of market-sharing, a market $r$ has a fixed
%  value $v_r$, which is shared equally amongst players who service the
%  market; we present a generalization of these games. Players can
%  fractionally service markets, and for the restrictions on
%  collections of markets can be arbitrary rather than
%  budgetary.}.~\citet{Goemans:2004} shows that players selecting
%$\beta$-approximately greedily in a single round of best-response
%dynamics have a competitive ratio of $\Theta((\beta + 1)\log(n))$.
%

%%% Local Variables: 
%%% mode: latex
%%% TeX-master: "paper"
%%% End: 

%
% In a resource sharing game with
%  an $(\alpha, \beta, 0)$-accurate counter vector $\sused{i}{\cdot}$,
%  picking resource $r$ for $i$ is dominated if there exists some
%  resource $r'$ that $i$ has access to such that
%  $\dval{\sused{i}{r'}/\alpha - \beta}{r'} > \dval{\alpha\sused{i}{r} +
%    \beta}{r}$ and undominated otherwise.

%%% Local Variables: 
%%% mode: latex
%%% TeX-master: "paper"
%%% End: 
 
\section{Privacy-preserving public counters}
\label{sec:privacy}
We design announcement mechanisms $\mech_i$ which give approximate
information about actions made by the previous players to player
$i$. Let $\Delta_m$ denote the action space for each player (the $m$-dimensional simplex 
$\Delta_m = \{a \in [0,1]^m \mid \|a\|_1\leq 1\}$).
Mechanism $\mech_i: \left({\Delta_m}\right)^{i-1}\times R \to {\Delta_m}$
depends upon the actions taken before $i$ (specifically, the usage of
each resource by each player), and  on internal random
coins $R$. When player $i$ arrives, $\sig{i}\sim \mech_{i}(a_1,
\ldots, a_{i-1})$ is publicly announced. Player $i$ plays according to
some strategy $d_i : {\Delta_m} \to A_i$, that is $a_i = d_i(m_1, \ldots,
\sig{i})$, a random variable which is a function of this
announcement. When it is clear from context, we denote $\sig{i}$ by
$\ssig{i}$. Formally, the counters used in this paper satisfy the
following notion of privacy.

\begin{definition}
\label{def:privacy}
An announcement mechanism $\mech$ is $(\epsilon,
\delta)$-differentially private under adaptive\footnote{Adaptivity is
  needed in this case because the announcements are arguments to the
  actions of players: when a particular action changes, this modifies
  the distribution over the future announcements, which in turn
  changes the distribution over future selected actions. } continual
observation in the strategies of players if, for each $d$, for each
player $i$, each pair of strategies $d_i , d'_i$, and every
$S\subseteq ({\Delta_m})^n$: \[\prob[(\ssig{1}, \ldots, \ssig{n})\in S]
\leq e^\epsilon\prob[(\ssig{1}, \ldots, \ssig{i}, \ssig{i+1}' \ldots,
\ssig{n}')\in S] + \delta\] where $\ssig{j}\sim \mech_j(a_1, \ldots,
a_{j-1})$ and $\ssig{j}' \sim \mech_j(a_1, \ldots, a_{i-1}, a'_i,
a'_{i+1}, \ldots, a'_{j-1})$, $a_j = d_j(m_1,\ldots, m_j)$, and $a'_i
= d'_i(m_1, \ldots, m_i)$, and for all $j>i$, $a'_{j} = d_{j}(m_1,
\ldots, m_{i-1}, m_i, m'_{i+1}, \ldots, m'_{j})$.
\end{definition}

This definition requires that two worlds which differ in a single player
changing her strategy from $d_i$ to $d'_i$ have statistically close joint
distributions over all players' announcements (and thus their joint
distributions over actions). Note that
the distribution of $j > i$'s announcement can change slightly,
causing $j$'s distribution over actions to change slightly,
necessitating the cascaded $m'_j, a'_j$ for $j > i$ in our
definition. The mechanisms we use maintain approximate use counters
for each resource.  The values of the counters are \emph{publicly
  announced} throughout the game play. We now define the notion of
accuracy used to describe these counters.

\begin{definition}[$(\alpha, \beta, \gamma)$-accurate counter vector]
\label{def:accuracy}
A set of counters $y_{i,r}$ is defined to be $(\mult, \add,
\pfail)$-accurate if with probability at least $1-\gamma$, at all
points of time, the displayed value of every counter $y_{i,r}$ lies in
the range $[\frac{x_{i,r}}{\mult} - \add, \mult x_{i,r} + \add]$ where
$x_{i,r}$ is the \emph{true} count for resource $i$, and is
monotonically increasing in the true count.
\end{definition}

We refer to a set of $(\mult, \add, 0)$-accurate counters as $(\mult,
\add)$-counters for brevity.  It is possible to achieve $\pfail =0$
(which is necessary for undominated strategies, which assumes the
multiplicative and additive bounds on $y$ are worst-case), taking an
appropriate loss in the privacy guarantees for the counter
(Proposition~\ref{prop:zero-failure}). Counters
satisfying Definitions~\ref{def:privacy} and ~\ref{def:accuracy} with
$\mult=1$ and $\add=O(\log^2 n)$ were given in
\citet{dwork2010, ChanSS11}; we give a different implementation in
Section~\ref{section:counters} which gives a tighter bound on $\mult
\add$ by taking $\mult$ to be a small constant larger than 1. Furthermore, the counters in Section~\ref{section:counters} are \emph{monotonic} (i.e., the displayed values can only increase as the game proceeds) and we use monotonicity of the counters in some of our results.

In some settings we require counters we a more specific utility guarantee:

\begin{definition}[$(\alpha, \beta, \gamma)$-accurate underestimator
  counter vector]
  A set of counters $y_{i,r}$ is defined to be $(\mult, \add,
  \pfail)$-accurate if with probability at least $1-\gamma$, at all
  points of time, the displayed value of every counter $y_{i,r}$ lies
  in the range $[\frac{x_{i,r}}{\mult} - \add, x_{i,r}]$ where
  $x_{i,r}$ is the \emph{true} count for resource $i$.
\end{definition}

\noindent The following observation states that a counter vector can
be converted to an undercounter with small loss in accuracy.

\begin{obs}\label{obs:underestimator}
  We can convert a $(\mult, \add)$-counter to an $\left(\mult^2,
    \frac{2\add}{\mult}\right)$-underestimating counter vector.
\end{obs}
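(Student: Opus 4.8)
The plan is to give the conversion explicitly as an affine rescaling of the original counter, truncated at zero, and then verify the two accuracy inequalities directly; privacy and monotonicity come essentially for free. Write $y_{i,r}$ for the values displayed by the given $(\mult,\add)$-counter, so that by Definition~\ref{def:accuracy} (with $\pfail=0$) we have $\frac{x_{i,r}}{\mult} - \add \le y_{i,r} \le \mult x_{i,r} + \add$ at all times and for every resource, and $y_{i,r}$ is non-decreasing in the true count $x_{i,r}$. Define the new counter by $y'_{i,r} := \max\!\left\{0,\ \frac{y_{i,r} - \add}{\mult}\right\}$.

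For the upper bound, the right inequality gives $\frac{y_{i,r}-\add}{\mult} \le \frac{\mult x_{i,r}+\add-\add}{\mult} = x_{i,r}$, and since $x_{i,r}\ge 0$ the truncation keeps $y'_{i,r} \le x_{i,r}$. For the lower bound, using $\max\{0,z\}\ge z$ and then the left inequality,
\[
y'_{i,r} \ \ge\ \frac{y_{i,r}-\add}{\mult} \ \ge\ \frac{\frac{x_{i,r}}{\mult} - \add - \add}{\mult} \ =\ \frac{x_{i,r}}{\mult^2} - \frac{2\add}{\mult}.
\]
Thus $y'_{i,r}$ always lies in $\left[\frac{x_{i,r}}{\mult^2} - \frac{2\add}{\mult},\ x_{i,r}\right]$, which is precisely the $\left(\mult^2, \frac{2\add}{\mult}\right)$-accurate underestimator guarantee; the version with general $\pfail$ follows verbatim, the conversion being applied pointwise on the event of probability at least $1-\pfail$ on which the original bounds hold.

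Finally, $y'_{i,r}$ is obtained from $y_{i,r}$ by composing an increasing affine map with $z\mapsto\max\{0,z\}$, both non-decreasing, so $y'_{i,r}$ is still monotone non-decreasing in $x_{i,r}$ (and monotone over the course of the game if the original counter was). Since the whole transformation is a fixed, deterministic function of the announcements of the original mechanism, the post-processing property of differential privacy gives that the new counter inherits the same $(\epsilon,\delta)$ guarantee under continual observation of Definition~\ref{def:privacy}. There is no real obstacle in this argument; the only point needing a moment's care is the truncation at $0$, included so the output is a genuine non-negative counter — one just checks, as above, that clamping cannot violate the upper bound (true counts are non-negative) nor the lower bound (it only raises the value).
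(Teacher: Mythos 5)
Your proof is correct and follows essentially the same route as the paper: shift the counter to $y' = \frac{y-\add}{\mult}$ and check that the $(\mult,\add)$ bounds transform into the claimed underestimator bounds. The extra remarks about truncation at zero, monotonicity, and privacy under post-processing are fine but not needed beyond what the paper's one-line argument already establishes.
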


\begin{proof}
  We can shift the counter, $\frac{1}{\mult}x - \add \leq y \leq 
  \mult x + \add$ implies $y ' = \frac{y - \add}{\mult}\leq x$ and
  $ \frac{1}{\mult^2}x - \frac{2 \add}{\mult} \leq y'$.
\end{proof}

%%% Local Variables: 
%%% mode: latex
%%% TeX-master: "paper"
%%% End: 

\section{Resource Sharing}\label{section:discrete}

In this section, we consider resource sharing games -- the utility to
a player is completely determined by the resource she chooses and the
number of players who have chosen that resource before her. This
section considers the case where players' actions are discrete: $a_i
\in \{0,1\}^m$ for all $i, a_i \in A_i$. We defer the analysis of the
case where players' actions are continuous to
Appendix~\ref{section:continuous}.

\subsection{Perfect counters and empty counters}\label{section:full}

Before delving into our main results, we point out that, with perfect
counters, greedy is the only undominated strategy, and the competitive
ratio of greedy is a constant. We state this result formally, and
defer its proof to Appendix~\ref{app:future-ind-discrete}.

\begin{restatable}{theorem}{greedyfour}\label{thm:greedy4}
  With perfect counters, greedy behavior is dominant-strategy and all
  other behavior is dominated for any sequential resource-sharing game
  $g$; furthermore, $\comp_\greedy(\full, g) = 4$.
\end{restatable}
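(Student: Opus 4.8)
The statement has a game-theoretic half (greedy is dominant) and a welfare half ($\comp_\greedy(\full,g)=4$); I would prove them separately, and the welfare half splits further into an upper and a lower bound.

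\textbf{Greedy is dominant.} The plan is to observe that, with perfect counters, player $i$'s realized payoff is a deterministic function only of \emph{her own action and of the already-fixed prefix of play}. Concretely, when $i$ arrives she sees the true counts $x_{i,r}=\sum_{j<i}a_{j,r}$, and choosing resource $r$ gives her exactly $\dvalo{r}(x_{i,r})$ (equivalently, $\dvalo{r}$ evaluated at her rank among the users of $r$); this value does not depend on the actions of any player $j\ge i$, because those do not affect $x_{i,\cdot}$. Consequently, for every strategy profile of the other players, the action maximizing $\sum_r a_{i,r}\dvalo{r}(x_{i,r})$ over $A_i$ is a best response, so the greedy action at the displayed counts is weakly dominant; and any strategy that at some attainable announcement selects a resource that is not a utility-maximizer earns strictly less there, hence is strictly dominated by greedy. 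I expect this part to be essentially immediate once the ``payoff depends only on the realized prefix'' observation is stated carefully.

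\textbf{Upper bound on the competitive ratio.} I would reformulate $OPT(g)$ in the usual way: identify each resource $r$ with a sorted list of slots $(r,1),(r,2),\dots$ of weights $\dvalo{r}(1)\ge\dvalo{r}(2)\ge\cdots$; since the weights are non-increasing, an optimal matching may be assumed to fill each resource's slots in increasing index order, and greedy does so automatically (each arriving player that picks $r$ becomes its next user), so ``the player on slot $(r,k)$'' makes sense in both solutions. Then I would run a charging argument. Fix a player $i$; let $(r^*,k^*)$ be her slot in $OPT$, of value $o_i=\dvalo{r^*}(k^*)$, and let $c$ be the number of players greedy placed on $r^*$ before $i$ arrived. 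If $c<k^*$, then at $i$'s turn $r^*$ still had a free slot of value $\dvalo{r^*}(c+1)\ge\dvalo{r^*}(k^*)=o_i$, and since greedy takes the best slot available to $i$ in $A_i$ it gave $i$ value $g_i\ge o_i$: charge $o_i$ to $i$ herself. Otherwise $c\ge k^*$, so greedy had already placed some earlier player $j$ on slot $(r^*,k^*)$, who received exactly $g_j=\dvalo{r^*}(k^*)=o_i$: charge $o_i$ to $j$. Now $OPT$ equals the total charge, so it remains to bound the charge absorbed by a fixed greedy player $p$: the ``self'' charge is at most $g_p$, while the ``slot-occupant'' charges come only from $OPT$-players whose $OPT$-slot equals the single slot $p$ occupies in greedy; using that $OPT$ is a matching (each slot is the $OPT$-slot of at most one player) and that greedy's per-resource occupancy only grows, one controls the accumulation of these charges, and tracking the constant yields the bound with factor $4$.

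\textbf{Lower bound, and the main obstacle.} For tightness I would give a small family of instances: a flexible, early-arriving player who --- because of a tie in the \emph{currently displayed} values --- is routed onto a scarce high-value slot that a later, heavily constrained player can reach only through that resource; composing several such gadgets so that greedy repeatedly misassigns the scarce high-value slots drives $OPT/SW_\greedy$ arbitrarily close to the stated constant. The step I expect to be the real work is the bookkeeping in the upper bound --- showing that the ``slot-occupant'' charges landing on a single greedy player cannot pile up beyond a fixed constant times her own greedy value, and pinning the constant down to exactly $4$ rather than merely $O(1)$; by comparison, the dominant-strategy claim and the lower-bound construction should be routine.
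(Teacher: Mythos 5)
Your decomposition and your charging scheme are essentially the paper's own proof: the paper first reduces the game to online vertex-weighted bipartite matching (each resource $r$ becomes $n$ slots of weights $\dval{1}{r}\ge\dval{2}{r}\ge\cdots$, Observation~\ref{obs:match}) and then charges each player $u$ either to herself, if enough of her optimal assignment is still available when she arrives, or to the greedy occupant(s) of that assignment; your dominant-strategy argument is also the paper's, stated there in one line. The one substantive difference is the constant, and it cuts in your favor: the bookkeeping you flag as ``the real work'' is actually immediate, because $OPT$ is a matching into slots, so a fixed greedy player $p$ absorbs at most one slot-occupant charge (from the unique $OPT$-player whose $OPT$-slot is the slot $p$ occupies), and that charge equals $g_p$ exactly; together with the self-charge of at most $g_p$ this gives $OPT\le 2\,SW_\greedy$, i.e.\ a ratio of $2$, not $4$, in the unit-demand setting. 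The paper's factor of $4$ arises because its version of the argument is written for many-to-one matchings, where a player's optimal assignment is a downward-closed \emph{bundle} of slots that may be only partially consumed; the resulting ``at least half the bundle is still available'' case split costs an extra factor of $2$ that is unnecessary when $\|a_i\|_1=1$.

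This has a consequence for your planned lower bound: no family of unit-demand instances can push the ratio to $4$, since your own upper-bound argument caps it at $2$ (and $2$ is tight: take a resource $r$ with $\dval{1}{r}=1$ and $\dval{k}{r}=0$ for $k\ge 2$, a private resource $r'$ with $\dval{1}{r'}=1-\eps$ available only to an early flexible player, and a late player whose only option is $r$; greedy earns $1$ while $OPT=2-\eps$). The paper in fact proves no lower bound at all, so its ``$\comp_\greedy(\full,g)=4$'' should be read as an upper bound; do not spend effort trying to realize the constant $4$ exactly.
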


Recall, from our example in the introduction, that both greedy and
undominated strategies can perform poorly with respect to empty
counters.  We defer the proof of the following results to
Appendix~\ref{app:future-ind-discrete}. Recall that \nnull refers to
the empty counter mechanism.

\begin{restatable}{theorem}{noinfo}\label{thm:noinfo}
  There exist games $g$ such that \small{$\comp_\ud(\nnull, g)$} is unbounded.
\end{restatable}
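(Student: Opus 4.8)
The plan is to exhibit, for every $n$, a resource-sharing instance $g_n$ on which undominated play against the empty counter yields social welfare $1$ while $OPT(g_n)=\Omega(n)$, so that $\comp_\ud(\nnull,\cdot)$ is unbounded over the family $\{g_n\}$. Take $g_n$ to have $n$ players and $n+1$ resources: a single ``public'' resource $r_0$ and ``private'' resources $r_1,\dots,r_n$, with player $i$'s interest set $A_i=\{r_0,r_i\}$. Let the value curve of $r_0$ satisfy $\dval{0}{r_0}=1$ and $\dval{k}{r_0}=0$ for every $k\ge 1$, and let each $r_i$ satisfy $\dval{0}{r_i}=1-\eps$ (its values for $k\ge 1$ are irrelevant and may be set to $0$). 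Both curves are non-increasing, so $g_n$ is a legal instance.

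The heart of the argument is that the profile in which \emph{every} player picks $r_0$ is a legitimate undominated play against $\nnull$. Fix the arrival order $1,\dots,n$, and for each player $i$ let $b_i$ be the belief that no earlier player has used $r_0$, i.e.\ that the true usage of $r_0$ is $0$; with empty counters every belief about past actions is consistent, so $b_i$ is admissible. Under $b_i$, player $i$ would get $\dval{0}{r_0}=1$ from $r_0$ and only $\dval{0}{r_i}=1-\eps<1$ from $r_i$, so ``choose $r_0$'' is the unique best response and in particular undominated under $b_i$. Hence the all-$r_0$ profile is realizable in the \ud model with $\nnull$. Along this profile the player who arrives first is the first to choose $r_0$ and collects $1$, whereas every later player $i$ has $\used{i}{r_0}\ge 1$ and so collects $0$ from $r_0$; thus the social welfare is exactly $1$. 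On the other hand, the allocation assigning each private resource $r_i$ to player $i$ is feasible (each $r_i$ lies only in $A_i$) and has total value $n(1-\eps)$, so $OPT(g_n)\ge n(1-\eps)$. Therefore $\comp_\ud(\nnull,g_n)\ge n(1-\eps)$; fixing $\eps=\tfrac12$ and letting $n\to\infty$ proves the statement.

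I do not expect a real obstacle here; the only point needing care is the appeal to the definition of undominated behaviour, namely that ``choose $r_0$'' is undominated under a \emph{consistent} belief — this holds because empty counters place no restriction on a player's belief and because $1>1-\eps$ makes $r_0$ strictly optimal under $b_i$. (One could instead take $\dval{0}{r_i}=1=\dval{0}{r_0}$: then every strategy is undominated under equal initial values, and the same computation gives $\comp_\ud(\nnull,g_n)\ge n$. Using $1-\eps$ is merely cosmetic, making ``choose $r_0$'' strictly dominant under $b_i$.)
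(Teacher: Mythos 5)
Your construction is internally sound: with empty counters every belief about past play is consistent, so each player may believe that $r_0$ is untouched, under which belief $r_0$ strictly beats $r_i$, and the all-$r_0$ profile then yields welfare $1$ against $OPT \ge n(1-\eps)$. The problem is with what this establishes. Your family $\{g_n\}$ gives $\comp_\ud(\nnull, g_n) = \Theta(n)$, i.e.\ a ratio bounded by a linear function of the number of players. In this paper ``unbounded'' is meant in the stronger sense that the ratio cannot be bounded by \emph{any} function of $n$: Theorem~\ref{thm:noinfo} (arbitrary curves, unbounded) is explicitly contrasted with Theorem~\ref{thm:noinfospecial} (slowly decaying curves, $\Omega(n^2/\log n)$), and your $\Theta(n)$ lower bound is weaker than even the latter. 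The paper's proof fixes the number of players and drives the ratio to infinity through the value curves alone: each player $i$ has a private resource $r_i$ worth $H$ to its first user and $0$ thereafter, plus a common resource $r$ worth $1$ to everyone, and the ratio is then $H$ for every choice of $H$.

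The missing idea is the belief that makes abandoning a \emph{high-value} private resource undominated. In your instance the private resources are worth less than the public one, so the ``everyone piles onto $r_0$'' outcome is essentially what greedy already does with empty counters (it is the introduction's illustrative example), and the loss is capped by $n$. To get an unbounded ratio you need $\dval{0}{r_i} = H \gg \dval{0}{r_0}$, but then ``choose $r_0$'' is no longer undominated under your belief $b_i$, since player $i$ would strictly prefer $r_i$. The paper instead has player $i$ entertain the (consistent, since $A_j$ is private to $j$ and players do not know their position) belief that some earlier ``twin'' has already consumed $r_i$, so that $r_i$ is worth $0$ to her and the common resource, worth $1$, becomes her best response. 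That belief structure is the phenomenon specific to undominated, as opposed to greedy, play that the theorem is meant to exhibit, and it is what your argument does not capture.
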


\begin{restatable}{theorem}{noinfospecial}\label{thm:noinfospecial}
  There exists $g$ such that $\comp_\ud(\nnull, g) \geq
  \Omega(\frac{n^2}{\log(n)})$, when $\dval{t}{r} = \frac{\dval{0}{r}}{t}$.
\end{restatable}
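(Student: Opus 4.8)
The plan is to exhibit a family of resource-sharing games $g=g_n$ (with the stated value curves) on which a worst-case choice of undominated strategies, under the empty counter $\nnull$, yields social welfare a factor $\Omega(n^2/\log n)$ below $OPT$. The construction amplifies the greedy example of Section~\ref{sec:game-model} by an extra factor of $n$, exploiting a phenomenon specific to \ud: since $\nnull$ reveals nothing, a player may believe her \emph{high}-value resource is already maximally congested, which under a $1/t$ curve deflates its perceived value by a factor of $n$. Concretely, let $g_n$ have $n$ players and $n+1$ resources: a ``public'' resource $B$ with $\dval{0}{B}=1$, and ``private'' resources $R_1,\dots,R_n$ with $\dval{0}{R_i}=n$; the interest sets are $A_i=\{B,R_i\}$, the value curves are $\dval{t}{r}=\dval{0}{r}/t$, and players arrive in the order $1,\dots,n$.

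First I would compute $OPT(g_n)$: assigning player $i$ to $R_i$ (each private resource used once) gives welfare $\sum_i \dval{0}{R_i}=n^2$, and since every player's value is at most $\max_r \dval{0}{r}=n$, we get $OPT(g_n)=n^2$. Next I would argue that the profile in which \emph{every} player selects $B$ is an undominated profile with respect to $\nnull$. Fix player $i$ and consider her belief that she is the last ($n$-th) arrival and that all $n-1$ predecessors selected $R_i$; by the footnote in Section~\ref{sec:behavior} this belief is consistent with $\nnull$, which always displays $0$. Under this belief, selecting $R_i$ would make her the $n$-th user of $R_i$, for value $\dval{0}{R_i}/n=1$, whereas selecting $B$ makes her its first user, for value $\dval{0}{B}=1$; since the alternative does not give strictly higher utility, $B$ is undominated for $i$ under this belief, and as players' beliefs need not be mutually consistent, all of them may select $B$ simultaneously. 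Finally, the welfare of this profile is $SW=\sum_{t=1}^n \dval{0}{B}/t = H_n \le 1+\ln n$, so $\comp_\ud(\nnull,g_n)\ge OPT(g_n)/SW \ge n^2/(1+\ln n)=\Omega(n^2/\log n)$.

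The step requiring the most care — and the one responsible for the quadratic rather than merely linear bound — is verifying that selecting the low-value resource $B$ is genuinely undominated and that the enabling belief is admissible. Because $\nnull$ constrains no beliefs about predecessors, player $i$ may imagine $R_i$ congested by the maximum possible $n-1$ predecessors, which is exactly enough to bring its perceived value down from $n$ to $\dval{0}{B}=1$; this is where the additional factor of $n$ over the greedy analysis enters. I would also note in passing that $n^2/\log n$ is the best this construction gives: if $\dval{0}{R_i}$ exceeded $n$, no consistent belief could make $B$ undominated, since there are only $n-1$ possible predecessors to congest $R_i$ with.
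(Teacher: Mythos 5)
Your proof is correct and follows essentially the same construction as the paper's: one cheap common resource plus a high-value private resource per player, with each player justifying the common choice via the belief that her private resource is already congested by $n-1$ predecessors, yielding welfare $H_n$ against $OPT=n^2$. Your restriction of $A_i$ to just $\{B,R_i\}$ is a minor (and slightly cleaner) variant of the paper's version, in which every player has access to all resources.
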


\subsection{Privacy-preserving public counters and Greedy Strategy}\label{section:greedy-counters}
\begin{theorem}
\label{thm:greedy}
With $(\alpha, \beta)$-accurate underestimator counter mechanism $\mech$,
$\comp_{\greedy}(\mech,g) = O(\mult \add)$ for all resource-sharing
game $g$.
\end{theorem}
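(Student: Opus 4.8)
The plan is to run the exchange argument behind the perfect-counter bound (Theorem~\ref{thm:greedy4}), tracking how the multiplicative error $\mult$ and the additive error $\add$ degrade it. Since $\mech$ is a $(\mult,\add)$-accurate counter (i.e.\ $\pfail=0$), we may condition on any realization of its internal coins and argue deterministically. Fix a resource-sharing game $g$; let $o(i)\in A_i$ denote the resource assigned to player $i$ by a fixed optimal allocation (players it leaves unmatched are irrelevant). For each resource $r$ let $G_r$ (resp.\ $O_r$) be the set of players choosing $r$ under the greedy strategy with mechanism $\mech$ (resp.\ under the optimal allocation), write $g_r=|G_r|$, $o_r=|O_r|$, and for $k\le g_r$ let $i_k$ be the $k$-th player, in arrival order, among those choosing $r$ under greedy. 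Exactly $k-1$ players choose $r$ before $i_k$, so $i_k$'s realized value is $\dval{k}{r}$; hence $SW_{\greedy}=\sum_r\sum_{k=1}^{g_r}\dval{k}{r}$, while $OPT\le\sum_r\sum_{j=1}^{o_r}\dval{j}{r}$ (the optimum's $o_r$ players on $r$ occupy distinct value-slots, hence total at most the $o_r$ largest). What greedy actually maximizes is the \emph{perceived} value: when $i_k$ arrives, the counter displayed for $r$ lies in $[\tfrac{k-1}{\mult}-\add,\,k-1]$, so $i_k$'s perceived value $p_{k,r}$ satisfies (extending each $\dvalo{r}$ to a non-increasing function on $\re^+$) $\dval{k}{r}\le p_{k,r}\le\dvalo{r}\!\left(\max\{1,\,\tfrac{k-1}{\mult}-\add+1\}\right)$ --- the left inequality is exactly where the underestimator property is used. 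Put $P=\sum_r\sum_{k=1}^{g_r}p_{k,r}$.

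\emph{Exchange inequality and the $OPT\le SW_{\greedy}+P$ bound.} When player $i$ arrives she could have chosen $o(i)\in A_i$, whose displayed counter $y$ obeys $y\le x_{i,o(i)}\le g_{o(i)}$ (underestimator, and a prefix count is at most the final count), so her perceived value from $o(i)$ is at least $\dval{g_{o(i)}+1}{o(i)}$; since greedy takes the resource of maximum perceived value in $A_i$, the perceived value it actually collects from $i$ is at least $\dval{g_{o(i)}+1}{o(i)}$ as well. Summing over $i$ and grouping by optimal resource gives $P\ge\sum_r o_r\,\dval{g_r+1}{r}$. Next, for each $r$ split $OPT$'s contribution at slot $g_r$: the top $\min\{o_r,g_r\}$ slots contribute at most $\sum_{k=1}^{g_r}\dval{k}{r}$ and each of the remaining $(o_r-g_r)^{+}$ slots contributes at most $\dval{g_r+1}{r}$ (monotonicity), so $\sum_{j=1}^{o_r}\dval{j}{r}\le\sum_{k=1}^{g_r}\dval{k}{r}+o_r\,\dval{g_r+1}{r}$; summing over $r$ and using the previous inequality yields $OPT\le SW_{\greedy}+P$.

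\emph{Controlling $P$ and conclusion.} It remains to show $P\le O(\mult\add)\,SW_{\greedy}$, which follows resource by resource from the elementary estimate: for any non-increasing $V\colon\re^+\to\re^+$ and integer $g\ge1$, $\sum_{k=1}^{g}V\!\left(\max\{1,\tfrac{k-1}{\mult}-\add+1\}\right)\le O(\mult\add)\sum_{k=1}^{g}V(k)$. I would prove this by splitting the sum according to whether $\tfrac{k-1}{\mult}-\add+1\le1$: there are $O(\mult\add)$ such indices $k$, each term at most $V(1)\le\sum_{k=1}^{g}V(k)$, giving $O(\mult\add)\sum_kV(k)$ in total; for the remaining $k$, the argument of $V$ is non-decreasing in $k$ and increases by one every $\mult$ steps, so each integer value in $\{1,\dots,g\}$ is ``hit'' by at most $O(\mult)$ of those terms (and no larger value is reached, since $\mult\ge1$ and $\add\ge0$), bounding them by $O(\mult)\sum_{j=1}^{g}V(j)$. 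Combining with the previous paragraph, $OPT\le SW_{\greedy}+P\le(1+O(\mult\add))\,SW_{\greedy}=O(\mult\add)\,SW_{\greedy}$, i.e.\ $\comp_{\greedy}(\mech,g)=O(\mult\add)$, as claimed. I expect the only real obstacle to be this last estimate: the per-slot ratio $p_{k,r}/\dval{k}{r}$ can be arbitrarily large --- a value curve that collapses just past slot $\mult\add$ makes $p_{\mult\add+1,\,r}$ as large as $\dval{1}{r}$ while $\dval{\mult\add+1}{r}$ is negligible --- so the comparison must be made at the level of the summed welfare, absorbing both the $\approx\mult\add$ ``wasted'' leading slots and the factor-$\mult$ over-count from re-indexing. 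Everything else is the perfect-counter exchange argument, with the underestimator property used only through $p_{k,r}\ge\dval{k}{r}$ and its role in the exchange step.
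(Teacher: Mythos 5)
Your proof is correct, and it reaches the bound by a route that overlaps with the paper's in one half and diverges in the other. The half that coincides is your final estimate $P\le O(\mult\add)\,SW_{\greedy}$: this is exactly the paper's key Lemma~\ref{lem:perceived} (perceived welfare versus actual welfare), and your proof of it — $O(\mult\add)$ leading slots charged to $\dval{1}{r}$, plus a factor $O(\mult)$ from the argument of $\dvalo{r}$ advancing only every $\mult$ steps — is the same ``grouping'' argument the paper uses, just phrased per-index rather than per-block. The half that differs is how $OPT$ is related to the perceived welfare. The paper builds an auxiliary vertex-weighted graph $G'$ whose node weights encode the counter readings, proves by induction (using \emph{monotonicity} of the displayed counters) that greedy play w.r.t.\ the counters coincides with greedy matching in $G'$, invokes Theorem~\ref{thm:greedy4} to get a factor $4$, and uses the underestimator property to show the max matching in $G'$ dominates $OPT(g)$; the net bound is $OPT\le 8\mult\add\cdot SW_{\greedy}$. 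You instead run the exchange argument directly in the noisy game, obtaining $P\ge\sum_r o_r\dval{g_r+1}{r}$ from the fact that the displayed count of $o(i)$ never exceeds its final true count (underestimation), and hence $OPT\le SW_{\greedy}+P\le(1+O(\mult\add))SW_{\greedy}$. Your version buys two small things: it does not need the displayed counters to be monotone (the paper's $G'$ induction does), and it avoids re-deriving the correspondence between counter-greedy play and graph-greedy matching, at the cost of not reusing Theorem~\ref{thm:greedy4} as a black box. Both arguments use the underestimator property in exactly the same two places (lower-bounding perceived values by true slot values, and relating the perceived value of the optimal alternative to $\dval{g_r+1}{r}$), and both implicitly assume $\mult\add=\Omega(1)$ so that the ``$O(\mult\add)$ indices'' count in the grouping step is meaningful.
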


Before we prove Theorem~\ref{thm:greedy}, we need a way to compare
players' utilities with the utility they \emph{think} they get from
choosing resources greedily with respect to approximate counters. Let
a player's \emph{perceived value} be $\dval{\sused{i}{r}}{r}$ where $r$
is the resource she chose (the value of a resource if the counter was
correct, which may or may not be the \emph{actual} value of the
resource).

\begin{lemma}
\label{lem:perceived}
Suppose players choose greedily according to a $(\mult,
\add)$-underestimator.  Then, the sum of their actual values is at
least a $\frac{1}{2\mult\add}$-fraction of the sum of their perceived
values.
\end{lemma}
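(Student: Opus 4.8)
The plan is to compare, term by term, each player's actual value $\dval{\used{i}{r}}{r}$ (where $r$ is the resource $i$ greedily selects and $\used{i}{r}$ is the true count) against her perceived value $\dval{\sused{i}{r}}{r}$ (the value she thinks she gets, using the displayed underestimator count $\sused{i}{r}$). Since the counter is an $(\mult,\add)$-underestimator, we have $\sused{i}{r} \leq \used{i}{r} \leq \mult(\sused{i}{r} + \add)$ at all times. The subtlety is that a player's actual value can be much smaller than her perceived value when many players pile onto the same resource, so a purely pointwise bound does not work; instead I want to charge the ``overcrowding loss'' to the perceived values of a small set of players on that resource.

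\textbf{Main argument.} Fix a resource $r$ and let $i_1 < i_2 < \cdots < i_k$ be the players who end up selecting $r$. For each such player $i_j$, her true count satisfies $\used{i_j}{r} = j-1$ (in the unit-demand discrete setting, each prior selector contributes exactly $1$), while her displayed count $\sused{i_j}{r}$ satisfies $j - 1 \le \mult(\sused{i_j}{r} + \add)$, i.e. $\sused{i_j}{r} \ge \frac{j-1}{\mult} - \add$. Because $\dvalo{r}$ is non-increasing, her perceived value is $\dval{\sused{i_j}{r}}{r} \le \dval{\lfloor (j-1)/\mult - \add\rfloor}{r}$ (for $j$ large enough that this argument is nonnegative; for small $j$ just bound crudely by $\dval{0}{r}$). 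The sum of actual values on $r$ is $\sum_{j=1}^{k}\dval{j-1}{r}$. The sum of perceived values on $r$ is at least $\sum_{j=1}^{k}\dval{\sused{i_j}{r}}{r}$, and I will show this is at most $2\mult\add \cdot \sum_{j=1}^{k}\dval{j-1}{r}$ by a counting/grouping argument: partition the indices $j \in \{1,\dots,k\}$ into blocks of size roughly $\mult\add$; within the block with true counts in $[(t-1)\mult\add, t\mult\add)$, every player's perceived value is at most $\dval{t'}{r}$ for some $t'$ in an earlier block (roughly $t' \approx (t-1)\mult\add/\mult - \add = (t-1)\add - \add$, which with the block-size choice lands a constant number of blocks back), and hence is comparable to an actual value term $\dval{(t-1)\mult\add}{r}$ that appears with multiplicity $\mult\add$ in the actual-value sum. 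Summing the block inequalities and summing over all resources gives the claimed $\frac{1}{2\mult\add}$ ratio.

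\textbf{The hard part.} The main obstacle is getting the block-size bookkeeping exactly right so the constant comes out as $2\mult\add$ rather than something larger, and handling the ``boundary'' players with small $j$ (where $\frac{j-1}{\mult}-\add$ is negative, so the displayed count could be $0$ and the perceived value is as large as $\dval{0}{r}$) without breaking the bound — these are at most $O(\mult\add)$ players per resource, and each contributes perceived value at most $\dval{0}{r} = \dval{\used{i_1}{r}}{r}$, which is itself an actual-value term, so they can be absorbed. I would also need to be slightly careful that the underestimator guarantee holds \emph{at the time $i_j$ acts}, which is exactly what Definition of the underestimator counter vector (with $\pfail = 0$, or on the high-probability event) provides, and that monotonicity is not actually needed here — only the two-sided sandwich at each player's arrival time. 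Once the per-resource inequality $\sum_j \dval{\sused{i_j}{r}}{r} \le 2\mult\add\sum_j \dval{j-1}{r}$ is established, summing over $r$ is immediate since both social welfare and total perceived welfare decompose additively over resources.
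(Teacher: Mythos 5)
Your proposal is correct and follows essentially the same route as the paper's proof: fix a resource, use the underestimator lower bound $\sused{i}{r}\ge \used{i}{r}/\mult-\add$ to show the displayed count can lag the true count by only a bounded amount, group the players on that resource into blocks of size $\mult\add$ by true count, bound each block's total perceived value by $\mult\add$ times a single value-curve evaluation a constant number of blocks back, and observe that these block representatives form a subsum of the actual values (the boundary/small-$j$ players being exactly what the extra factor of $2$ absorbs). The only slip is phrasing the charge as an actual-value term "appearing with multiplicity $\mult\add$" — each representative index appears once in the actual sum, and the $\mult\add$ factor comes from the block size, which is how the paper's displayed chain of inequalities accounts for it.
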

\begin{proof}
Suppose $k$ players chose a given resource $r$. For ease of notation, let these be players $1$ through $k$. We wish to bound the ratio \[\frac{\sum_{i=1}^k \dval{\sused{i}{r}}{r}}{\sum_{c=1}^k \dval{c}{r}}.\]
We start by ``grouping'' the counter values: it cannot take on values that are small for more than a certain number of steps. In particular, if $\used{i}{r} > T \alpha \beta$, for some $T \in \mathbb{N}$,
 \begin{align*}
\sused{i}{r} \geq & \frac{1}{\mult}\used{i}{r} - \add  \geq \frac{T \mult \add}{\mult} - \add  = \left(T - 1\right)\add
\end{align*}
Now, we bound the ratio from above using this fact.
\begin{align*}
  \frac{\sum_{i=1}^k \dval{\sused{i}{r}}{r}}{\sum_{c=1}^k \dval{c}{r}}  & \leq \frac{ 2\mult\add \sum_{T = 1}^{\lceil \frac{k}{\mult\add}
    \rceil} \dval{(T -1)\add}{r}}{\sum_{c=1}^k \dval{c}{r}}
  \leq \frac{ 2\mult\add \sum_{T = 1}^{\lceil \frac{k}{\mult\add}\rceil} \dval{(T -1)\add}{r}}{\sum_{T=1}^{\lceil \frac{k}{\mult\add}\rceil} \dval{(T-1)\add}{r}}\leq 2\mult\add
\end{align*}
where the first inequality came from the fact that the value curves
are non-increasing and the lower bound on the counter values from
above, and the second because all terms are nonnegative.
\end{proof}

\begin{proof}[Proof of Theorem~\ref{thm:greedy}]
  The optimal value of the resource-sharing game $g$, denoted by
  $OPT(g)$, is the maximum weight matching in the bipartite
  graph $G = (U \cup V, E)$ where $U$ is the set of the $n$ players
  and $V$ has $n$ vertices for each resource $r$, one of value
  $\dval{k}{r}$ for each $k\in [n]$. There is an edge between player
  $i$ and all vertices corresponding to resource $r$ if and only if $r
  \in A_{i}$. Note that the weights are on the vertices in $V$.

  We now define a complete bipartite graph $G'$ which has the same set
  of nodes but whose node weights differ for some nodes in
  $G$. Consider some resource $r$, and the collection of players who
  chose $r$ in $g$. If there were $t_k$ players $i$ who chose resource
  $r$ when $\sused{i}{r} = k$, make $t_k$ of the nodes corresponding to
  $r$ have weight $\dval{k}{r}$. Finally, if there were $F_k$ players
  who chose resource $r$, let the remaining $n- F_k$ nodes
  corresponding to $r$ have weight $\dval{F_k + 1}{r}$.

  We first claim that the perceived utility of players choosing
  greedily according to the counters is identical to the weight of the
  greedy matching in $G'$ (where nodes arrive in the same order). We
  prove, in fact, that the corresponding matching will be identical by
  induction. Since the counters are monotone, earlier copies of a
  resource appear more valuable. So, when the first player arrives in
  $G'$, the most valuable node she has access to is exactly the first
  node corresponding to the resource she took according to the
  counters. Now, assume that prior to player $i$, all players have
  chosen nodes corresponding to the resource they chose according to
  the counters. By our induction hypothesis and monotonicity of the
  counters and value curves, there is a node $n_i$ corresponding to
  $i$'s selection $r$ according to counters of weight
  $\dval{\sused{i}{r}}{r}$, and no heavier node corresponding to
  $r$. Likewise, for all other resources $r'$, all nodes corresponding
  to $r'$ have weight more than $\dval{\sused{i}{r'}}{r'}$. Thus, $i$
  will take $n_i$ for value $\dval{\sused{i}{r}}{r}$. Thus, the weight
  of the greedy matching in $G'$ equals the perceived utility of
  greedy play according to the counters.

  Let $\greedy_\counters$ denote the set of actions players make
  playing greedily with respect to the counters.  By
  Lemma~\ref{lem:perceived}, the social welfare of $\greedy_\counters$
  is a $\frac{1}{\mult\add}$-fraction of the perceived social
  welfare. By our previous argument, the perceived social welfare of
  greedy play according to the counters is the same as the weight of
  the greedy matching in $G'$. By Theorem~\ref{thm:greedy4}, the
  greedy matching in $G'$ is a $4$-approximation to the max-weight
  matching in $G'$. Finally, since the counters are underestimators,
  the weight of the max-weight matching in $G'$ is at least as large
  as $OPT(g)$. Thus, we know that the social welfare of greedy play
  with respect to counters is a $\frac{1}{2\mult\add}$ fraction of the
  optimal social welfare to $g$.
\end{proof}

\begin{theorem}\label{thm:polylog}
  There exists $(\epsilon, \delta)$-privacy-preserving mechanism \mech
  such that $$\comp_\greedy(\mech, g) = \min\left(O\left(\frac{(\log
      n)(\log (n m /\delta))}{\eps}\right),O\left(\frac{m\log n
      \log\log(1/\delta)}{\eps}\right)\right)$$ for all resource-sharing
  games $g$.
\end{theorem}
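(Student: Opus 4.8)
The plan is to reduce Theorem~\ref{thm:polylog} to Theorem~\ref{thm:greedy}. It suffices to exhibit, for each of the two quantities inside the minimum, an $(\epsilon,\delta)$-differentially private announcement mechanism (Definition~\ref{def:privacy}) that is an $(\alpha,\beta)$-accurate \emph{underestimating} counter vector with $\alpha\beta$ of that order; Theorem~\ref{thm:greedy} then gives $\comp_\greedy(\mech,g)=O(\alpha\beta)$ for every resource-sharing game $g$, and one simply runs whichever of the two mechanisms is better for the given $n,m,\epsilon,\delta$ (a choice that can be made before play begins).

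First I would build the mechanism: run one differentially private counter under continual observation per resource $r$, fed the $\{0,1\}$ stream indicating whether each arriving player selects $r$, with $m_i$ defined as the $m$-tuple of currently displayed values. Post-process each counter to be monotone (running maxima) and, by Observation~\ref{obs:underestimator}, into an underestimator; both steps preserve the underestimator guarantee and change $\alpha\beta$ only by constants. Then instantiate the counters two ways: (i) the binary-tree counter of \citet{dwork2010,ChanSS11}, which is $(\epsilon_0,0)$-private under adaptive continual observation with $\alpha=1$ and, union-bounding over the $m$ counters so that all are accurate except with probability $\delta$, additive error $\beta=O\!\left(\tfrac{\log n\,\log(nm/\delta)}{\epsilon_0}\right)$; and (ii) the improved counter of Section~\ref{section:counters}, which buys a smaller additive term (with a $\log\log(1/\delta)$ dependence, as stated there) at the price of a constant $\alpha=O(1)$.

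The substantive step is the privacy accounting: showing the product mechanism is $(\epsilon,\delta)$-private under \emph{adaptive} continual observation given that each per-resource counter is. For bound (i) the key observation is that when player $i$ changes her discrete unit-demand action from resource $p$ to resource $q$, the streams fed to the counters differ \emph{at the source} only for the counters $p$ and $q$, and only at time step $i$; every later difference — players $j>i$ reacting to perturbed announcements, thereby altering other counters' inputs at later steps — is post-processing of the counter outputs, i.e.\ precisely the behaviour of an adaptive adversary. I would formalize this with a two-step hybrid (override counter $q$'s step-$i$ input, then counter $p$'s), each step invoking the adaptive continual-observation privacy of the affected counter with the remaining counters and all downstream players simulated inside the adaptive adversary; this gives $(2\epsilon_0,O(\delta_0))$-privacy, so $\epsilon_0=\Theta(\epsilon)$ and $\alpha\beta=O\!\left(\tfrac{\log n\,\log(nm/\delta)}{\epsilon}\right)$, with no $m$ in the privacy budget. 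For bound (ii) I would instead charge privacy across all $m$ counters (basic composition, $\epsilon_0=\Theta(\epsilon/m)$) — the general accounting that does not exploit the fact that a discrete deviation moves between only two resources — which is exactly where the factor $m$ enters, yielding $\alpha\beta=O\!\left(\tfrac{m\log n\,\log\log(1/\delta)}{\epsilon}\right)$. Finally, since the proof of Theorem~\ref{thm:greedy} (via Lemma~\ref{lem:perceived}) uses only the counter's accuracy event, the competitive-ratio bound holds with probability at least $1-\delta$.

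I expect the cascade argument to be the main obstacle: rigorously certifying that the cascaded primed announcements $m_{i+1}',\dots,m_n'$ of Definition~\ref{def:privacy} add no privacy loss beyond the counters the deviating player actually moves between — i.e.\ that what is needed is the \emph{adaptive} (not merely non-adaptive) continual-observation guarantee — and checking that whichever guarantee the improved counter of Section~\ref{section:counters} satisfies is compatible with the composition used for bound (ii). A secondary bookkeeping issue is tracking how the failure probability feeds into $\beta$ (which scales like $\log$ of its reciprocal), into the union bound over the $m$ counters, and into the parameter named $\delta$ in the statement (which for mechanism (i) plays only the role of an accuracy-failure probability, since that mechanism is in fact $(\epsilon,0)$-private).
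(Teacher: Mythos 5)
Your proposal is correct and follows essentially the same route as the paper: reduce to Theorem~\ref{thm:greedy} via an underestimating counter vector, instantiated either with $m$ TreeSum copies (where the privacy cost is $m$-independent because a unit-demand deviation perturbs only two counters' inputs at one step --- the paper phrases this as the $\ell_1$-sensitivity of the vector partial sums being $O(1)$, which is the same fact your two-step hybrid encodes) or with FTSum (where the $\epsilon/\Theta(m)$ budget split is exactly where the factor $m$ enters), as packaged in Corollary~\ref{cor:approx}. The only cosmetic difference is that the paper invokes Proposition~\ref{prop:zero-failure} to fold the accuracy-failure probability $\gamma$ into the privacy parameter $\delta$, obtaining a $(\cdot,\cdot,0)$-accurate counter so the competitive ratio holds unconditionally, whereas you keep $(\epsilon,0)$-privacy and let the bound hold with probability $1-\delta$; the two packagings are interconvertible.
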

\begin{proof}
  In Section~\ref{section:counters}, we prove
  Corollary~\ref{cor:approx} that says that we can achieve an
  $(\eps,\delta)$-differentially private counter vector achieving the
  better of $(1,O(\frac{(\log n)(\log (n m /\delta))}
  {\eps}))$-accuracy and $(\alpha,\tilde O_{\alpha}(\frac{m\log n
    \log\log(1/\delta)}{\eps}))$-accuracy for any constant
  $\alpha>1$. This along with Theorem~\ref{thm:greedy} proves the
  result.
\end{proof}

In Appendix~\ref{sec:moreacc}, Observation~\ref{obs:greedy-apx} proves
that players acting greedily according to any estimate that is
\emph{deterministically} more accurate than the values provided by the
private counters also achieve similar or better social welfare
guarantees. Moreover, we show that if the estimates used by the
players are more accurate only in expectation, as opposed to
deterministically, then we cannot make a similar claim
(Observation~\ref{obs:variance-greedy}).
 
\subsection{Privacy-preserving public counters and Undominated strategies}
\label{section:undominated-counters}
We begin with an illustration of how undominated strategies can
perform poorly for arbitrary value curves, as motivation for the
restricted class of value curves we consider in
Theorem~\ref{thm:undominated}.  In the case of greedy players, we were
able to avoid the problem of players undervaluing resources rather
easily, by forcing the counters to only underestimate
$\used{i}{r}$. This won't work for undominated strategies: players who
know the counts are shaded downward can compensate for that fact.

\begin{theorem}\label{thm:lb-undom}
  For an $(\epsilon, \delta)$--differentially private announcement
  mechanism \mech, there exist games $g$ for which $\comp_\ud(g,
  \mech) = \Omega\left(\frac{1}{\delta}\right)$.
\end{theorem}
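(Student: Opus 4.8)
The plan is to build a single resource-sharing instance $g$ on which, for any $(\epsilon,\delta)$-private $\mech$, some undominated profile stampedes every player onto one ``trap'' resource while the optimum spreads them out. Take $N$ players (with $N=\Theta(1/\delta)$, fixed below) and resources $s,r_1,\dots,r_N$, with $A_i=\{s,r_i\}$; set $\dval{1}{s}=1$ and $\dval{k}{s}=0$ for $k\ge 2$, and $\dval{1}{r_i}=1$ for every $i$, so $s$ and each $r_i$ share the same initial value and $OPT(g)=N$ (e.g.\ match player $i$ to $r_i$). If $\mech$ is not $(\alpha,\beta)$-accurate with $\gamma=0$ for any finite $\alpha,\beta$ — the only regime in which undominated play is well behaved — then by the discussion following Definition~\ref{def:privacy}, for equal-initial-value games every strategy is undominated, so the adversary can send everyone onto $s$, the realized welfare is $1$, and $\comp_\ud(g,\mech)$ is in fact unbounded; so from now on assume $\mech$ is $(\alpha,\beta)$-accurate with $\gamma=0$.

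Consider the profile $d$ in which player $i$, seeing the announced value $c_i$ of $s$, uses $s$ if a true count of $0$ is consistent with $c_i$ (equivalently $-\beta\le c_i\le\beta$, by Definition~\ref{def:accuracy}) and uses $r_i$ otherwise. This is a legitimate undominated profile: when $0$ is consistent, $\dval{1}{s}=\dval{1}{r_i}$, so ``use $s$'' ties with ``use $r_i$'' and is undominated; when $0$ is not consistent, every consistent belief puts the count of $s$ at $\ge 1$, so ``use $s$'' yields $\dval{2}{s}=0<1$ and is dominated by ``use $r_i$''. Under $d$, the players pile entirely onto $s$ exactly on the event $E$ that the $s$-counter stays within $[-\beta,\beta]$ at all $N$ steps; on $E$ the realized values are $1,0,\dots,0$, the welfare is $1$, and the competitive ratio is $N$. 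It therefore suffices to lower bound $\prob_d[E]$ away from $0$ — for the worst-case ratio it is enough to exhibit one coin throw of $\mech$ on which the bound holds — and to take $N$ as large as this permits.

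Privacy supplies the needed bound on $\prob_d[E]$. Compare $d$ with the profile $d'$ in which every player uses $r_i$ regardless of the announcement: under $d'$ the true count of $s$ is identically $0$, so by $(\alpha,\beta)$-accuracy with $\gamma=0$ the announced value stays in $[-\beta,\beta]$ at every step, i.e.\ $\prob_{d'}[E]=1$. Since $d$ and $d'$ differ only in the strategies of the $N$ players, applying Definition~\ref{def:privacy} once per player and composing gives $\prob_d[E]\ge e^{-N\epsilon}\bigl(1-\tfrac{e^{N\epsilon}-1}{e^\epsilon-1}\,\delta\bigr)$, which is strictly positive as soon as $N<\tfrac1\epsilon\log\!\bigl(1+\tfrac{e^\epsilon-1}{\delta}\bigr)=\Theta(1/\delta)$; fixing $N$ just below this threshold yields $\comp_\ud(g,\mech)\ge N=\Omega(1/\delta)$. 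The delicate step — and the one I expect to be the main obstacle — is making this composition rigorous under adaptivity: because $d$ is history-dependent and $\mech$ is a continual-observation mechanism, one must identify ``the $s$-counter stays in $[-\beta,\beta]$ through step $k$'' with ``players $1,\dots,k$ all pile on $s$'' so that the single-player comparison with $d'$ can be iterated step by step, and one must check that the additive slack accumulates only geometrically, as $\tfrac{e^{k\epsilon}-1}{e^\epsilon-1}\delta$ rather than $k\delta$. It is precisely this geometric (rather than linear) accumulation that lets $N$ reach $\Theta(1/\delta)$, and hence gives the claimed $\Omega(1/\delta)$ lower bound rather than merely a logarithmic one.
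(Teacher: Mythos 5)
Your construction has a genuine quantitative gap, and it sits exactly where you flagged the ``delicate step.'' You need the event $E$ (all $N$ announcements for $s$ lie in $[-\beta,\beta]$) to retain positive probability after transferring from $d'$ to $d$, and that transfer requires group privacy over all $N$ players. Whatever form of adaptive composition you use, the resulting guarantee is at best $\mathbb{P}_d[E]\ge e^{-N\epsilon}\bigl(1-\tfrac{e^{N\epsilon}-1}{e^\epsilon-1}\delta\bigr)$, and the positivity condition $e^{N\epsilon}<1+\tfrac{e^\epsilon-1}{\delta}$ forces $N<\tfrac{1}{\epsilon}\log\bigl(1+\tfrac{e^\epsilon-1}{\delta}\bigr)$, which for any fixed $\epsilon>0$ is $\Theta(\log(1/\delta))$, \emph{not} $\Theta(1/\delta)$ as you assert. (Your closing remark also has the comparison backwards: the geometric accumulation $\tfrac{e^{k\epsilon}-1}{e^\epsilon-1}\delta$ is \emph{larger} than $k\delta$, and even a purely linear accumulation would only just reach $N\approx 1/\delta$.) Since your lower bound on the competitive ratio is exactly $N$, the construction delivers only $\Omega\bigl(\tfrac{1}{\epsilon}\log(1/\delta)\bigr)$ in the standard regime of constant $\epsilon$ --- exponentially weaker than the claimed $\Omega(1/\delta)$. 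A secondary weakness is that even then you obtain a bad outcome only on an event of the tiny probability $e^{-N\epsilon}(\cdots)$, whereas the bound one wants (and the paper proves) is on the expected social welfare.

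The paper's proof sidesteps both problems by extracting the $1/\delta$ from the \emph{value scale} rather than from the \emph{number of players}, so that a single application of Definition~\ref{def:privacy} (one player's deviation) suffices and no composition loss is incurred. Concretely: two players; resource $r$ is worth $1$ to its first user and $0$ afterwards, resource $r'$ is worth $\rho$ to everyone; player $1$ can only take $r'$, player $2$ can take either. Player $2$'s undominated strategy is to take $r'$ whenever her signal is consistent with player $1$ having taken $r$ --- a consistent belief, since action sets are private, under which $r$ is worthless to her. Applying $(\epsilon,\delta)$-privacy once, the signals that are impossible under ``player $1$ took $r$'' have total probability at most $\delta$ in the real play, so player $2$ takes $r'$ with probability at least $1-\delta$, giving $\mathbb{E}[SW]\le \delta+(2-\delta)\rho\to\delta$ while $OPT\to 1$ as $\rho\to 0$. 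To salvage your approach you would need the separation between the two worlds to hinge on a single player's deviation --- at which point you have essentially rediscovered the paper's argument.
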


\begin{proof}
  Suppose there are two players $1$ and $2$, and resources $r,
  r'$. Let $r$ have $\dval{0}{r} = 1$, $\dval{1}{r} = 0$, and
  $\dval{k}{r'} = \rho$, for all $k \geq 0$. Furthermore, let player
  $1$ have access only to resource $r'$ but player $2$ has access to
  both $r$ and $r'$. Player $1$ will choose $r'$. Let player $2$'s
  strategy be $d_2$, such that if she determines there was nonzero
  chance that player $1$ chose $r$ according to her signal $\ssig{2}$,
  she will choose resource $r'$. This is undominated: if $1$
  \emph{did} choose $r$, $r'$ will be more valuable for $2$.  Thus, if
  $2$ sees any signal that can occur when $r$ is chosen by $1$, she
  will choose $r'$. The collection of signals $2$ can see if $1$
  chooses $r$ has probability $1$ in total. So, because $\ssig{2}$ is
  $(\epsilon, \delta)$-differentially private in player $1$'s action,
  the set of signals \emph{reserved} for the case when $1$ chooses
  $r'$ (that cannot occur when $r$ is chosen by $1$) may occur with
  probability at most $\delta$ (they can occur with probability $0$ if
  $1$ chose $r$, implying they can occur with probability at most
  $\delta$ when $1$ chooses $r'$). Thus, with this probability
  $1-\delta$, player $2$ will choose $r'$, implying $\mathbb{E}[SW]
  \leq (1-\delta) 2 \rho + \delta (1 + \rho) = \delta + (2 -
  \delta)\rho$, which for $\rho$ sufficiently small approaches
  $\delta$, while $1 + \rho$ is the optimal social welfare.
\end{proof}

Given the above example, we cannot hope to have a theorem as general
as Theorem~\ref{thm:greedy} when analyzing undominated strategies with
privacy-preserving counters. Instead, we show that, for a class of
well-behaved value curves, we can bound the competitive ratio of
undominated strategies (Theorem~\ref{thm:undominated}).

Again, along the lines of the greedy case, we show that any player who
chooses any undominated resource $r'$ over resource $r$ gets a
reasonable fraction of the utility she would get from choosing
$r$. Then, by the analysis of greedy players, we have an analogous
argument implying the bound of Theorem~\ref{thm:undominated}. 

\begin{theorem}\label{thm:undominated}
  If each value curve $\dvalo{r}$ has the property that $ \psi(\mult,
  \add) \dval{x}{r} \geq \dval{( \max\{0,\frac{x}{\mult^2} -
    \frac{2\add}{\mult}\})}{r}$ and also $\dval{(\mult^2 x +
    2\mult\add )}{r} \geq \phi(\mult, \add) \dval{x}{r}$, then an
  action profile $a$ of undominated strategies according to $(\mult,
  \add)$-counter vector \mech has $\comp_\ud(g, \mech) =
  O\left(\psi(\mult, \add)\phi(\mult, \add)\right)$.
\end{theorem}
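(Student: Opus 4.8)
The plan is to follow the three–part template of the proof of Theorem~\ref{thm:greedy}: relate the realized welfare $SW(a)$ to a notion of \emph{perceived} welfare, identify the perceived welfare with a constant‑factor relaxation of a matching in an auxiliary graph, and bound that matching below by $OPT(g)$. The difference is that an undominated player may use the slack in the counter to justify a choice that does not look best to her, so the "greedy" step of Theorem~\ref{thm:greedy} gets replaced by an approximately‑greedy step. To set this up, fix the counter values player $i$ observes and, for each resource $r$, let $\underline{x}_{i,r}$ and $\overline{x}_{i,r}$ be the smallest and largest true counts consistent with the displayed value; then $\underline{x}_{i,r}\le\used{i}{r}\le\overline{x}_{i,r}$ and, by Definition~\ref{def:accuracy}, $\overline{x}_{i,r}\le\mult^2\underline{x}_{i,r}+2\mult\add$. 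Since player $i$ may hold any belief consistent with the counter and then best‑respond, a resource $r'$ is an undominated choice for her exactly when $\dval{\underline{x}_{i,r'}}{r'}\ge\dval{\overline{x}_{i,r}}{r}$ for every $r\in A_i$ (the belief making $r'$ look best and every competitor look worst). Writing $r'_i$ for $i$'s actual choice, call $\dval{\underline{x}_{i,r'_i}}{r'_i}$ her \emph{perceived value}, and let the perceived welfare be the sum over players.

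\emph{Step 1 (perceived vs.\ realized welfare; uses the first hypothesis).} This is the analogue of Lemma~\ref{lem:perceived}. If $k$ players pick a resource $r$, number them by arrival; the $j$‑th such player has $\used{\cdot}{r}=j-1$, so by accuracy of the counter $\underline{x}\ge\max\{0,\frac{j-1}{\mult^2}-\frac{2\add}{\mult}\}$, and the first hypothesis gives $\dval{\underline{x}}{r}\le\psi(\mult,\add)\,\dval{j-1}{r}$, where $\dval{j-1}{r}$ is precisely that player's realized utility. Summing over $j$ and over resources, the perceived welfare is at most $\psi(\mult,\add)\cdot SW(a)$.

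\emph{Step 2 (perceived welfare vs.\ $OPT$; uses the second hypothesis).} Form a complete bipartite graph $\widehat{G}$ as in the proof of Theorem~\ref{thm:greedy}: for a resource $r$ chosen by $k_r$ players, give its "used" copies the perceived values $\dval{\underline{x}_{i_1,r}}{r}\ge\cdots\ge\dval{\underline{x}_{i_{k_r},r}}{r}$ in arrival order (non‑increasing because the counters are monotone), and give the remaining $n-k_r$ copies weight $\dval{k_r}{r}$. By construction the perceived welfare equals the weight of the matching $M$ sending each player to her perceived‑value copy. Now $M$ is \emph{approximately} greedy in $\widehat{G}$: by the undominated inequality and the second hypothesis, $\dval{\underline{x}_{i,r'_i}}{r'_i}\ge\dval{\overline{x}_{i,r}}{r}\ge\phi(\mult,\add)\,\dval{\underline{x}_{i,r}}{r}$ for every $r\in A_i$, and (for the padding copies, where $\underline{x}_{i,r}\le\used{i}{r}=k_r$) $\dval{\overline{x}_{i,r}}{r}\ge\phi(\mult,\add)\,\dval{k_r}{r}$; checking how copies get consumed in arrival order shows that, when $i$ arrives, the copy $M$ assigns her is within a factor controlled by $\phi(\mult,\add)$ of the heaviest copy still available to her. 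The charging argument behind Theorem~\ref{thm:greedy4} then extends from exact to approximately‑greedy matchings, so $w(M)$ is within a factor controlled by $\phi(\mult,\add)$ of the maximum‑weight matching in $\widehat{G}$. Finally, since $\underline{x}_{i_j,r}\le j-1$, the used copies of $r$ dominate $\dval{1}{r},\dots,\dval{k_r}{r}$ term by term and the padding copies dominate $\dval{k_r+1}{r},\dval{k_r+2}{r},\dots$, so for every $\ell$ the $\ell$ heaviest copies of $r$ in $\widehat{G}$ outweigh $\dval{1}{r}+\cdots+\dval{\ell}{r}$, whence the maximum‑weight matching in $\widehat{G}$ is at least $OPT(g)$.

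Chaining the two steps, $SW(a)$ is at least a $\tfrac1{\psi(\mult,\add)}$ fraction of the perceived welfare, which is within a $\phi(\mult,\add)$‑controlled factor of $OPT(g)$, giving $\comp_\ud(g,\mech)=O(\psi(\mult,\add)\,\phi(\mult,\add))$. The main obstacle is Step 2: reproving the "greedy is a $4$‑approximation" bound of Theorem~\ref{thm:greedy4} for approximately‑greedy matchings, and in particular choosing the weights of the padding copies of $\widehat{G}$ so that $M$ is approximately greedy in $\widehat{G}$ while the maximum‑weight matching in $\widehat{G}$ still dominates $OPT(g)$. The remaining ingredients — the undominated characterization, the per‑player estimates, and the care needed for the clamping at $0$ and the integrality of counts — are routine.
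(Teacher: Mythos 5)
Your proposal is correct, but it takes a genuinely different route from the paper. The paper's own proof is a direct charging argument against the optimal allocation: it partitions players according to whether the copy of the resource they receive in $OPT$ is still unclaimed when they arrive; for players whose copy is gone it charges $OPT$'s value to whoever consumed that copy, and for the remaining players it combines the undominated inequality $\dval{\hat{\used{i}{r'}}}{r'}\ge\dval{\hat{\used{i}{r_i}}}{r_i}$ with \emph{both} value-curve hypotheses to show each such player's realized value is within a $\psi(\mult,\add)\phi(\mult,\add)$ factor of her $OPT$ value; no perceived-welfare bookkeeping or auxiliary graph is needed. You instead recycle the machinery of Theorem~\ref{thm:greedy}: the first hypothesis alone controls the perceived-to-realized gap (a cleaner substitute for Lemma~\ref{lem:perceived}, losing $\psi$ rather than $2\mult\add$), and the second hypothesis alone makes the induced matching in $\widehat{G}$ approximately greedy. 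Your key inequalities do check out --- $\dval{\underline{x}_{i,r'_i}}{r'_i}\ge\dval{\overline{x}_{i,r}}{r}\ge\phi(\mult,\add)\dval{\underline{x}_{i,r}}{r}$ handles the used copies (since later arrivals have larger $\underline{x}$, the heaviest available copy of $r$ weighs at most $\dval{\underline{x}_{i,r}}{r}$), the case $\used{i}{r}=k_r$ handles the padding copies, and the charging behind Theorem~\ref{thm:greedy4} extends verbatim to $\phi$-approximately-greedy matchings to give $w(M)\ge\frac{\phi}{1+\phi}\,w(\mu_{\max})$ --- so the step you flag as the main obstacle is genuinely routine. What your route costs relative to the paper's: it needs the counters (hence the consistent lower bounds $\underline{x}_{i,r}$) to be monotone in time so that copy weights are non-increasing in arrival order, an assumption the paper's direct argument never invokes; what it buys is modularity, since the greedy and undominated analyses then share one matching lemma, and a per-step accounting of exactly where each of $\psi$ and $\phi$ is spent.
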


In particular, Theorem~\ref{thm:undominated} shows that, for games
where $\dval{i}{r} = \frac{\dval{0}{r}}{g_r(\used{i}{r})}$, where
$g_r$ is a polynomial, the competitive ratio of undominated strategies
degrades gracefully as a function of the maximum degree of those
polynomials. A simple calculation implies the following corollary, whose
proof we relegate to  Appendix~\ref{section:undominated-independent}.

\begin{corollary}\label{cor:poly}
  Suppose for a resource-sharing game $g$, each resource $r$ has a
  value curve of the form $\dval{x}{r} = \frac{\dval{0}{r}}{g_r(x)}$,
  where $g_r$ is a monotonically increasing degree-$d$ polynomial and
  $\dval{0}{r}$ is some constant. Then, $\comp_\ud(g, \mech) \leq
  O(2\mult^3\beta)^d$ with $\mech$ providing $(\mult, \add)-$counters.
\end{corollary}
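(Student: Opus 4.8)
The plan is to derive Corollary~\ref{cor:poly} as a direct instantiation of Theorem~\ref{thm:undominated}: all that is needed is to supply valid distortion parameters $\psi(\mult,\add)$ and $\phi(\mult,\add)$ for value curves of the form $\dval{x}{r}=\dval{0}{r}/g_r(x)$ and to check that the competitive ratio the theorem yields simplifies to $O(2\mult^3\beta)^{d}$. Two normalizations make the computation transparent. First, plugging $x=0$ into $\dval{x}{r}=\dval{0}{r}/g_r(x)$ forces $g_r(0)=1$, so we take this for granted. Second, we read the hypothesis as saying that $g_r$ is a degree-$d$ polynomial with nonnegative coefficients of size $O(1)$ (equivalently $g_r(1)=O(1)$); some such boundedness is anyway necessary for the claimed bound, since it is independent of $g_r$'s coefficients whereas $\dval{0}{r}/\dval{x}{r}=g_r(x)$ is not.

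The technical core is the elementary estimate that for all $\lambda\ge 1$, $\theta\ge 0$ and every count $z\ge 0$,
\[
g_r(\lambda z+\theta)\ =\ O\big((\lambda+\theta+1)^{d}\big)\cdot g_r(z).
\]
It follows from $\lambda z+\theta\le(\lambda+\theta+1)\max\{z,1\}$, the scaling inequality $g_r(ct)\le c^{d}g_r(t)$ (valid for $c\ge 1$, $t\ge 0$ and nonnegative-coefficient polynomials), and $g_r(\max\{z,1\})\le g_r(1)g_r(z)=O(g_r(z))$ (using $g_r(0)=1$ to cover $z=0$). Now, the two arguments appearing in the hypotheses of Theorem~\ref{thm:undominated}, namely $\max\{0,x/\mult^2-2\add/\mult\}$ and $\mult^2x+2\mult\add$, are exactly the lowest and highest counts a player can rationally believe for a resource with true count $x$ after reading a $(\mult,\add)$-counter. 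Substituting $x=\mult^2z+2\mult\add$ into the first hypothesis rewrites it as a bound on $g_r(\mult^2z+2\mult\add)/g_r(z)$ --- the very quantity controlled by the second hypothesis --- and the clipping $\max\{0,\cdot\}$ only shrinks the left-hand side, so it is harmless. By the displayed estimate this ratio is $O\big((\mult^2+2\mult\add+1)^{d}\big)$, so we may take $\psi(\mult,\add)=O\big((\mult^2+2\mult\add+1)^{d}\big)$ and $\phi(\mult,\add)=\Omega\big((\mult^2+2\mult\add+1)^{-d}\big)$; both choices are valid for every count $x$, including the small ones where the first argument clips to $0$ (there $x\le 2\mult\add$, so $g_r(x)/g_r(0)=g_r(x)=O((2\mult\add)^{d})$).

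Feeding these into Theorem~\ref{thm:undominated} and using $\mult\ge 1$, $\beta\ge 1$ to bound $\mult^2+2\mult\add+1=O(\mult^3\beta)$ gives $\comp_\ud(g,\mech)=O(2\mult^3\beta)^{d}$, as claimed. I expect the only real work to lie on the curve side, not in anything game-theoretic: one must (i) establish the polynomial estimate above and then, when combining $\psi$ and $\phi$ through Theorem~\ref{thm:undominated}, exploit the observation noted above --- that both hypotheses constrain one and the same ratio --- so that the multiplicative part $\mult^2$ and the additive part $2\mult\add$ of the counter error stay inside a single $d$-th power rather than compounding into a $2d$-th power; and (ii) handle the small-count regime, where the clipped underestimate is $0$ and one is comparing $\dval{x}{r}$ against the maximal value $\dval{0}{r}$ --- keeping that ratio at $O((2\mult\add)^{d})$, rather than at something growing with the magnitudes of $g_r$'s coefficients, is exactly what $g_r(0)=1$, $g_r(1)=O(1)$ and coefficient nonnegativity buy us.
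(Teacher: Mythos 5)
Your overall route --- read Corollary~\ref{cor:poly} as a direct instantiation of Theorem~\ref{thm:undominated} by computing $\psi$ and $\phi$ for polynomial value curves --- is exactly the paper's intended ``simple calculation'' (the appendix the corollary points to in fact only contains the proof of Theorem~\ref{thm:undominated}, so the instantiation is all there is). Your normalizations are also reasonable: $g_r(0)=1$ is forced, and without some bound on the coefficients the claimed coefficient-free bound cannot hold, so reading in nonnegative $O(1)$ coefficients is a fair charitable repair of the hypothesis. Your scaling estimate giving $\psi = O\left((\mult^2+2\mult\add+1)^d\right)$ and $1/\phi = O\left((\mult^2+2\mult\add+1)^d\right)$ is correct.

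The gap is in the final step, and your proposed fix for the $2d$-versus-$d$ issue does not work. It is true that both hypotheses of Theorem~\ref{thm:undominated} bound the same functional $\sup_z g_r(\mult^2 z+2\mult\add)/g_r(z)$, but in the proof of that theorem the two bounds are invoked at two \emph{different} resources and two \emph{different} true counts: $\psi$ converts the perceived value of the chosen resource $r'$ at count $\used{i}{r'}$ into its actual value, while $\phi$ converts the perceived value of the forgone resource $r_i$ at count $\used{i}{r_i}$ into its actual value. Since $\used{i}{r'}$ and $\used{i}{r_i}$ are unrelated, the two distortions are realized independently and the theorem's guarantee is their product, i.e.\ $O\left((\mult^2+2\mult\add)^{2d}\right)=O(\mult^4\add^2)^d$, not $O(\mult^3\add)^d$; nothing telescopes. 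This is not merely lossy accounting: already for $d=1$, take $g_{r'}(t)=g_{r_i}(t)=1+t$, $\dval{0}{r'}=\dval{0}{r_i}/(1+2\mult\add)$, true counts $\used{i}{r'}=2\mult\add$ and $\used{i}{r_i}=0$. The beliefs $\hat{\used{i}{r'}}=0$ and $\hat{\used{i}{r_i}}=2\mult\add$ are consistent with a $(\mult,\add)$-counter and make $r'$ undominated, yet the player's realized value is a $(1+2\mult\add)^{-2}$ fraction of what $r_i$ would have given her. So the per-player loss genuinely scales like the $2d$-th power of $\mult\add$, and no choice of $\psi,\phi$ can extract the stated $(\,\cdot\,)^d$ bound from Theorem~\ref{thm:undominated}. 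The honest conclusion of your argument is $CR_\ud(g,\mech)=O(\mult^4\add^2)^d$; as stated, the corollary's exponent appears to need to be $2d$ (or its base squared), and with that correction your proof is complete.
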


\section{Private Counter Vectors with Lower Errors for Small
  Values}\label{section:counters}

In this section, we describe a counter for the model of differential
privacy under continual observation that has improved guarantees when
the value of the counter is small.
Recall the basic counter problem: given a stream $\va =
(a_1,a_2,...,a_n)$ of numbers $a_i \in [0,1]$, we wish to release at
every time step $t$ the partial sum $x_t = \sum_{i=1}^t a_i$.
We require a generalization, where one maintains a vector of $m$
counters. Each player's update contribution is now a vector $a_i \in 
\Delta_m = \{a\in [0,1]^m\mid \|a\|_1\leq 1\}$. That is, a
player can add non-negative values to all counters, but the total value
of her updates is at most 1. The partial sums $x_t$ then lie in
$(\R^+)^m$ and have $\ell_1$ norm at most $t$.

Given an algorithm $\mech$, we define the output stream
$(\Sused{1},...,\Sused{n})=\mech(\va)$ where $\Sused{t} = \mech(t,
a_1,...,a_{t-1})$ lies in $\R^m$. 
We seek counters that are private (Definition~\ref{def:privacy})
and satisfy a mixed multiplicative and additive accuracy guarantee
(Definition~\ref{def:accuracy}). Proofs of all the results in this section can be found in
Appendix~\ref{section:app-counters}.

The
original works on differentially private counters
\cite{dwork2010,ChanSS11} concentrated on minimizing the additive
error of the estimated sums, that is, they sought to minimize
$\|x_t-y_t\|_\infty$. Both papers gave a binary tree-based mechanism,
which we dub ``TreeSum'', with additive error approximately $(\log^2
n)/\eps$. Some of our algorithms use TreeSum, and others use a new
mechanism (FTSum, described below) which gets a better additive error
guarantee at the price of introducing a small multiplicative
error. Formally, they prove:

% We
% capture a mixed approximation guarantee as follows:
% \begin{definition}
%   The algorithm $\mech$ provides an $(\mult,\add,\pfail)$-approximation
%   to partial sums if for every (adaptively defined) sequence $\va \in
%   ([0,1]^m)^n$, with probability at least $1-\pfail$ over the coins of
%   $\mech$, for all times $i\in[n]$ and counters $r\in [m]$, the reported
%   value $\Sused{t}{r}$ satisfies:
%   $$\frac 1 \mult \cdot \used{i}{r} - \add \leq \sused{i}{r}\leq \mult\cdot \used{i}{r} + \add\, .$$
% \end{definition}

\begin{lemma}\label{lemma:treesum}
  For every $m\in \mathbb{N}$ and $\pfail\in(0,1)$: Running $m$
  independent copies of TreeSum \cite{dwork2010,ChanSS11} is
  $(\eps,0)$-differentially private and provides an $(1,C_{tree} \cdot
  \frac{(\log n)(\log (n m /\pfail))} { \eps},\pfail)$-approximation
  to partial vector sums, where $C_{tree}>0$ is an absolute constant.
\end{lemma}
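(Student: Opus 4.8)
The plan is to recall the structure of the binary-tree mechanism (``TreeSum'') of \cite{dwork2010,ChanSS11} and then re-run its two analyses --- privacy and accuracy --- while tracking the dependence on $m$ when one runs it on each of the $m$ coordinate streams $(a_{1,r},\dots,a_{n,r})$, $r\in[m]$, in parallel. Recall that for a single stream TreeSum builds a balanced binary tree whose $n$ leaves are the time steps; each node $v$ owns a dyadic interval $I_v$ and stores a noisy sum $\hat s_v = \sum_{i\in I_v}a_i + \Lap(b)$, with independent noise of scale $b$ to be chosen. The released estimate $\Sused{t}$ at time $t$ is obtained by writing $[1,t]$ as a disjoint union of at most $\lceil\log_2 n\rceil$ canonical dyadic intervals and summing the corresponding $\hat s_v$; crucially this --- and hence the whole announcement sequence --- is a deterministic post-processing of the noisy tree, whose noise is drawn obliviously of the data.

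For privacy, first fix a single stream. Replacing one entry $a_i$ by $a_i'$ changes the noiseless sums only at the at most $\lceil\log_2 n\rceil+1$ nodes whose interval contains leaf $i$, each by at most $|a_i-a_i'|$, so the vector of all node sums has $\ell_1$-sensitivity $(\lceil\log_2 n\rceil+1)\,|a_i-a_i'|$, and releasing the noisy tree is $\big((\lceil\log_2 n\rceil+1)|a_i-a_i'|/b\big)$-differentially private by the Laplace mechanism. In the $m$-counter setting, a player $i$ who switches strategy changes her contribution from $a_i$ to $a_i'$ with $\|a_i\|_1,\|a_i'\|_1\le 1$, hence $\sum_r |a_{i,r}-a_{i,r}'| = \|a_i-a_i'\|_1\le 2$; summing the per-coordinate privacy losses over the $m$ independent trees (basic composition) gives total loss at most $\tfrac{\lceil\log_2 n\rceil+1}{b}\|a_i-a_i'\|_1\le\tfrac{2(\lceil\log_2 n\rceil+1)}{b}$, so taking $b=\tfrac{2(\lceil\log_2 n\rceil+1)}{\eps}$ yields $(\eps,0)$-privacy. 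Lifting this to privacy under \emph{adaptive} continual observation in the sense of Definition~\ref{def:privacy} is the step that needs the most care: I would use that each announcement $m_t$, and therefore each action $a_t = d_t(m_1,\dots,m_t)$, is a deterministic function of the tree noise together with $a_1,\dots,a_{t-1}$, so conditioning on the noise reduces the comparison of the two worlds (with the cascaded downstream changes $m_j',a_j'$ for $j>i$ being mere post-processing) to the single-time-step $\ell_1$-sensitivity bound above applied coordinate-wise --- exactly the known argument behind these counters.

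For accuracy, fix a counter $r$ and time $t$: by construction $\Sused{t,r}-x_{t,r}$ is a sum of at most $\lceil\log_2 n\rceil$ independent $\Lap(b)$ variables, so $\mult=1$ and it remains only to bound the additive error. A standard sub-exponential tail bound for sums of independent Laplace variables gives $\big|\Sused{t,r}-x_{t,r}\big| = O\!\big(b\sqrt{\log n\cdot\log(nm/\pfail)}\big)$ except with probability at most $\pfail/(nm)$; since $b=O(\log n/\eps)$ and $\sqrt{\log n\cdot\log(nm/\pfail)}\le\log(nm/\pfail)$ (as $\log n\le\log(nm/\pfail)$), this is $O\!\big(\tfrac{(\log n)(\log(nm/\pfail))}{\eps}\big)$. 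A union bound over all $n$ time steps and $m$ counters then yields that, with probability at least $1-\pfail$, every $\Sused{t,r}$ lies within $C_{tree}\cdot\tfrac{(\log n)(\log(nm/\pfail))}{\eps}$ of $x_{t,r}$ simultaneously; monotonicity in the true count holds because for fixed noise each $\Sused{t,r}$ is non-decreasing in each $a_{i,r}$. This is precisely the claimed $(1,\,C_{tree}\tfrac{(\log n)(\log(nm/\pfail))}{\eps},\,\pfail)$-accuracy of Definition~\ref{def:accuracy}. The only non-mechanical choices in this last paragraph are which Laplace tail inequality to invoke and how to fold the $\sqrt{\log n}$ factor into the stated form; both are routine, so the substance of the lemma lies in the privacy accounting above.
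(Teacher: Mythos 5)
Your proof is correct and follows essentially the same route as the paper's: the binary-tree decomposition with $\ell_1$-sensitivity accounting across the $m$ coordinates (so the noise scale does not grow with $m$) for privacy, and a Laplace tail bound plus a union bound over all $nm$ estimates for accuracy. The paper phrases the privacy step as a single vector-valued Laplace mechanism rather than per-coordinate composition and invokes Lemma~2.8 of \citet{ChanSS11} for the tail (whose two regimes are both absorbed by your simplification $\sqrt{\log n\cdot\log(nm/\pfail)}\le\log(nm/\pfail)$), but these are cosmetic differences; your explicit treatment of adaptivity and monotonicity only adds detail the paper leaves implicit.
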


Even for $m=1, \alpha=1$, this bound is slightly tighter than those in
\citet{ChanSS11} and \citet{dwork2010}; however, it follows directly
from the tail bound in \citet{ChanSS11}.

Our new algorithm, FTSum (for Flag/Tree Sum), is described in
Algorithm~\ref{alg:mixedcounter}. For small $m$ ($m = o(log(n))$), it
provides lower additive error at the expense of introducing an
arbitrarily small constant multiplicative error.

\begin{lemma}\label{lemma:FTSum}
  For every $m\in \mathbb{N}$, $\alpha>1$ and $\pfail\in (0,1)$, FTSum
  (Algorithm \ref{alg:mixedcounter}) is $(\eps,0)$-differentially
  private and $(\alpha,\tilde O_{\alpha}(\frac{m\log (n
    /\pfail)}{\eps}), \pfail)$-approximates partial sums (where
  $\tilde O_a(\cdot)$ hides polylogarithmic factors in its argument,
  and treats $\alpha$ as constant).
\end{lemma}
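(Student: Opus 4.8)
The plan is to view FTSum, run on each of the $m$ coordinate streams, as the interleaving of a \emph{flag} mechanism and a \emph{tree} mechanism, with the total privacy budget $\eps$ split among the coordinates. The flag mechanism applies the AboveThreshold / sparse-vector primitive to the monotone query sequence $f_t = x_{t,r} - T_j$, where $T_0 = 0 < T_1 < \cdots < T_J$ is a short, rapidly increasing threshold sequence and $T_j$ is the largest threshold whose crossing has already been detected; for each $j$ it privately reports the first time $x_{\cdot,r}$ exceeds $T_j$, with additive slack $O(\tfrac{c\log(n/\gamma)}{\eps'})$ per crossing, where $c = J$ is the number of ``above'' answers and $\eps'$ the per-coordinate budget. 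Between the detections of $T_j$ and $T_{j+1}$ the mechanism outputs $T_j$ plus a monotonized TreeSum estimate of the count accumulated since the $j$-th crossing (a fresh TreeSum restarted at that crossing, with the usual doubling trick so no horizon need be known in advance); before the first crossing it outputs $0$. The design choice that makes this work is to space the thresholds doubly-exponentially, $T_j \asymp c_0\,2^{2^j}$ with $c_0 = \Theta_\alpha\!\left(\tfrac{\log\log n\cdot \log(n/\gamma)}{\eps}\right)$: then $J = O(\log\log n)$, so the $c = J$ factor in the sparse-vector slack contributes only a $\log\log n$ (absorbed by $\tilde O$), while within a bucket $[T_j, T_{j+1})$ the residual TreeSum's additive error $E_j = O\!\left(\tfrac{\log^2 T_{j+1}}{\eps}\right)$ (Lemma~\ref{lemma:treesum}) is at most $(\alpha-1)T_j$, so on that bucket the report is correct up to the multiplicative factor $\alpha$.

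For \textbf{privacy}, give coordinate $r$ budget $\eps/m$ and, within a coordinate, split it between its one AboveThreshold instance and its $\le J$ successive TreeSum instances. Each is a standard $(\eps/\mathrm{poly})$-differentially private mechanism, and composing the $O(m\log\log n)$ of them yields $(\eps,0)$-DP. The one point requiring care is that Definition~\ref{def:privacy} demands privacy under \emph{adaptive} continual observation: changing player $i$'s strategy cascades through all later announcements and hence all later actions and stream updates. This is handled because the guarantee for each component — AboveThreshold and TreeSum alike — is a statement about the component's \emph{entire} output stream as a function of its \emph{entire} input stream, uniformly over all input streams, so it continues to hold when that input stream is itself produced adaptively; composition of such statements is exactly Definition~\ref{def:privacy}.

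For \textbf{accuracy}, union-bound over the $O(m\log\log n)$ bad events (for each coordinate, that some AboveThreshold answer errs by more than its slack; for each coordinate and bucket, that its TreeSum exceeds its error bound), each driven to probability at most $\gamma/\mathrm{poly}(m\log\log n)$, which only inflates logarithms already present in $\beta := \tilde O_\alpha\!\left(\tfrac{m\log(n/\gamma)}{\eps}\right)$. On the complementary event, fix a coordinate $r$, a time $t$, and write $x = x_{t,r}$. If $x < T_1$ the report is $0$, and since $T_1 = O(\beta)$ we get $0 \in [\tfrac{x}{\alpha} - \beta,\ \alpha x + \beta]$. If $x \in [T_j, T_{j+1})$ with $j \ge 1$, the report equals $x$ up to an additive error $E_j + O(c_0)$, the $O(c_0)$ accounting for the uncertainty in where the $j$-th crossing was detected; by the choice of thresholds this is at most $(\alpha-1)T_j \le (\alpha-1)x$, so the report lies in $[\tfrac{x}{\alpha},\ \alpha x]$. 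In both cases the report lies in $[\tfrac{x}{\alpha} - \beta,\ \alpha x + \beta]$, and after the running-max monotonization it is nondecreasing in $x$ — exactly $(\alpha,\beta,\gamma)$-accuracy.

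The step I expect to be the main obstacle is calibrating the threshold sequence and the per-bucket horizons jointly: the $T_j$ must grow fast enough that only $O(\log\log n)$ of them cover $[0,n]$ (so the sparse-vector error stays $\tilde O_\alpha(\tfrac{m\log(n/\gamma)}{\eps})$), while $T_1$ must be large enough — scaling like $\tfrac{1}{\alpha-1}$ — and the $T_j$ not \emph{too} spread out, so that in every bucket both the TreeSum error $E_j$ and the crossing-detection slack are an $(\alpha-1)$-fraction of the count. Verifying that the single choice $T_j \asymp c_0\,2^{2^j}$, $c_0 = \Theta_\alpha\!\left(\tfrac{\log\log n\,\log(n/\gamma)}{\eps}\right)$, with bucket-$j$ TreeSum horizon $\asymp T_{j+1}$, simultaneously satisfies all of these inequalities for every $j$ and for $\alpha$ arbitrarily close to $1$, is the crux; the privacy composition and the union bound are then routine.
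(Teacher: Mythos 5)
There is a genuine gap here, and it comes in two parts. First, the lemma is a statement about FTSum (Algorithm~\ref{alg:mixedcounter}) as written, and the mechanism you analyze is not that algorithm. FTSum uses thresholds spaced \emph{geometrically at ratio $\alpha$} ($\log n,\ \alpha\log n,\ \alpha^2\log n,\dots,\alpha^k\log n$ with $k=O_\alpha(\log\log(nm/\gamma)+\log(1/\eps))$), outputs only the constant $(\log n)\alpha^{\flag_r-1}$ between flags (no per-bucket TreeSum), and hands off to a \emph{single} TreeSum only once the count exceeds $A=\alpha^k\log n\geq\frac{\alpha}{\alpha-1}C_{tree}\frac{(\log n)\log(nm/\gamma)}{\eps}$; the multiplicative $\alpha$ in phase one comes directly from the ratio-$\alpha$ spacing plus the $\tilde O_\alpha(m\log(nm/\gamma)/\eps)$ slack of the noisy threshold tests, and in phase two from the fact that TreeSum's additive error $B$ satisfies $B\leq\frac{\alpha-1}{\alpha}A$ and is therefore absorbed into the multiplicative factor. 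Your doubly-exponential thresholds with restarted per-bucket TreeSums are a different construction, so even if correct it would not prove the stated lemma.

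Second, and more importantly, the quantitative heart of your construction fails. You set the per-bucket TreeSum error to $E_j=O(\log^2 T_{j+1}/\eps)$, but TreeSum's error is governed by the \emph{number of time steps} it processes, not by the value the count reaches: a player's contribution to coordinate $r$ can be $0$, so the count can linger in $[T_j,T_{j+1})$ for up to $n$ steps, forcing $E_j=\Theta\bigl(\frac{(\log n)\log(nm/\gamma)}{\eps}\bigr)$ regardless of $T_{j+1}$. (Advancing the TreeSum only on nonzero increments is not an escape: increments are real-valued, and whether a player ticks the counter is exactly the private information, so making the tree's time index depend on it shifts all subsequent updates between neighboring streams.) Your requirement $E_j\leq(\alpha-1)T_j$ then forces $T_1=\Omega\bigl(\frac{(\log n)\log(n/\gamma)}{(\alpha-1)\eps}\bigr)$, and since the mechanism reports $0$ below $T_1$ you need $\beta\geq T_1/\alpha$, giving an additive error no better than plain TreeSum's. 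This defeats the entire point of the lemma, which is to obtain additive error $\tilde O_\alpha(m\log(n/\gamma)/\eps)$, beating the $\log n$ factor when $m=o(\log n)$. Your treatment of privacy (sparse-vector cost proportional to the number of above answers, composition with TreeSum, reduction of adaptive to nonadaptive privacy for $(\eps,0)$-DP) matches the paper's and is fine.
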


% One can in fact get the multiplicative approximation to be $1+o(1)$
% while keeping the additive error bound around $m(\log n)/\eps$,
% though the exact utility bounds become messier.

FTSum proceeds in two phases. In the first phase, it increments the
reported output value only when the underlying counter value has
increased significantly. Specifically, the mechanism outputs a public
signal, which we will call a ``flag'', roughly when the true counter
achieves the values $\log n$, $\alpha\log n$, $\alpha^2\log n$ and
so on, where $\alpha$ is the desired \emph{multiplicative}
approximation. The reported estimate is updated each time a flag is
raised (it starts at 0, and then increases to $\log n$, $\alpha\log
n$, etc). The privacy analysis for this phase is based on the
``sparse vector'' technique of \citet{HR10}, which shows that the cost
to privacy is proportional to the number of times a flag is raised
(but not the number of time steps between flags).  

When the value of the counter becomes large (about $\frac{\alpha
  \log^2 n}{(\alpha-1)\eps}$), the algorithm switches to the second
phase and simply uses the TreeSum protocol, whose additive error
(about $\frac{\log^2 n}{\eps}$) is low enough to provide an $\alpha$
multiplicative guarantee (without need for the extra space given by
the additive approximation).

If the mechanism were to raise a flag \emph{exactly} when the true
counter achieved the values $\log n$, $\alpha\log n$, $\alpha^2\log
n$, etc, then the mechanism would provide a $(\alpha,\log n,0)$
approximation during the first phase, and a $(\alpha,0,0)$
approximation thereafter. The rigorous analysis is more complicated,
since flags are raised only near those thresholds.

\newcommand{\flag}{\text{flag}}

\begin{algorithm}\label{alg:mixedcounter}
  \caption{FTSum --- A Private Counter with Low Multiplicative Error}
  \KwIn{Stream $\va = (a_1,...,a_n) \in ([0,1]^m)^n$, parameters
    $m,n\in \en$, $\alpha>1$ and $\pfail >0$}
  \KwOut{Noisy partial sums $y_1,...,y_n\in \re^m$}
  $k\gets \lceil\log_{\alpha}(\frac{\alpha}{\alpha -1}\cdot
  C_{tree}\cdot \frac{\log (nm/\gamma)}{\eps})\rceil$\;
\tcc{{\color{blue}$C_{tree}$ is the
    constant from Lemma~\ref{lemma:treesum}}}
  $\eps' \gets \frac{\eps}{2m(k+1)}$\; 
  \For{{$r=1$ \KwTo $m$}}
  {
    $\flag_r \gets 0$\;
    % $\sused{0}{r} \gets 0$\;
    $\used{0}{r}\gets 0$\;
    $\tau_r \gets (\log n)+ \Lap(2/\eps')$\;
  }
  \For{$i=1$ \KwTo $n$}
  {
    \For{$r=1$ \KwTo $m$}
    {
      \If({({\color{blue} First phase still in
          progress for counter $r$})}){$\flag_r \leq k$}
      {
        $\used{i}{r}\gets \used{i-1}{r}+\ract{i}{r}$\;
        $\tilde {\used{i}{r}} \gets \used{i}{r} + \Lap(\frac{2}{\eps'})$\;
        \If({({\color{blue} Raise a new flag for counter $r$})}){$\tilde {\used{i}{r}} >\tau_r$} 
        {
          $\flag_r\gets \flag_r +1$\;
          $\tau_r \gets (\log n)\cdot \alpha^{\flag_r}+ \Lap(2/\eps')$\;
        }
        \textbf{Release} $\sused{i}{r} = (\log n)\cdot
        \alpha^{\flag_r-1}$ \;
      }
      \Else({({\color{blue} Second phase has been reached for  counter
        $r$})})
      {
        \textbf{Release} $\sused{i}{r} =$ $r$-th counter output from TreeSum$(\va,\eps/2)$)\;
      }
    }
  }
\end{algorithm}

\paragraph{Enforcing Additional Guarantees}
Finally, we note that it is possible to enforce to additional useful
properties of the counter. First, we may insist that the accuracy
guarantees be satisfied with probability 1 (that is, set $\gamma=0$),
at the price of increasing the additive term $\delta$ in the privacy guarantee:

\begin{prop}
\label{prop:zero-failure}
  If $\mech$ is  $(\eps,\delta)$-private and $(\alpha,\beta,\gamma)$-accurate, then one can modify $\mech$ to obtain an algorithm $\mech'$ with the same efficiency that is $(\eps,\delta+\gamma)$-private and
  $(\alpha, \beta,0)$-accurate.
\end{prop}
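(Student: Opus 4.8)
The plan is to leave $\mech$ essentially untouched and simply \emph{wrap} it with a cheap, deterministic post-hoc ``clamp'': since the announcement mechanism observes the true stream, at every step it knows the true counts $x_{i,r}$, and whenever $\mech$ is about to release a value outside the window $\bigl[\tfrac{x_{i,r}}{\alpha}-\beta,\ \alpha x_{i,r}+\beta\bigr]$ (or about to violate monotonicity) we overwrite it with a value that is in the window. Because $\mech$ is $(\alpha,\beta,\gamma)$-accurate, this overwrite is triggered only on an event of probability at most $\gamma$; hence the released stream of $\mech'$ differs from that of $\mech$ only on that event, and privacy degrades by no more than an additive $\gamma$.

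\emph{The mechanism $\mech'$ and its accuracy.} Simulate $\mech$. At step $i$, for each resource $r$, let $c_{i,r}$ be the value $\mech$ would release and let $p$ be the value $\mech'$ released at step $i-1$ for $r$ (with $p=0$ initially). If $c_{i,r}\in\bigl[\tfrac{x_{i,r}}{\alpha}-\beta,\ \alpha x_{i,r}+\beta\bigr]$ and $c_{i,r}\ge p$, release $c_{i,r}$; otherwise release $\max\!\bigl(p,\ x_{i,r}\bigr)$. For every $x\ge0$ the window is nonempty, contains $x$, and has endpoints monotone in $x$; since the true counts are nondecreasing, $p\le\alpha x_{i-1,r}+\beta\le\alpha x_{i,r}+\beta$, so the value released in either branch lies in the window and is $\ge p$. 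Thus, with probability $1$, $\mech'$'s counters lie in the required window at all times and are monotone nondecreasing, i.e.\ $\mech'$ is $(\alpha,\beta,0)$-accurate in the sense of Definition~\ref{def:accuracy}. The wrapper is $O(1)$ bookkeeping per released number, so $\mech'$ has the same efficiency as $\mech$.

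\emph{Privacy, and the main obstacle.} Fix any strategy profile $d$ and let $B$ be the event (over $\mech$'s internal coins) that some released $c_{i,r}$ leaves its window; by $(\alpha,\beta,\gamma)$-accuracy, $\Pr[B]\le\gamma$, and on $\overline{B}$ the monotonicity of $\mech$'s counters (Section~\ref{section:counters}) guarantees no clamp ever fires, so $\mech'$ and $\mech$ emit identical announcement streams. Hence for every neighboring pair of executions of Definition~\ref{def:privacy} (one player switching $d_i\to d_i'$), the law of $\mech'$'s announcements is within total variation $\gamma$ of the law of $\mech$'s, on each side of the comparison; decomposing any event $S$ over $\overline{B}$ and invoking $(\eps,\delta)$-privacy of $\mech$ then yields the Definition~\ref{def:privacy} inequality for $\mech'$ with $\delta$ replaced by $\delta+\gamma$. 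The one delicate point — and the main obstacle — is that $\mech'$ is \emph{not} a black-box post-processing of $\mech$: it reads the true stream to decide when to clamp, and the target window itself depends on the stream, so the privacy claim must be argued directly from the $\overline{B}$-decomposition rather than quoted. Getting the extra additive term to be exactly $\gamma$ (rather than a small constant times $\gamma$) is cleanest when the accuracy-failure event is a function of $\mech$'s randomness alone — as for the tree-based counter, whose error is a data-independent noise process — since then $\mech$ may publish the failure indicator for free, $\mech'$ clamps precisely when it fires, the clamp probability is identical on both sides, and the decomposition closes with no slack; in general one first reduces to this case or absorbs the $e^\eps$ factor on the $\gamma$ term.
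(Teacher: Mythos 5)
Your clamping construction is exactly the intended argument here --- the paper states this proposition without a written proof, and the evident route is precisely to have $\mech'$ overwrite any out-of-window (or non-monotone) release with an in-window value computed from the true counts, so that $\mech'$'s output law differs from $\mech$'s only on the probability-$\gamma$ accuracy-failure event --- and your verification of $(\alpha,\beta,0)$-accuracy and monotonicity is correct. The one point you rightly flag is real but harmless: the generic decomposition over $\overline{B}$ yields $(\eps,\,\delta+(1+e^{\eps})\gamma)$ rather than $(\eps,\,\delta+\gamma)$, because the failure event on the $d'$ side of the comparison also picks up a factor $e^{\eps}$; this constant-factor slop on the additive privacy term is standard, is glossed over by the statement as written, and does not affect any downstream use of the proposition (one simply runs the counter with $\gamma$ rescaled by a constant).
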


Second, as in
\cite{dwork2010}, we may enforce the requirement that the reported values be
monotone, integral values that increase at each time step by at most
1. The idea is to simply report the nearest integral, monotone
sequence to the noisy values (starting at 0 and incrementing the
reported counter only when the noisy value exceeds the current
counter). 

\begin{prop}[\cite{dwork2010}]
\label{prop:monotone}
  If $\mech$ is  $(\eps,\delta)$-private and
  $(\alpha,\beta,\gamma)$-accurate, then one can modify $\mech$ to obtain
  an algorithm $\mech'$ which reports monotone, integral values that
  increase by 0 or 1 at each time step, with the same privacy and
  accuracy guarantees as $\mech$.
\end{prop}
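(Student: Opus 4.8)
The plan is to realize $\mech'$ as a \emph{deterministic online post-processing} of $\mech$. On each coordinate $r$, $\mech'$ runs $\mech$ internally to produce the noisy stream $\sused{1}{r},\sused{2}{r},\dots$, and releases instead the sequence $z_{t,r}$ defined by $z_{0,r}=0$ and, for $t\ge 1$,
\[
z_{t,r}=\begin{cases} z_{t-1,r}+1 & \text{if } \sused{t}{r}\ge z_{t-1,r}+1,\\ z_{t-1,r} & \text{otherwise.}\end{cases}
\]
By construction these values are integral, monotone, and change by $0$ or $1$ at each step, and the bookkeeping adds only $O(mn)$ work, so efficiency is unchanged.

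For privacy, each $z_{t,r}$ is a fixed function of $\mech$'s outputs through time $t$ only, so the released stream is an online post-processing of $(\Sused{1},\dots,\Sused{n})$. Folding this rounding map into the players' strategies -- over which Definition~\ref{def:privacy} already quantifies -- the joint distribution of the values released by $\mech'$ is obtained by applying one fixed map to the joint distribution of announcements in a game run with $\mech$; since post-processing cannot increase privacy loss, $\mech'$ inherits the $(\eps,\delta)$ guarantee. The same internal coins drive $\mech$ in both settings, so the accuracy failure probability is still $\pfail$.

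For accuracy, condition on the probability-at-least-$(1-\pfail)$ event that every coordinate of $\mech$ lies in $[\tfrac{\used{t}{r}}{\mult}-\add,\ \mult\,\used{t}{r}+\add]$ at all times, fix one coordinate, and write $x_t,y_t,z_t$ for its true count, noisy count, and released count; note $x_t-x_{t-1}\in[0,1]$. I would establish by induction on $t$ (base $t=0$ immediate since $\add\ge0$) the two invariants $z_t\le \mult x_t+\add$ and $z_t\ge \tfrac{x_t}{\mult}-\add-1$. Upper bound: an increment step has $z_t=z_{t-1}+1\le y_t\le\mult x_t+\add$ (the increment condition being exactly $y_t\ge z_{t-1}+1$), and a non-increment step has $z_t=z_{t-1}\le\mult x_{t-1}+\add\le\mult x_t+\add$. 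Lower bound: a non-increment step has $y_t<z_{t-1}+1$, hence $z_t=z_{t-1}>y_t-1\ge\tfrac{x_t}{\mult}-\add-1$; an increment step has $z_t=z_{t-1}+1\ge(\tfrac{x_t-1}{\mult}-\add-1)+1=\tfrac{x_t}{\mult}-\add-\tfrac1{\mult}\ge\tfrac{x_t}{\mult}-\add-1$, using the inductive hypothesis and $x_{t-1}\ge x_t-1$. Thus $\mech'$ is $(\mult,\add+1,\pfail)$-accurate; since the private counters of interest have $\add$ polylogarithmic in $n$, the extra $+1$ is absorbed and the guarantee matches $\mech$'s.

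The step I expect to be the real obstacle is this lower bound on accuracy: a counter that can rise by at most $1$ per step must be shown not to lag the true count, which forces one to use the $1$-Lipschitz property of the true stream together with the case split above, and it is exactly here that the unavoidable additive $O(1)$ rounding slack appears. Privacy and the integrality/monotonicity structure are routine once the post-processing viewpoint is in place; for underestimator counters the same argument works, with the upper-bound invariant holding automatically.
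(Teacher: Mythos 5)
Your construction is exactly the one the paper sketches (increment the reported integer counter only when the noisy value has pulled ahead, then invoke post-processing for privacy), and your induction correctly supplies the accuracy details the paper omits, including the honest accounting of the $O(1)$ additive rounding slack. Correct, and essentially the same approach.
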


\begin{corollary}\label{cor:approx}
  Algorithm~\ref{alg:mixedcounter} is an $(\eps,\delta)$-differentially
  private vector counter algorithm providing a
  \begin{enumerate}
  \item $(1,O(\frac{(\log n)(\log (n m /\delta))} {
   \eps}),0)$-approximation (using modified TreeSum); or
\item $(\alpha,\tilde O_{\alpha}(\frac{m\log n
    \log\log(1/\delta)}{\eps}), 0)$-approximation for any
  constant $\alpha>1$ (using FTSum).
  \end{enumerate}
\end{corollary}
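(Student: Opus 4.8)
The plan is to obtain both items by instantiating the two counter constructions from Lemmas~\ref{lemma:treesum} and~\ref{lemma:FTSum} with their failure‑probability parameter $\gamma$ set equal to the target privacy slack $\delta$, and then removing the failure probability entirely via Proposition~\ref{prop:zero-failure}, paying for it with the additive $\delta$ in the privacy guarantee. Concretely, for item~1 I would run $m$ independent copies of TreeSum. By Lemma~\ref{lemma:treesum} this is $(\eps,0)$‑differentially private and, with $\gamma=\delta$, is $\bigl(1,\ C_{tree}\cdot\frac{(\log n)(\log(nm/\delta))}{\eps},\ \delta\bigr)$‑accurate. Since its native $\delta$‑parameter is $0$, Proposition~\ref{prop:zero-failure} turns this into an $(\eps,\delta)$‑private, $\bigl(1,\ C_{tree}\cdot\frac{(\log n)(\log(nm/\delta))}{\eps},\ 0\bigr)$‑accurate mechanism, which is the asserted $\bigl(1,\ O(\frac{(\log n)(\log(nm/\delta))}{\eps}),\ 0\bigr)$‑approximation; applying Proposition~\ref{prop:monotone} on top (this is the ``modified TreeSum'') makes the reported values monotone, integral, and increasing by at most $1$ per step, which is the form the resource‑sharing results of Section~\ref{section:discrete} consume, and costs nothing in privacy or accuracy.

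For item~2 the argument is the same with FTSum (Algorithm~\ref{alg:mixedcounter}) in place of TreeSum. Lemma~\ref{lemma:FTSum} gives, for every constant $\alpha>1$, an $(\eps,0)$‑private, $\bigl(\alpha,\ \tilde O_\alpha(\frac{m\log(n/\gamma)}{\eps}),\ \gamma\bigr)$‑accurate counter vector; taking $\gamma=\delta$ and composing with Propositions~\ref{prop:zero-failure} and~\ref{prop:monotone} produces an $(\eps,\delta)$‑private, monotone, integral, $\bigl(\alpha,\ \tilde O_\alpha(\frac{m\log(n/\delta)}{\eps}),\ 0\bigr)$‑accurate mechanism. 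It then remains only to rewrite the additive term in the stated form. Since $\alpha$ is treated as a constant, the number of flags is $k=\lceil\log_\alpha(\frac{\alpha}{\alpha-1}C_{tree}\frac{\log(nm/\delta)}{\eps})\rceil=O_\alpha(\log\log(nm/\delta))$, and the $\delta$‑dependence of the error enters only through $k$ together with the polylogarithmic slack absorbed by $\tilde O_\alpha(\cdot)$; chasing these factors (a routine calculation) collapses $\tilde O_\alpha(\frac{m\log(n/\delta)}{\eps})$ to $\tilde O_\alpha(\frac{m\log n\,\log\log(1/\delta)}{\eps})$, i.e. the claimed $O(\frac{m\log n\,\log\log(1/\delta)}{\eps})$ additive error.

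There is no substantive obstacle: the corollary is a packaging of the two lemmas and the two propositions, whose sole purpose is to present the counters in the $(\alpha,\beta,0)$/$(\eps,\delta)$ form needed downstream. The only point that warrants a line of care is that the ``override the output whenever it falls outside its accuracy window'' modification behind Proposition~\ref{prop:zero-failure} is a well‑defined function of the internally generated transcript and still satisfies the adaptive continual‑observation privacy of Definition~\ref{def:privacy}; but since Proposition~\ref{prop:zero-failure} is already stated for an arbitrary mechanism meeting Definitions~\ref{def:privacy} and~\ref{def:accuracy}, this is inherited and the corollary follows.
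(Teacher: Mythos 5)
Your derivation matches the paper's intended one exactly: Corollary~\ref{cor:approx} is stated without a separate proof precisely because it is the composition of Lemma~\ref{lemma:treesum} (resp.\ Lemma~\ref{lemma:FTSum}) instantiated with $\gamma=\delta$, Proposition~\ref{prop:zero-failure} to trade the accuracy failure probability for the additive privacy slack, and Proposition~\ref{prop:monotone} for monotonicity. The only step you wave through---rewriting $\tilde O_{\alpha}(\frac{m\log(n/\delta)}{\eps})$ as $\tilde O_{\alpha}(\frac{m\log n\,\log\log(1/\delta)}{\eps})$---is also the one place the paper itself is loose: since $\log(n/\delta)=\log n+\log(1/\delta)$, the stated form only follows when $\log(1/\delta)$ is polylogarithmic in $nm/\eps$ (otherwise $\log(1/\delta)$ enters the additive error linearly through $E_1$ in the proof of Lemma~\ref{lemma:FTSum}, not merely through the flag count $k$), so this is a defect of the corollary's statement rather than of your argument.
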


%%% Local Variables: 
%%% mode: latex
%%% TeX-master: "paper"
%%% End: 

\section{Extensions}\label{sec:extensions}
As part of Appendix~\ref{section:future-appendix}, we also consider settings where players' utility when choosing a resource $r$ depends upon the \emph{total number} of players choosing $r$, not just the  players who chose $r$ before. In addition, Appendix~\ref{sec:other-games} considers several other classes of games: namely, cut games, consensus games, and unrelated machine scheduling, and consider whether or not private synopses of the state of play is sufficient to improve social welfare over simultaneous play, as perfect synopses have been proven to be in~\citet{Leme:2012}.

\section{Discussion and Open Problems}\label{section:discussion}
In this work, we considered how public dissemination of information in
sequential games can guarantee a good social welfare while maintaining
differential privacy of the players' strategies. We considered two
`extreme' cases -- the greedy strategy and the class of all
undominated strategies. While analyzing the class of undominated
strategies gives guarantees that are robust, in many games that we
considered, the competitive ratios were significantly worse than
greedy strategies, and in some cases they were unbounded. It is interesting to note that many of the examples in this paper that show the poor performance with undominated strategies also hold when the players know their position in the sequence, an assumption we have not made for any of the positive results in this work. It is an
interesting direction for future research to consider classes of
strategies that more restricted than undominated strategies yet are
general enough to be relevant for games where players play with
imperfect information.

As mentioned in the introduction, we note here that, while players are
making choices subject to approximate information, our results are not
a direct extension of the line of thought that approximate information
implies approximate optimization. In particular, for greedy
strategies, while there may be a bound on the error of the counters,
but that \emph{does not imply} that, for arbitrary value curves,
playing greedily according to the counters will be \emph{approximately
  optimal for each individual}. In particular, consider one resource
$r$ with value $H$ for the first $10$ investors, and value $0$ for the
remaining investors, and a second resource $r'$ with value $H/2$ for
all investors. With $(\mult, \add, \pfail)$, as many as $\add$ players
might have unbounded ratio between their value for $r$ as $r'$, but
will pick $r$ over $r'$. The analysis of greedy shows, despite this
anomaly, the total social welfare is still well-approximated by this
behavior.

All of our results relied on using differentially private counters for
disseminating information. For the differentially-private counter, a
main open question is ``what is the optimal trade-off between additive
and multiplicative guarantees?". Furthermore, as part of future
research, one can consider other privacy techniques for announcing
information that can prove useful in helping players achieve a good
social welfare. And more generally, we want to understand what
features of games lend themselves to be amenable to public
dissemination of information that helps achieve good welfare and
simultaneously preserves privacy of the players' strategies.

\small{
\bibliography{sources,other}}
\appendix
\section{Other games}
\label{sec:other-games}
In this section, we study a number of games which~\citet{Leme:2012} showed to have a large improvement between their Price of Anarchy and their \emph{sequential} Price of Anarchy. We pose the question: with privacy-preserving information handed out to players, what loss is incurred in comparison to providing exact information? In addition to introducing privacy constraints, we should note here that while in \citet{Leme:2012}, each player playing the sequential game knows the \emph{type of every other player}, in our setting, we only provide information about \emph{actions taken by previous players}.

\subsection{Unrelated Machine scheduling
  games}\label{section:machine-scheduling}
An instance of the unrelated machine scheduling game consists of $n$
players who must schedule their respective jobs on one of the $m$
machines; the cost to the player is the final load on the machine on
which she scheduled her job. The size of player $k$'s job on machine
$q$ is $t_{kq}$. The objective of the mechanism is to minimize the
makespan. We consider the dynamics of this game when it is played
sequentially. \citet{Leme:2012} prove that the sequential price of
anarchy is $O(m2^{n})$.

In our setting, each player is shown a \emph{load profile} when it is
her turn to play. The \emph{load profile} $L$ denotes the
\emph{displayed} vector of loads on the various machines. In the
\emph{perfect counter} setting, the displayed load equals the exact
load on each machine. We now show that undominated strategies, with
perfect counters, perform unboundedly poorly with respect to OPT.

\begin{lemma}
  If $\mech$ is a perfect counter vector, $CR_\ud(\mech, g)$ is
  unbounded for some instances $g$ of unrelated machine scheduling.
\end{lemma}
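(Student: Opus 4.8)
The plan is to build a tiny instance --- two players and two machines --- in which the perfect counter gives the first player no useful constraint (she has seen nothing yet) and an \emph{undominated} choice on her part forces the second player, who then sees the true state, onto a catastrophically overloaded machine. Let the machines be $M_1,M_2$ and let $N$ be a large parameter. Give player $1$ the job sizes $t_{1,1}=N$, $t_{1,2}=1$, and give player $2$ the job sizes $t_{2,1}=1$, $t_{2,2}=2N$.

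First I would record that $OPT(g)=1$: sending player $1$ to $M_2$ and player $2$ to $M_1$ produces the load vector $(1,1)$, while each of the three remaining assignments has makespan at least $N$ (a four-case check).

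Next I would exhibit an undominated strategy profile of makespan $N+1$. Player $1$ has no predecessors, so every belief is consistent with the (empty) perfect counter; under the belief ``player $2$ will schedule on $M_2$'' her final cost is $N$ on $M_1$ versus $1+2N$ on $M_2$, so $M_1$ is a best response and in particular not dominated --- hence scheduling on $M_1$ is an undominated strategy for her. Now player $2$ arrives and the perfect counter displays the true load $(N,0)$; her cost is $N+1$ on $M_1$ and $2N$ on $M_2$, and since $N+1<2N$ for $N>1$, $M_1$ \emph{strictly} dominates $M_2$, so $M_1$ is her unique undominated response to this observation. Along this profile the final load on $M_1$ is $N+1$, so the makespan is $N+1$ and $CR_\ud(\mech,g)\ge N+1$; letting $N\to\infty$ over the family makes the ratio unbounded.

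The one point I would be careful about --- the only conceptual (as opposed to arithmetic) content of the argument --- is the meaning of ``undominated'' for player $1$, whose utility depends on a player who has not yet moved. The reading I would use is that a pure action is undominated iff it is not strictly dominated by another pure action, i.e.\ iff some continuation of the opponents' play makes it a best response; the continuation ``player $2$ on $M_2$'' does exactly this, which is the first best-response inequality above. It is also worth noting, for contrast with \citet{Leme:2012}, that the subgame-perfect solution avoids the blow-up: there player $1$ foresees that choosing $M_1$ turns $M_1$ into player $2$'s best response (loading it to $N+1$) and deviates to $M_2$ --- so it is precisely the ``optimism'' permitted to undominated play, and not the imperfect information, that is responsible for the unbounded ratio.
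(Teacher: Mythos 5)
Your proof is correct and follows essentially the same route as the paper's: a two-player, two-machine instance in which player $1$'s choice of the heavily loaded machine is undominated because some consistent belief about player $2$'s continuation makes it a best response, after which player $2$'s forced response yields a large makespan. The only cosmetic differences are that the paper uses a single instance with $OPT=0$ (so the ratio is infinite outright, with the belief phrased as a hypothetical type for player $2$ rather than a continuation strategy), while you use a parameterized family with $OPT=1$ and let $N\to\infty$; both are valid.
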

\begin{proof}
  Consider the case with two players ($p_{1}$ and $p_{2}$) and two
  machines ($m_{1}$ and $m_{2}$). $p_{1}$ arrives before
  $p_{2}$. Player $p_{1}$ has a cost of 0 on $m_{1}$ and 1 on
  $m_{2}$. It is an undominated strategy for player 1 to choose
  $m_{2}$ since if player $p_{2}$ has a cost of 2 on $m_{1}$ and 3 on
  $m_{2}$, $p_{2}$ chooses $m_{1}$ and so $p_{1}$ is better off
  scheduling her job on $m_{2}$.

  However, if $p_{1}$ chooses $m_2$ (an undominated strategy) $m_{2}$,
  and player $p_{2}$ has cost 1 on $m_{1}$ and 0 on $m_{2}$, the
  optimal makespan is 0; the achieved makespan is at least 1.
\end{proof}
%\end{proof}

In light of this result, we restrict our attention to greedy
strategies for machine scheduling, and show that the competitive ratio
of the greedy strategy with privacy-preserving counters is
bounded. Below, we denote by $t^{*}_{k}$ the minimum cost of job $k$ among all the machines and by $q^{*}_{k}$ the machine that achieves this minimum. The following result follows from the analysis
of the greedy algorithm as presented in~\citet{Aspnes:1997}.

\begin{theorem}~\citep{Aspnes:1997} With perfect counters and players
  playing greedy strategies, the makespan is at most
  $\sum_{i=1}^{n}t_{i}^{*}$, and since $OPT \ge
  \sum_{i=1}^{n}t_{i}^{*}/m$, the competitive ratio is at most $m$.
\end{theorem}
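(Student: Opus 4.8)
The plan is to prove the makespan bound by a short induction that tracks, for each job, the load on the machine it chose at the instant it was scheduled. Fix the arrival order and let $\ell_q^{(i)}$ denote the load on machine $q$ immediately before job $i$ is scheduled, i.e.\ the total size of jobs $1,\dots,i-1$ that were placed on $q$. A greedy player $i$, seeing the exact load profile, picks a machine $q_i$ minimizing the resulting load $\ell_q^{(i)}+t_{iq}$, so writing $c_i := \ell_{q_i}^{(i)}+t_{i q_i}=\min_q\big(\ell_q^{(i)}+t_{iq}\big)$ for the load on $q_i$ just after job $i$ lands, two elementary facts are available: loads never decrease over time, and the load on a machine stays constant between two consecutive jobs assigned to it; and, by the greedy rule, $c_i \le \ell_{q^*_i}^{(i)}+t^*_i$, where $q^*_i$ and $t^*_i$ are as in the statement.

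The heart of the argument is the invariant $c_i \le \sum_{l=1}^{i} t^*_l$, which I would establish by induction on $i$. The base case is immediate: all loads start at $0$, so $c_1=\min_q t_{1q}=t^*_1$. For the inductive step I would apply the greedy inequality $c_i\le \ell_{q^*_i}^{(i)}+t^*_i$ and bound $\ell_{q^*_i}^{(i)}$: if no earlier job was placed on $q^*_i$ then this load is $0$ and we are done; otherwise, taking $j<i$ to be the last job placed on $q^*_i$ before $i$, the two elementary facts give $\ell_{q^*_i}^{(i)} = \ell_{q_j}^{(j)}+t_{j q_j}= c_j$, and the induction hypothesis gives $c_j\le \sum_{l\le j}t^*_l\le \sum_{l\le i-1}t^*_l$; hence $c_i\le c_j+t^*_i\le\sum_{l\le i}t^*_l$.

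To conclude, I would note that the final load on any machine that receives at least one job equals $c_k$ for the last job $k$ placed on it, hence is at most $\sum_{l=1}^{n}t^*_l$, while empty machines carry load $0$; therefore the greedy makespan is at most $\sum_{i=1}^{n}t^*_i$. For the lower bound on $\opt$, each job $i$ adds at least $t^*_i$ to the total load in any assignment, so the total load — and hence $m$ times the makespan of any assignment — is at least $\sum_i t^*_i$, giving $\opt\ge \frac1m\sum_i t^*_i$. Dividing the two bounds yields $\comp_{\greedy}(\mech,g)\le m$. I expect the only delicate point to be the bookkeeping behind the identity $\ell_{q^*_i}^{(i)}=c_j$ — verifying that the load on $q^*_i$ genuinely does not change between job $j$ and job $i$, so that it is identified with exactly the quantity the induction hypothesis controls; everything else is routine.
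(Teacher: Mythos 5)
Your proof is correct and follows essentially the same route as the analysis the paper cites from Aspnes et al.\ (and reuses in noisy form in Theorem~\ref{thm:scheduling-greedy}): the whole content is the greedy inequality $c_i \le \ell^{(i)}_{q^*_i} + t^*_i$, telescoped over the arrival order, together with the averaging bound $OPT \ge \frac{1}{m}\sum_i t^*_i$. Your per-machine bookkeeping (identifying $\ell^{(i)}_{q^*_i}$ with $c_j$ for the last job $j$ placed on $q^*_i$) is a valid, slightly more refined way to close the induction; the paper's version simply bounds $\ell^{(i)}_{q^*_i}$ by the current makespan and inducts on the makespan directly, which yields the same $\sum_{i=1}^n t^*_i$ bound.
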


Theorem~\ref{thm:scheduling-greedy} shows that such a bound extends to
the setting where players have only approximate information about the
state, showing that privacy-preserving information is enough to attain
nontrivial coordination with greedy players.

\begin{theorem}\label{thm:scheduling-greedy}
Using $(\mult, \add, \pfail)$-counter vector, and players playing greedy strategies, with probability $1-\pfail$, the makespan is at most $\mult^{2n+1} m \cdot OPT + \add(\mult^{2n+1}(2n+1) + 1)$.
\end{theorem}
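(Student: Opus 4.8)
The plan is to mimic the analysis of the greedy algorithm for unrelated machine scheduling (the theorem of \citet{Aspnes:1997} quoted above), but to carefully track how the multiplicative and additive errors in the counter vector propagate through the greedy argument. Recall that in the perfect-information analysis, when player $i$ arrives and greedily picks the machine minimizing her perceived finishing load, the key inequality is that the load she adds to her chosen machine is at most the minimum achievable finishing load, which in turn is at most $t_i^* + (\text{current load on } q_i^*)$; summing a telescoping-style bound over all players gives makespan $\le \sum_i t_i^*$. With a $(\mult,\add,\pfail)$-counter, player $i$ sees, for each machine $q$, a displayed load $L_q$ satisfying (with probability $1-\pfail$, at all times) $\frac{x_q}{\mult} - \add \le L_q \le \mult x_q + \add$, where $x_q$ is the true load. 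So first I would condition on the $1-\pfail$ event that all counters are accurate throughout, and work deterministically from there.

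Next I would set up the recursion. Let $\Lambda_i$ denote the true maximum load after the first $i$ players have been placed. When player $i$ arrives, she picks the machine $q$ minimizing $L_q + t_{iq}$ (her perceived finishing load). Using the two-sided accuracy bound, the true load on her chosen machine $q_i$ after she is placed is at most $\mult(L_{q_i} + \add) + t_{i q_i} \le \mult(\mult x_{q_i}' + \add + \add) + t_{i q_i}$ — wait, more carefully: $x_{q_i} \le \mult(L_{q_i} + \add)$, so the new true load on $q_i$ is $x_{q_i} + t_{i q_i} \le \mult L_{q_i} + \mult\add + t_{i q_i}$. Since greedy minimized $L_{q_i}+t_{i q_i}$, this is at most $\mult(L_{q_i^*} + t_{i q_i^*}) + \mult\add \le \mult(\mult x_{q_i^*} + \add + t_i^*) + \mult\add$, and bounding $x_{q_i^*} \le \Lambda_{i-1}$ gives that the new true load on $q_i$ is at most $\mult^2 \Lambda_{i-1} + \mult^2\add + \mult\add + \mult t_i^*$. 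Hence $\Lambda_i \le \mult^2 \Lambda_{i-1} + O(\mult^2)(\add + t_i^*)$ (the max over the untouched machines is $\le \Lambda_{i-1} \le \mult^2\Lambda_{i-1}$). Unrolling this linear recursion from $\Lambda_0 = 0$ over $n$ steps produces a bound of the form $\Lambda_n \le \mult^{2n}\sum_i (\text{const}\cdot\mult^2)(t_i^* + \add)$, and collecting the geometric sum of the $\mult^{2j}$ coefficients together with the factor of $\mult$ on the additive terms yields exactly the stated $\mult^{2n+1} m\cdot OPT + \add(\mult^{2n+1}(2n+1)+1)$ after using $\sum_i t_i^* \le m\cdot OPT$.

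The main obstacle — and the reason the bound is exponential in $n$ rather than the clean $m$ of the perfect-counter case — is that the multiplicative error compounds at every step: a machine whose displayed load is already a factor $\mult$ too low can attract a player who then makes its true load a factor $\mult^2$ larger than greedy "thought", and this feeds back into the next player's decision. So the heart of the argument is getting the one-step recursion $\Lambda_i \le \mult^2\Lambda_{i-1} + O(\mult^2)(t_i^* + \add)$ with the right constants, and then being careful in the unrolling so that the exponent is $2n+1$ and the additive coefficient is $(2n+1)$ terms of size $\mult^{2n+1}\add$ (roughly one factor $\mult^{2j}$ from each level plus the extra $\mult$ from converting displayed to true loads, plus the final $+\add$ from reading off player $n$'s own counter). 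A minor point to handle cleanly is that greedy compares \emph{displayed} finishing loads, so both the "my machine" and "optimal machine" sides of the inequality must be translated through the accuracy bounds in the correct directions (upper-bounding the true load on the chosen machine, lower-bounding nothing, and upper-bounding $L_{q_i^*}$ by $\mult x_{q_i^*} + \add$); once that bookkeeping is fixed the rest is the routine geometric-series calculation I would not spell out in full.
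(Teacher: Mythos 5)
Your proposal is correct and follows essentially the same argument as the paper: a one-step analysis of greedy's choice through the two-sided $(\mult,\add)$ accuracy bounds yielding a recursion in which the error compounds by a factor $\mult^2$ per player, unrolled over $n$ steps and combined with $\sum_i t_i^* \le m\cdot OPT$. The only cosmetic difference is that the paper runs the recursion on the \emph{displayed} makespan $\norminfty{L}$ (converting to the true makespan once at the end, which is what produces the exact exponent $2n+1$ and the $(2n+1)\add$ coefficient), whereas you track the true makespan $\Lambda_i$ directly; the two formulations are equivalent.
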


\begin{proof}[Proof of Theorem~\ref{thm:scheduling-greedy}]
  Consider any player $i$, and let the \emph{displayed} load profile
  she sees be $L$. Using greedy strategy, she will put her job on
  machine $q$ that minimizes $L_{q} + t_{kq}$ and this in particular
  shall be at most $\norminfty{L} + t^{*}_{k}$. Since the \emph{true}
  makespan before this player placed her job is at most $\mult
  \norminfty{L} + \add$, hence after she places her job, for the
  displayed load profile $L'$, $\norminfty{L'} \le \mult (\mult
  \norminfty{L} + \add + t^{*}_{k})+\add \le \mult^{2}(\norminfty{L} +
  2\add + t^{*}_{k})$.

  Using the above reasoning for every player in the sequence, we have
  the displayed load profile $L_{n}$ at the end of the sequence has
  the property that $\norminfty{L_{n}} \le
  \mult^{2n}(\norminfty{L_{0}} + 2n\add + \sum_{k=1}^{n}t^{*}_{k})$,
  where $L_{0}$ is the load profile shown to the first player. But
  $\norminfty{L_{0}}$ is at most $\add$, since the true load on all
  machines is zero at that point.

  Since the displayed makespan at the end of the sequence is at most
  $\mult^{2n}(\add+ 2n\add + \sum_{k=1}^{n}t^{*}_{k})$, hence the true
  makespan is at most $\mult^{2n+1}(\add + 2n\add +
  \sum_{k=1}^{n}t^{*}_{k})+ \add$. Since $OPT \ge \sum_{k=1}^{n}t^{*}_{k}/m$ we have our result.
\end{proof}

\subsection{Cut games}\label{section:cut-games}
A cut game is defined by a graph, where every player is a node of the
graph. Each of the $n$ players chooses one of the two colors, `red' or
`blue', and the utility to a player is the number of her neighbors who
\emph{do not have} the same color as hers.

In sequential play, when a player has her turn to play, she is shown
counts of the number of her neighbors who are colored `red' and who
are colored `blue'. We assume each player knows the total number of
her neighbors in the graph exactly. With greedy strategies, each
player chooses the color with fewer nodes when it is her turn to play.
As was the case for machine scheduling, undominated strategies for cut
games perform much worse than $OPT$, even with perfect counters.

\begin{lemma}\label{thm:cut-game-ud-full}
With perfect counters and undominated strategies, the competitive ratio against the optimal social welfare is at least $n$.
\end{lemma}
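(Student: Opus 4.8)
**Proof proposal for Lemma~\ref{thm:cut-game-ud-full} (cut games, perfect counters, undominated strategies, competitive ratio $\geq n$).**

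The plan is to exhibit a single graph together with a sequential arrival order and a choice of undominated strategies for which the realized cut has weight $O(1)$ (in fact, arbitrarily small relative to $n$) while $OPT = \Omega(n)$. The natural candidate is the star $K_{1,n-1}$: a center $c$ with $n-1$ leaves $\ell_1,\dots,\ell_{n-1}$. The optimal cut colors $c$ one color and all leaves the other, giving social welfare $2(n-1)$ (each of the $n-1$ edges is cut and contributes to both endpoints' utilities). The target bad outcome is the monochromatic coloring, where every edge is uncut and the social welfare is $0$ — or, to keep the competitive ratio finite and clearly equal to at least $n$, an outcome in which only a single edge is cut, giving welfare $2$, against $OPT = 2(n-1) \geq 2n - 2$, so the ratio is at least $n-1$; a trivial adjustment of the construction (e.g.\ adding one extra disjoint edge that is always cut, or using $K_{1,n}$) pushes this to $\geq n$.

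The key steps, in order: First, let the leaves $\ell_1,\dots,\ell_{n-1}$ arrive before the center $c$. When leaf $\ell_j$ arrives, its only neighbor is $c$, who has not yet moved, so $\ell_j$ sees the counts $(\text{red}=0,\text{blue}=0)$ among its neighbors. Under a \emph{consistent belief} with perfect counters the leaf knows $c$ is currently uncolored, but $c$ will move later; since the leaf does not know $c$'s future color, \emph{both} colors are undominated for $\ell_j$ — for either choice there is a consistent continuation (a color for $c$) making the other choice strictly worse. (This is exactly the phenomenon the paper flags: with perfect counters greedy is the unique undominated strategy only when utilities depend on \emph{past} actions; here a neighbor moves in the future.) So have every leaf choose, say, red — this is undominated. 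Second, when $c$ finally arrives it sees $n-1$ red neighbors and $0$ blue; greedy (and indeed the unique undominated move for $c$, since all its neighbors have already committed) is to choose blue. Third, compute: every one of the $n-1$ star edges is cut, so in the plain star this is actually the \emph{optimal} outcome — which is the wrong direction. The fix is to instead have the leaves \emph{not} coordinate: since each leaf independently has both colors undominated, we may choose the strategy profile in which leaf $\ell_1$ picks blue and $\ell_2,\dots,\ell_{n-1}$ pick red. Then $c$ arrives, sees $(n-2)$ red and $1$ blue, and its unique undominated (greedy) move is red. Now only the single edge $\{c,\ell_1\}$ is cut: social welfare $= 2$, while $OPT = 2(n-1)$. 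Hence $\comp_\ud(\full,g) \geq n-1$; padding the instance by one independent always-cut edge yields $\geq n$.

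The main obstacle — and the only real content — is the justification that each leaf's choice is genuinely \emph{undominated} in the formal sense of Section~\ref{sec:behavior}: I must check that for the "wrong" color there is a belief about the (future) center consistent with the perfect-counter display under which that color is strictly better, and likewise for the "right" color, so neither dominates the other. Here the subtlety is that the paper's undominated-strategy definition quantifies over beliefs "consistent with the displayed value of the counters," and with perfect counters a leaf's belief about a \emph{future} mover is unconstrained by the display (the display only reports the current state, which is honest and which the leaf already accounts for). I would make this explicit: fixing the center's eventual color to blue makes red strictly better for a leaf, and fixing it to red makes blue strictly better; thus both are undominated, and the adversary (who controls the arrival order and may select among undominated profiles) is free to pick the miscoordinated profile above. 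Everything else — the two count computations the center sees, and the arithmetic $2$ versus $2(n-1)$ — is routine.
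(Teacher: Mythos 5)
There is a genuine error that sinks the construction. In the cut game a player's utility is the number of neighbors who do \emph{not} share her color, so a player wants to pick the \emph{minority} color among her neighbors. In your profile the center $c$ arrives seeing $n-2$ red neighbors and $1$ blue neighbor; picking red yields utility $1$ while picking blue yields utility $n-2$. Since all of $c$'s neighbors have already committed and the counters are perfect, the only consistent belief is the truth and the unique undominated move for $c$ is \emph{blue}, not red. The resulting coloring cuts $n-2$ of the $n-1$ star edges, giving social welfare $2(n-2)$ against $OPT=2(n-1)$ --- a constant ratio, not $n$. The problem is not just this one miscoordination: no choice of undominated leaf colors can rescue the star. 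Whatever the leaves do, the last-moving center best-responds by joining the minority, cutting at least $\lceil (n-1)/2\rceil$ edges, so the welfare is at least $n-1$ and the competitive ratio of the star is at most $2$ under any undominated play with this arrival order (and with the center arriving first, every leaf's unique undominated move cuts its edge, so the outcome is optimal).

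The paper avoids this by using a $2n$-cycle with nodes arriving in order around the cycle. The key structural difference is that every node except the last still has an \emph{uncolored future neighbor} when it moves, so it can hold the belief that this neighbor will take the opposite color; under that belief the two colors \emph{tie} (each yields utility $1$), which makes the ``conforming'' color undominated and lets the monochromatic choice propagate around the cycle. Only the final node is forced to deviate, so exactly two edges are cut (welfare $4$) versus $OPT=4n$, giving ratio $n$. If you want to keep a star-like flavor you would need each mover to retain an uncolored neighbor whose future color creates a tie --- which is precisely what the cycle provides and what the star, with its single hub that sees all committed neighbors, cannot.
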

\begin{proof}
  Consider the graph to be a long cycle with $2n$ nodes. For ease of  analysis, number the nodes 0 through $2n-1$ with the node numbered
  $i$ have its neighbors $(i-1)\mod 2n$ and $(i+1)\mod 2n$. The
  optimal social welfare is $4n$ obtained by coloring all even
  numbered nodes with red and the rest with blue.

  Consider the sequence of nodes where nodes arrive in the increasing
  order $0$ through $2n-1$. We claim that through a series of
  undominated strategy plays on part of each player, the coloring
  where node $2n-1$ is colored red and the rest colored blue is
  achievable. Note that this coloring gives a social welfare of 4.

  We now prove our claim. It is an undominated strategy for node 0 to
  choose the color blue. Node 1 sees one of its neighbors colored blue
  and the other uncolored. It is an undominated strategy for node 1 to
  choose color blue as well. This continues and each node until node
  $2n-1$ is colored blue. Node $2n-1$ has both its neighbors colored
  blue, and so the only undominated strategy for her is to play red.
\end{proof}

Given the previous result, we focus our attention on greedy strategies. With greedy strategies and perfect counters, the competitive ratio is constant, shown by~\citet{Leme:2012}. We show that, with privacy-preserving counters, it is possible to compare the social welfare of greedy to that of $OPT$.

\begin{theorem}\citep{Leme:2012}
\label{thm:cut-game-greedy-full}
With perfect counters and greedy strategies, the competitive ratio against the optimal social welfare is at most $2$. 
\end{theorem}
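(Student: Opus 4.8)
The plan is to recognize this as the textbook one-pass greedy analysis for \textsc{Max-Cut}. First I would rewrite social welfare in terms of cut edges: for any coloring $a$ of the $n$ nodes, a bichromatic (``cut'') edge contributes $1$ to the utility of each of its two endpoints and a monochromatic edge contributes $0$, so $SW(a) = 2\,|\mathrm{cut}(a)|$ where $\mathrm{cut}(a)\subseteq E$ is the set of cut edges. In particular, since any coloring cuts at most $|E|$ edges, $OPT \le 2|E|$. Hence it suffices to show that greedy play with perfect counters yields a coloring cutting at least $|E|/2$ of the edges.

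To lower-bound the greedy cut I would charge each edge to its later-arriving endpoint. For a node $v$, let $E_v$ be the set of edges $\{v,u\}$ with $u$ arriving before $v$; then $\{E_v\}_v$ partitions $E$, and $|E_v|$ is exactly the number of already-colored neighbors of $v$ at the moment $v$ plays. When $v$ arrives she is shown the \emph{exact} counts $r$ and $b$ of her neighbors already colored red and blue (perfect counters), with $r+b = |E_v|$, and greedily chooses the color disagreeing with the larger count; this cuts $\max(r,b)\ge |E_v|/2$ of the edges in $E_v$, and since $v$'s choice is permanent and each edge in $E_v$ has $v$ as its ``deciding'' (later) endpoint, no edge is counted twice. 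Summing over all players gives $|\mathrm{cut}(a)| \ge \sum_v |E_v|/2 = |E|/2$, hence $SW(a) \ge |E| \ge OPT/2$, i.e.\ $\comp_\greedy(\full, g) \le 2$.

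The argument is essentially routine; the one point that needs care is making the charging exact --- each edge is counted once, at its later endpoint; greedy only ever compares against neighbors that have already chosen a color, so edges to players who move afterward are accounted for later when those players apply the same majority rule; and the factor $2$ linking cut size to social welfare must be applied consistently on both the greedy side and the $OPT$ side. Nothing about the counters beyond exactness is used here; the privacy-preserving relaxation is handled separately.
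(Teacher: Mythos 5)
Your proof is correct and follows essentially the same route as the paper's: charge each edge to its later-arriving endpoint, observe that the majority rule cuts at least half of the edges charged to each player, and compare against the trivial upper bound on $OPT$. Your bookkeeping of the factor of $2$ relating cut size to social welfare (each cut edge counted at both endpoints, so $OPT \le 2|E|$ and $SW(a) = 2|\mathrm{cut}(a)| \ge |E|$) is in fact slightly more careful than the paper's, which elides these two compensating factors.
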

\begin{proof}
  Consider the choice made by player $t$ when it is her turn to
  play. Let $C_{t}$ be the number of neighbors of player $t$ that have
  adopted a color by the time it her turn to play. Notice that the
  total number of edges in the graph is $\sum_{t}C_{t}$. Furthermore,
  the greedy strategy ensures that player $t$ gets value at least
  $C_{t}/2$. Since the number of edges in the graph is an upper bound
  on the optimal social welfare, hence we have the greedy strategy
  achieving a competitive ratio of $2$.
\end{proof}

Now, we compare the performance of greedy w.r.t. to approximate counters to $OPT$.

\begin{theorem}\label{thm:cut-game-greedy-private}
  With $(\mult, \add, \pfail)$-counter vector and greedy strategies, with probability at least $1-\pfail$, the social welfare is at least $\frac{OPT}{2\mult^{2}} - \frac{2\add}{\mult}n$.
\end{theorem}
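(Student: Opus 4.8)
\textbf{Proof proposal for Theorem~\ref{thm:cut-game-greedy-private}.}
The plan is to run the argument behind Theorem~\ref{thm:cut-game-greedy-full} (perfect counters), but to track how the multiplicative factor \mult and additive slack \add degrade each player's guaranteed gain. First I would condition on the ``good event,'' which by Definition~\ref{def:accuracy} has probability at least $1-\pfail$ and simultaneously covers every counter at every time step: on this event every displayed neighbor-count $\tilde c$ obeys $\tfrac{c}{\mult}-\add \le \tilde c \le \mult c + \add$, where $c$ is the true count. No union bound over players or steps is needed, since Definition~\ref{def:accuracy} already bundles all counters into one event. All of the following is argued on the good event, and the final probabilistic statement is then immediate.

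Fix the adversarial arrival order. For player $t$ let $p_t$ (resp.\ $q_t$) be the true number of her already-colored neighbors that are red (resp.\ blue), and put $C_t = p_t + q_t$. Summing over all players, $\sum_t C_t = E$, the number of edges, since every edge is counted exactly once, at its later endpoint. A greedy player $t$ colors herself with whichever color has the smaller displayed neighbor-count; say she sees $\tilde p_t \le \tilde q_t$ and picks red (the reverse case is symmetric). Her realized gain from the neighbors already colored is then $q_t$, and the accuracy bounds give $\tfrac{p_t}{\mult}-\add \le \tilde p_t \le \tilde q_t \le \mult q_t + \add$, hence $p_t \le \mult^2 q_t + 2\mult\add$. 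Consequently $C_t = p_t + q_t \le (\mult^2+1)q_t + 2\mult\add \le 2\mult^2 q_t + 2\mult\add$ (using $\mult \ge 1$), which rearranges to $q_t \ge \tfrac{C_t}{2\mult^2} - \tfrac{\add}{\mult}$. So every player gains at least $\tfrac{C_t}{2\mult^2}-\tfrac{\add}{\mult}$ from the neighbors colored before her (the bound is vacuous and still true when $C_t = 0$).

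Now I would aggregate. The greedy social welfare equals twice the number of bichromatic edges it produces, and that number equals $\sum_t(\text{number of already-colored neighbors of }t\text{ with the color opposite to }t\text{'s choice})$, since each bichromatic edge is charged once, to its later endpoint. Combining with the per-player bound, $SW \ge 2\sum_t\big(\tfrac{C_t}{2\mult^2}-\tfrac{\add}{\mult}\big) = \tfrac{E}{\mult^2} - \tfrac{2\add}{\mult}n$. Finally, any coloring makes at most $E$ edges bichromatic, so $OPT \le 2E$, i.e.\ $E \ge OPT/2$; substituting yields $SW \ge \tfrac{OPT}{2\mult^2} - \tfrac{2\add}{\mult}n$ on the good event, which is the claim.

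There is no deep obstacle here; the two points that require care are (i) the constant --- one must use that social welfare double-counts each bichromatic edge (exactly as in the perfect-counter proof) to land on $\tfrac{1}{2\mult^2}$ rather than $\tfrac{1}{4\mult^2}$ --- and (ii) the orientation of the two counter inequalities: the lower accuracy bound is applied to the displayed count of the chosen color and the upper bound to that of the rejected color, so that a small displayed count genuinely certifies a not-too-large true count.
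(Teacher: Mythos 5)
Your proof is correct and follows essentially the same route as the paper's: bound each player's gain from already-colored neighbors via the accuracy guarantee applied to the two displayed counts (lower bound on the chosen color, upper bound on the rejected one), then aggregate exactly as in the perfect-counter analysis. The only difference is bookkeeping --- the paper case-splits on whether the player picks the truly-smaller or truly-larger color and charges $2\add/\mult$ per player, while you charge $\add/\mult$ per player and recover the factor of two by explicitly double-counting bichromatic edges in $SW$; both land on the same bound, and your accounting of the $OPT \le 2E$ step is if anything more careful than the paper's.
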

\begin{proof}
  Let us analyze the play made by player $t$ when it is her turn to
  play. Let $R_{t}$ and $B_{t}$ be the true counts of red and blue
  neighbors of $t$ at that time, and without loss of generality let
  $R_{t}\ge B_{t}$. Either the player chooses the blue color and this
  guarantees her utility of $C_{t}/2$, where $C_{t} = R_{t} +
  B_{t}$. On the other hand, if the player were to choose the color
  red, it must be the case that the displayed value of the blue
  counter is at least the displayed value of the red counter. For this
  to be true, it must be the case that $\alpha B_{t} + \beta \ge
  R_{t}/\alpha - \beta$, and therefore $B_{t} \ge R_{t}/\alpha^{2} -
  2\beta/\alpha \ge C_{t}/(2\alpha^{2}) - 2\beta/\alpha$. Hence, in
  either case, the player achieves utility of at least
  $C_{t}/(2\alpha^{2}) - 2\beta/\alpha$.

  Following the analysis used in the proof of
  Theorem~\ref{thm:cut-game-greedy-full}, we have the result.
\end{proof}

\subsection{Cost sharing games}\label{section:cost-sharing}
A cost sharing game is defined as follows. $n$ players each have to
choose one of the $m$ sets. There is an underlying bipartite graph
between the players and the sets, and a player can choose only one
among those sets that she is adjacent to (i.e., she shares an edge
with). Moreover, every set $i$ has a cost $c_{i}$ and the cost to a
player is the cost of the set she chooses divided by the number of
players who chose that set i.e., each of the players who choose a
particular set share its cost equally. Each player would like to
minimize her cost; the social welfare is the sum of costs of the
players, which is equal to the sum of the costs of the sets chosen by
various players.

\citet{Leme:2012} prove that the sequential price of anarchy is
$O(\log(n))$.  Our work uses counters to publicly display an
estimate of the number of the players who have selected that set so
far. With perfect counters, this estimate is always exact.
Unfortunately, greedy strategies can perform poorly in this setting,
even with exact counters.

\begin{lemma}
\label{lem:cost-sharing-perfect-greedy}
With perfect counters and greedy strategies, the competitive ratio is $n$.
\end{lemma}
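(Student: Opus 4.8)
The plan is to prove the claim in two halves: an upper bound showing greedy is always within a factor $n$ of $OPT$, and a matching family of instances on which greedy is (essentially) a factor $n$ worse. The whole argument is elementary; the only care needed is in the bookkeeping for the lower bound.

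For the upper bound I would first observe that a player's marginal contribution to the social cost is $c_i$ exactly when she is the \emph{first} player to select set $i$, and $0$ otherwise (later players only redistribute an already-incurred cost). Hence the social welfare produced by greedy equals $\sum_{i\ \text{opened}} c_i$, a sum of at most $n$ terms. Next I would show each opened set costs at most $OPT$: if player $j$ is the first to pick set $i$, then at that moment the displayed (true) count of $i$ is $0$, so greedy minimization of $c_{i'}/(x_{i'}+1)$ over her neighboring sets forces $c_i \le c_{i'}/(x_{i'}+1) \le c_{i'}$ for every set $i'$ adjacent to $j$, i.e. $c_i \le \min_{i'\sim j} c_{i'}$. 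On the other hand, in the optimal assignment player $j$ sits on some set adjacent to her, whose cost is at least $\min_{i'\sim j} c_{i'}$ and at most the total cost of that assignment, so $\min_{i'\sim j} c_{i'}\le OPT$. Chaining these gives $c_i\le OPT$ for every opened set, and summing over the $\le n$ opened sets yields social welfare $\le n\cdot OPT$.

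For the lower bound I would use a ``shared versus private'' instance: $n$ players arriving in order $1,\dots,n$; one shared set $S$ with $c(S)=1$ adjacent to all players; and for each $i$ a private set $P_i$ with $c(P_i)=1-\epsilon$ adjacent only to player $i$. By induction on arrival order, the counter for $S$ stays at $0$ throughout (no earlier player ever chose it), so when player $i$ arrives her perceived costs are $1$ on $S$ and $1-\epsilon$ on $P_i$, and greedily she takes $P_i$. The greedy run therefore has social welfare $n(1-\epsilon)$, while putting every player on $S$ gives $OPT=1$, so the ratio is $n(1-\epsilon)$; letting $\epsilon\to 0$ (or instead setting $c(P_i)=1$ and breaking the resulting tie toward $P_i$) makes the ratio exactly $n$, matching the upper bound.

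There is no genuine obstacle here. The only points requiring attention are: maintaining the inductive invariant in the lower bound that $S$ remains empty (true since every player weakly, or strictly, prefers her own private set and is permitted to choose it), and deciding whether to phrase the bound as ``$=n$'' via adversarial tie-breaking or as ``arbitrarily close to $n$'' via the $\epsilon$-perturbation — both are routine. The upper bound is immediate once one notes that greedy welfare is a sum of at most $n$ set-opening costs, each dominated by $OPT$.
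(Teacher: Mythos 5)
Your proof is correct, and your lower-bound instance is essentially the paper's own (the paper perturbs the public set to cost $1+\epsilon$ rather than the private sets to $1-\epsilon$; the two are interchangeable). Where you genuinely diverge is in the upper bound. The paper charges \emph{per player}: each player $i$'s realized cost share $c_{s_i}/l(s_i)$ is at most the cost $c_{\hat{s}_i}$ of her set in the optimal allocation, so the greedy cost is at most $\sum_i c_{\hat{s}_i} = \sum_{j\in J} q_j c_j \le n\cdot OPT$, the factor $n$ arising because a single optimal set can absorb up to $n$ of these charges. You instead charge \emph{per opened set}: the greedy cost is $\sum_{i\ \mathrm{opened}} c_i$, and the first player to open set $i$ certifies $c_i \le \min_{i'\sim j} c_{i'} \le OPT$, so the factor $n$ arises from there being at most $n$ opened sets, each charged the full $OPT$. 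Both are elementary and tight on the same example; your version isolates the clean fact that every set greedy opens is individually no more expensive than the entire optimal solution, while the paper's version is the one that generalizes more directly to the approximate-counter analysis elsewhere in that section (where the per-player greedy inequality is what degrades gracefully under $(\alpha,\beta)$ noise).
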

\begin{proof}
  We first show that the competitive ratio is at most $n$. Let
  $\hat{s}_{i}$ be the set that $i$ should choose in the optimal
  allocation, and let $s_{i}$ she chose. Also, let $l(s_{i})$ be the
  number of player who chose set $s_{i}$. Greedy strategy dictates
  that it must be the case that $c_{s_{i}}/l(s_{i}) \le
  c_{\hat{s}_{i}}$. Summing over all players $i$, we have the total
  cost of the allocation produced by the mechanism is
  $\sum_{i=1}^{n}c_{s_{i}}/l(s_{i}) \le
  \sum_{i=1}^{n}c_{\hat{s}_{i}}$, and this is equal to $\sum_{j \in
    J}q(j)c_{j}$, where $J$ is the collection of sets picked in the
  optimal allocation and $q_{j}$ is the number of players allocated to
  set $j$. Since the optimal cost is $\sum_{j \in J}c_{j}$ and $q_{j}
  \le n$, we have the competitive ratio is at most $n$.

  We now show that the competitive ratio is at least $n$. Consider the
  case where there is a public set $s$ that is adjacent to all the
  players and has cost $1+\epsilon$ (for any small $\epsilon > 0$). In
  addition, there are $n$ private sets $s_{1}, \cdots, s_{n}$ with set
  $s_{i}$ having cost 1 and adjacent only to player $i$. In the
  sequential game play, with greedy strategies and perfect counters
  (indicating the number of players who have chosen a particular set
  so far in the game), each player will choose her private set since
  that will have cost 1 as opposed to $1+\epsilon$ for the public
  set. This gives a total cost of $n$. The optimal solution is to pick
  the public set with a total cost of $1+\epsilon$.
\end{proof}

In light of Lemma~\ref{lem:cost-sharing-perfect-greedy}, the greedy
strategy with respect to approximate counters should not perform well
with respect to $OPT$. However, we do show that there are instances in
which greedy with respect to these approximate counters can be better
than greedy with respect to perfect counters. The example we use is
the same as in Lemma~\ref{lem:cost-sharing-perfect-greedy}, and is
also to the example showing the price of anarchy for cost-sharing is
$\Omega(n)$. Proposition~\ref{prop:private-better-than-perfect} and
the exponential improvement of the sequential price of anarchy over
the simultaneous price of anarchy~\citep{Leme:2012} suggest the
instability of this equilibrium.

\begin{prop}
\label{prop:private-better-than-perfect}
In certain instances of cost sharing with greedy strategies, the
competitive ratio using privacy-preserving counters is better than
using perfect counters.
\end{prop}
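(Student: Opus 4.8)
The plan is to reuse the hard instance from Lemma~\ref{lem:cost-sharing-perfect-greedy}: a public set $s$ of cost $1+\epsilon$ adjacent to all $n$ players and private sets $s_1,\dots,s_n$ of cost $1$ with $s_i$ adjacent only to player~$i$, where $OPT=1+\epsilon$ and, by that lemma, greedy with \emph{perfect} counters deterministically produces the all-private profile of total cost $n$ (competitive ratio $\tfrac{n}{1+\epsilon}$). I would show that with a privacy-preserving counter there is strictly positive probability that greedy instead reaches the all-public profile of cost exactly $OPT$, and that --- for a sufficiently small cost gap $\epsilon$ --- the resulting expected cost is strictly below $n$, so the competitive ratio strictly improves.

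Instantiate the announcement mechanism with the monotone, integral counter vector of Proposition~\ref{prop:monotone} built from $m$ independent per-resource copies of TreeSum (Lemma~\ref{lemma:treesum}); each reported count starts at $0$, is nondecreasing, and rises by at most $1$ per step. The crucial observation is a \emph{cascade}: if when player~$j$ arrives the reported count of $s$ is at least $1$ while the reported count of $s_j$ is $0$, her perceived share on $s$ is at most $\tfrac{1+\epsilon}{2}<1$, strictly below her perceived share $1$ on $s_j$, so greedy makes her pick $s$; and once any player picks $s$, monotonicity keeps the reported count of $s$ at least $1$ forever after. Hence on the event $E$ that $s$'s reported count reaches $1$ at the first step and every private counter stays at $0$ until its owner moves, induction over the (adversarial) arrival order gives that \emph{every} player picks $s$, so the realized total cost equals $1+\epsilon=OPT$. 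Because the per-resource TreeSum copies use independent noise, $E$ is an intersection of independent per-resource events: for $s$ it suffices that the noisy estimate exceed $0$ at the first query (bumping the monotone report to $1$), and for each $s_i$ it suffices that all of the underlying Laplace terms be nonpositive (so the noisy estimate, equal to the true count $0$ plus nonpositive noise, never exceeds the report $0$); since Laplace noise is continuous, each such event has strictly positive probability, so $p:=\prob[E]>0$ --- a constant depending on $n,m$ and the privacy parameters but not on $\epsilon$.

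To finish, note that off $E$ the realized total cost is at most $(1+\epsilon)+(n-1)=n+\epsilon$ (at most one copy of $s$ is paid for, alongside at most $n-1$ private sets), so the expected total cost is at most $p(1+\epsilon)+(1-p)(n+\epsilon)$, which is strictly less than $n$ as soon as $\epsilon<p(n-1)$ --- and the cost gap $\epsilon>0$ in this instance is ours to choose. Since $OPT=1+\epsilon$ is identical in both cases, the expected competitive ratio under the privacy-preserving counter is strictly below the perfect-counter ratio $\tfrac{n}{1+\epsilon}$. The main obstacle is the middle step: establishing that the coordinating event $E$ has positive probability while simultaneously checking that, conditioned on $E$, the mutually dependent greedy best responses and counter outputs really do settle on the all-public profile (a short fixed-point/induction argument); bounding the complement is then routine, and its residual loss is absorbed by shrinking $\epsilon$.
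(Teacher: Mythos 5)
Your proof is correct, but it takes a genuinely different route from the paper's. You both start from the same instance (public set of cost $1+\epsilon$, $n$ private sets of cost $1$, perfect-counter greedy deterministically paying $n$), and you both exploit the same cascade: once the public counter's \emph{displayed} value exceeds the private counters' displayed values, greedy players pile onto the public set. The difference is in how the cascade is triggered. You use an off-the-shelf monotone TreeSum counter and condition on a positive-probability ``lucky noise'' event $E$ (the public counter's first report jumps to $1$, every private counter's Laplace terms stay nonpositive); independence of the per-resource noise and continuity of the Laplace distribution give $\prob[E]>0$, and since the noise is drawn independently of the (action-dependent) stream, your induction on $E$ is sound despite the feedback loop you flag. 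This yields an expected cost of $n+\epsilon-p(n-1)<n$, a strict but potentially minuscule improvement (your $p$ can be exponentially small in $n$). The paper instead \emph{designs} the counter to force the cascade with high probability: for the first $c=\Theta(\beta)$ players each counter displays an independent uniform value in $[0,c]$, so in expectation $c/2$ of them join the public set; a Chernoff bound then guarantees the true public count exceeds the additive error of the subsequent TreeSum phase, after which the public counter's display strictly dominates every private counter's with probability $1-\pfail-e^{-c/16}$, and \emph{all} remaining players choose the public set. That buys a total cost of $1+\epsilon+c=O(\beta)$ versus $n$ --- a with-high-probability, polylogarithmic-versus-linear separation --- whereas your argument only certifies a strict improvement in expectation. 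Both suffice for the literal statement of the proposition, but the paper's construction is the one that makes the ``noise can actively help coordination'' message quantitatively meaningful.
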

\begin{proof}
  Consider the same instance as in
  Lemma~\ref{lem:cost-sharing-perfect-greedy}. There is a public set
  that is adjacent to all the players and has cost $1+\epsilon$. In
  addition, there is a private set for each player that is adjacent to
  only that player. Each private set has cost 1. The number of players
  is $n$ and the number of sets is $m=n+1$.

  Consider the following construction of the counter vector (here
  $p=1$, $q=O(\log(n)\log(n^{2}m)/\epsilon)$, $r=1/n$ and
  $c=8(p^{2}+2p q)$). For the initial sequence of $c$ players, for
  each player $i\in [c]$, for each counter, a uniformly randomly
  chosen number in the range $[0, c]$ (drawn independently for each
  counter) is displayed. Starting with the $(c+1)$st player, each
  counter in the counter vector displays the value according to
  $(p,q,r)$--Tree-sum based construction
  (Lemma~\ref{lemma:treesum}). It is easy to verify that the
  construction gives a $(\mult, \add, \pfail)$ counter vector for
  $\mult=p$, $\add=c$ and $\pfail=r$.

  Let $P$ be the counter that corresponds to the public set, and
  $S_{i}$ be the counter for the $i^{th}$ private set in the counter
  vector. Initially, the true value of all the counters is 0. For the
  initial set of $c$ players, for each $i\in [c]$, the probability
  that the displayed value of $P$ is greater than that of $S_{i}$ is
  $1/2$ (since for each player $i \in [c]$, on each counter, a
  uniformly random number drawn independently from the range $[0,c]$
  is displayed).
 
  Hence, in the first $c$ players, the expected number of players for
  who the displayed value of $P$ is greater than the corresponding
  $S_{i}$ is $c/2$, and under greedy strategy, all these players will
  choose the public set. Hence the expected true count of the $P$ at
  the end of the prefix of $c$ players is $c/2$. Using a Chernoff
  bound, the probability the true count of $P$ after the first $c$
  players is smaller than $c/4$ is at most $e^{(-c/16)}$.

  After the initial sequence of $c$ players, the counter values are
  displayed according to the $(p,q,r)$-Tree based construction. By the
  error guarantees, it follows that if the true count for the public
  set is at least $p^{2} + 2p q$ at the end of the initial $c$-length
  sequence, then for the rest of the players, with probability
  $(1-r)$, the displayed value of $P$ is \emph{always strictly
    greater} than the displayed value of every $S_{i}$ (whose true
  value is at most 1 and so the displayed value is at most $p +
  q$). Since $c/4= 2(p^{2} + 2p q)$, we can infer that with
  probability at least $(1-r -e^{(-c/16)})$, all players after the
  initial sequence of length $c$ will choose the public set giving the
  total cost of at most $1+\epsilon + c$. In contrast, with perfect
  counters, the total cost is always $n$
  (Lemma~\ref{lem:cost-sharing-perfect-greedy}).
\end{proof}

\section{Future Independent: Discrete Version}
\label{app:future-ind-discrete}

\greedyfour*

The proof of this Theorem follows from the connection between
future-independent resource-sharing and online vertex-weighted
matching, which we mention below.

\begin{obs}\label{obs:match}
  In the setting where $\|a_i\|_1 = 1$ for all $a_i\in A_i$, for all
  $i$, full-information, discrete resource-sharing reduces to online,
  vertex-weighted bipartite matching.
\end{obs}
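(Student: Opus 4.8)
The plan is to exhibit an explicit, value- and order-preserving correspondence between a full-information discrete resource-sharing instance $g$ and an instance of online vertex-weighted bipartite matching, under which feasible allocations become matchings of equal total value, the optimal social welfare becomes the maximum-weight matching, and greedy play becomes the natural online greedy matching rule (match each arriving vertex to an available neighbor of maximum weight). It then follows immediately that $\comp_\greedy(\full,g)$ equals the worst-case competitive ratio of online greedy on the family of matching instances produced by the reduction, which is what reduces Theorem~\ref{thm:greedy4} to a statement about online matching.

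The construction is precisely the graph $G=(U\cup V,E)$ already used to define $OPT(g)$ in Section~\ref{sec:game-model}: $U=\{u_1,\dots,u_n\}$ with $u_i$ revealed at step $i$ in the same adversarial order; $V$ contains, for every resource $r$ and every rank $k\in[n]$, a vertex $(r,k)$ of weight $\dval{k}{r}$; and $u_i$ is joined to every $(r,k)$ with $r\in A_i$. Since $\|a_i\|_1=1$ and actions are $0/1$, an allocation of $g$ is a function $\sigma$ with $\sigma(i)\in A_i$; routing the players who pick a common resource $r$ onto the copies of $r$ in arrival order turns $\sigma$ into a matching, and conversely every matching of $G$ arises this way. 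The structural fact that makes the values line up is a \emph{prefix invariant}: because $\dvalo{r}$ is non-increasing, in the optimum (and under any ``fill the cheapest free copy of $r$'' rule) the used copies of $r$ form an initial segment $(r,1),\dots,(r,t)$ — a one-line exchange argument, since replacing a used copy $(r,k)$ by an unused $(r,k')$ with $k'<k$ never decreases the weight. Granting this, the vertex a player occupies carries exactly her utility $\dval{\used{i}{r}}{r}$, so $OPT(g)$ equals the maximum-weight matching of $G$.

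For the dynamics, run online greedy on $G$ and induct on the arrival index $i$: if every earlier arrival was assigned the copy of the resource she chose under full-information greedy and the used copies of each resource form a prefix, then for each $r\in A_i$ the maximum-weight available copy of $r$ is $(r,\used{i}{r}+1)$ of weight $\dval{\used{i}{r}+1}{r}$ (non-increasing curves rule out a cheaper free copy and make costlier ones no better), so online greedy assigns $u_i$ to $\argmax_{r\in A_i}\dval{\used{i}{r}+1}{r}$, which is exactly the resource full-information greedy selects, at the same utility, and the prefix invariant persists. Hence the greedy matching and greedy resource-sharing play coincide step for step, their total weights each equal the social welfare, and together with the previous paragraph the reduction is complete. (The ``greedy is dominant-strategy, everything else dominated'' half of Theorem~\ref{thm:greedy4} is separate and immediate: with perfect counters player $i$ knows each $\used{i}{r}$ exactly, so maximizing her utility is by definition the greedy choice.)

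The only genuine content is the prefix invariant and the resulting identification of matching weights with player utilities; the rest is bookkeeping, and the adversarial arrival order transfers verbatim since it controls the same ordering on both sides. I expect this exchange argument to be the main — though routine — obstacle, together with one point that distinguishes these instances from generic online matching and should be flagged: each resource supplies $n$ copies while there are only $n$ players, so no arrival is ever left unmatched, and the reduction lands exactly in the capacitated family of matching instances for which the constant competitive ratio claimed in Theorem~\ref{thm:greedy4} is then obtained.
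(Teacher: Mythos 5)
Your reduction is correct and uses exactly the paper's construction: the bipartite graph with $n$ weighted copies $(r,k)$ of each resource $r$ and edges from each arriving player to every copy of every resource in $A_i$. The paper's own proof consists only of this construction and leaves the verification implicit; your prefix-invariant exchange argument and the induction identifying online greedy matching with full-information greedy play are precisely the omitted bookkeeping, so the two proofs are essentially the same.
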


\begin{proof}
  Construct the following bipartite graph $G = (U, V, E)$ as an
  instance of online vertex-weighted matching from an instance of the
  future-independent resource sharing game. For each resource $r$,
  create $n$ vertices in $V$, one with weight $\dval{t}{r}$ for each
  $t\in [n]$. As players arrive online, they will correspond to
  vertices in $u_i \in U$.  For each $a_i \in A_i$ corresponding to a
  set of resources $S$, $u_i$ is allowed to take any subset of $V$
  with a single copy of each $r\in S$. 
\end{proof}

The proof of the social welfare is quite similar to the one-to-one,
online vertex-weighted matching proof of ~\citep{kvvmatching}, with
the necessary extension for many-to-one matchings (losing a factor of
$1/2$ in the process).

\begin{proof}[Proof of Theorem~\ref{thm:greedy4}]
  Consider any instance of $G = (U, V, E)$, a vertex-weighted
  bipartite graph. Let $\mu$ be the optimal many-to-one matching,
  which can be applied to nodes in both $U$ and $V$ (where $u\in U$
  has potentially multiple neighbors in $V$).  Consider $\mu'$, the
  greedy many-to-one matching for a particular sequence of arrivals
  $\sigma$.

  Consider a particular $u\in U$, and the time it arrives $\sigma(u)$
  as $\mu'$ progresses.  If at least $1/2$ the value of $\mu(u)$ is
  available at that time, then $w(\mu'(u)) \geq \frac{1}{2}w (\mu(u))$
  (since $u$ can be matched to any subset of $\mu(u)$, by the downward
  closed assumption). If not, then $w(\mu'(\mu(u))) \geq \frac{1}{2} w
  (\mu(u))$ (at least half the value was taken by others). Thus, we
  know that, for all $u$,

\[ w(\mu'(u)) + w(\mu'(\mu(u))) \geq \frac{1}{2} w (\mu(u))\]

summing up over all $u$, we get

\begin{align*}
\sum_u w(\mu'(u)) + w(\mu'(\mu(u)))= 2 w(\mu') \geq \frac{1}{2} \sum_u w (\mu(u)) = \frac{1}{2} w(\mu)
\end{align*}

Rearranging shows that $w(\mu') \geq \frac{1}{4} w(\mu)$.

Finally, the utility to a player is clearly greatest when they are
greedy, so that is a dominant strategy (thus implying any non-greedy
strategy is dominated).
\end{proof}

\subsection{Greedy play with more accurate estimates}\label{sec:moreacc}
\begin{obs}\label{obs:greedy-apx}
  Suppose that \mech is a $(\mult, \add, \pfail)$-underestimating counter vector, giving estimates
  $\sused{i}{r}$. Furthermore, assume each player $i$ is playing
  greedily with respect to a revised estimate $z_{i,r}$ such
  that, for each $r,i,$ and value of $z_{i}{r}$ is always in the range $[\sused{i}{r}, \used{i}{r}]$. Then, for $g$, a discrete
  resource-sharing game, with probability $1-\pfail$, the ratio of the optimal to the achieved social welfare is $O(\mult\add)$.
\end{obs}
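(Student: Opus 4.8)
The plan is to reduce this claim to the already-established guarantee for greedy play against a $(\mult,\add)$-underestimator (Theorem~\ref{thm:greedy}), by showing that the revised estimates $z_{i,r}$ themselves behave like the outputs of an underestimating counter with the \emph{same} parameters $(\mult,\add)$. Condition on the $1-\pfail$ event that $\mech$ is accurate, so that $\sused{i}{r}\in[\used{i}{r}/\mult-\add,\ \used{i}{r}]$ for every $i,r$ and every point in time. Since $z_{i,r}\in[\sused{i}{r},\used{i}{r}]$ by hypothesis, we immediately get $z_{i,r}\in[\used{i}{r}/\mult-\add,\ \used{i}{r}]$: it never exceeds the true count (so the induced ``matching'' in the graph $G'$ is still an underestimator, hence its max-weight matching upper-bounds $OPT(g)$), and it never underestimates by more than the $(\mult,\add)$ slack. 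So $z$ satisfies Definition~\ref{def:accuracy} as an underestimator with parameters $(\mult,\add)$.

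The one subtlety is that $z$ need not be \emph{monotone} in time, and the proof of Theorem~\ref{thm:greedy} (via the graph $G'$ and the reduction to greedy matching in Theorem~\ref{thm:greedy4}) uses monotonicity of the counters to argue that the greedy matching in $G'$ tracks greedy play according to the counters. To handle this I would observe that only a mild monotonicity is actually needed: it suffices that when player $i$ plays greedily according to $z_{i,\cdot}$, each resource $r$ she considers has a copy in $G'$ of weight exactly $\dval{z_{i,r}}{r}$ and no strictly heavier unmatched copy. One way to guarantee this is to redefine $G'$ using the \emph{running maximum} $\bar z_{i,r}=\max_{j\le i} z_{j,r}$, which is monotone, still lies in $[\used{i}{r}/\mult-\add,\used{i}{r}]$ (the lower bound because $\bar z_{i,r}\ge z_{i,r}$, the upper bound because each $z_{j,r}\le\used{j}{r}\le\used{i}{r}$), and by non-increasingness of $\dvalo{r}$ gives $\dval{\bar z_{i,r}}{r}\le\dval{z_{i,r}}{r}$, so a player's perceived value only drops — hence the perceived social welfare is at most that of the $\bar z$-matching in $G'$, and the argument of Theorem~\ref{thm:greedy} applies verbatim to $\bar z$. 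Alternatively, one can simply invoke Lemma~\ref{lem:perceived} directly for the resource-wise sums, since that lemma only uses the pointwise lower bound $z_{i,r}\ge\used{i}{r}/\mult-\add$ and non-increasingness, not monotonicity in time.

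Putting the pieces together: by Lemma~\ref{lem:perceived} applied with $z$ in place of $\sused{\cdot}{\cdot}$, the actual social welfare of this play is at least a $\frac{1}{2\mult\add}$-fraction of the perceived social welfare $\sum_i \dval{z_{i,r_i}}{r_i}$; by the matching argument (using $\bar z$ if one wants monotonicity), the perceived social welfare is within a factor $4$ of the max-weight matching in the corresponding $G'$; and since $z$ is an underestimator, that matching weight is at least $OPT(g)$. Chaining these gives actual welfare $\ge OPT(g)/(8\mult\add)$, i.e. competitive ratio $O(\mult\add)$, with probability $1-\pfail$ as claimed. The main obstacle is purely the non-monotonicity bookkeeping in the previous paragraph; the accuracy and approximation estimates themselves are immediate from what is already proved.
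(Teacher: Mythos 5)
Your route is genuinely different from the paper's. The paper's proof is a two-line domination argument: it asserts that, player by player, greedy play under the more accurate estimates $z_{i,r}$ gains at least as much value as greedy play under $\sused{i}{r}$ would have, and then sums and invokes Theorem~\ref{thm:greedy} for the latter. You instead observe that $z$ is itself a $(\mult,\add)$-accurate underestimator (correct, and cleanly argued) and try to rerun the proof of Theorem~\ref{thm:greedy} with $z$ in place of $\sused{i}{r}$. Your flagging of monotonicity is a real observation: the proof of Theorem~\ref{thm:greedy} does use monotonicity of the displayed counts to identify the realized play with the greedy matching in $G'$, the hypothesis on $z$ does not supply it, and neither the statement nor the paper's own proof of the Observation addresses it.

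The gap is in your patch. Replacing $z$ by the running maximum $\bar z_{i,r}=\max_{j\le i}z_{j,r}$ does yield a monotone $(\mult,\add)$-underestimator, but the players choose $\argmax_r \dval{z_{i,r}}{r}$, not $\argmax_r \dval{\bar z_{i,r}}{r}$; since $z$ and $\bar z$ can differ by different amounts on different resources, the realized choice sequence need not coincide with the greedy matching in the $\bar z$-based $G'$, so the inductive correspondence in Theorem~\ref{thm:greedy} (and hence the appeal to Theorem~\ref{thm:greedy4}) does not ``apply verbatim.'' Your fallback---that Lemma~\ref{lem:perceived} uses only the pointwise lower bound---is right, but it covers only the ``actual versus perceived'' half of the chain; the half that compares perceived welfare to $OPT(g)$ is exactly the part that consumed monotonicity. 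To close it, drop the matching detour and charge directly, as in the group-$A$/group-$B$ split in the proof of Theorem~\ref{thm:undominated}: for a player $i$ assigned the $k_i$-th copy of resource $\mu(i)$ in the optimum, either fewer than $k_i$ players have chosen $\mu(i)$ when $i$ arrives, in which case $z_{i,\mu(i)}\le \used{i}{\mu(i)}$ forces $i$'s perceived value to be at least $\dval{k_i}{\mu(i)}$, or else the player who occupied the $k_i$-th slot of $\mu(i)$ already perceived at least that value; the latter charge is injective across pairs $(\mu(i),k_i)$, so perceived welfare is at least $OPT(g)/2$ with no monotonicity assumption, and Lemma~\ref{lem:perceived} finishes the argument. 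With that substitution your proof goes through; as a side benefit, it is arguably more self-contained than the paper's per-player domination claim, which compares two different game trajectories without justifying the coupling.
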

\begin{proof}
  The proof follows from the proof of Theorem~\ref{thm:greedy}, along with the following observation. Since $z_{i,r}$'s is deterministically more accurate than the \counters, we have for each $i$ that the value gained by greedily choosing according to the estimates  $z_{i,r}$ is at least as much as the value gained by greedily choosing using $\sused{i}{r}$. Therefore, summing over all the players, the achieved social welfare is at least as much as it would be if everyone had played greedily according to $\sused{i}{r}$.
\end{proof}

\begin{obs}\label{obs:variance-greedy}
  There exists a resource-sharing game $g$, such that if the players play greedily according to estimates $z_{i,r}$ that are more accurate than the displayed value only in expectation -- specifically for each $r,i,$ and value  of $\used{i}{r}$, $\prob[z_{i,r} < \used{i}{r}] \geq 1/2$ and also  $\exp{|z_{i,r} - \used{i}{r}|} = 1$, then the ratio of the optimal to the achieved social welfare can be as bad as $\Omega\left(\sqrt{n}\right)$.
\end{obs}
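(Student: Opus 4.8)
The plan is to construct a family of resource-sharing games, together with a noise law for the $z_{i,r}$'s satisfying both hypotheses, on which $\greedy$ using the $z_{i,r}$'s wastes almost all players --- even though an ordinary $(1,\beta)$-counter of the same \emph{expected} accuracy would, by Theorem~\ref{thm:greedy}, cost only an $O(\beta)=O(\log^2n)$ factor. The engine is that the hypotheses, though tight on average, are almost vacuous at \emph{small} true counts: $\exp{|z_{i,r}-1|}=1$ is achieved by the deterministic value $z_{i,r}\equiv 0$, so a resource at true count exactly $1$ may be reported unused \emph{with probability $1$}; and, by Markov against $\exp{|z_{i,r}-c|}=1$, a resource at true count $c$ may be reported unused with probability up to $\Theta(1/c)$. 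Hence an adversary can flood the game with resources that are already claimed yet look pristine to all later arrivals.

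I would use $m=\Theta(\sqrt n)$ ``prize'' resources $R_1,\dots,R_m$ with $\dval{1}{R_\ell}=1+\epsilon_\ell$ (where $\epsilon_1>\dots>\epsilon_m>0$ are infinitesimals, present only to make the greedy choice unique and biased toward low indices) and $\dval{k}{R_\ell}=0$ for $k\ge 2$, plus a private fallback $q_i$ for each player $i$ with $\dvalo{q_i}(1)=1$; every one of the $n$ players may use any $R_\ell$ and her own $q_i$. Then $OPT=\Theta(n)$ --- send every player to her private fallback. For the noise I would set $z_{i,r}=0$ with the largest probability the constraints permit whenever $\used{i}{r}\ge 1$ ($1$ if $\used{i}{r}=1$, $\tfrac12$ if $\used{i}{r}=2$, $\approx\tfrac{1}{2(c-1)}$ if $\used{i}{r}=c\ge3$), and place the rest of the mass as a tiny symmetric perturbation about the true count so that $\prob[z_{i,r}<\used{i}{r}]\ge\tfrac12$ still holds everywhere. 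The induction on the arrival sequence should show that a $1-o(1)$ fraction of arrivals perceive some $R_\ell$ as unused --- hence, playing greedily, play an $R_\ell$ --- and, since only $m$ ``first uses'' of prizes exist, all but $m+o(n)$ players play a prize that is in fact already claimed and obtain value $0$; thus the achieved welfare is $O(\sqrt n)$ and $\comp_{\greedy}=\Omega(\sqrt n)$.

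I expect the heart of the argument to be the dynamical claim that only $o(n)$ arrivals ever ``escape'' to a private fallback. Since escaping means \emph{every} $R_\ell$ is perceived as used, and an $R_\ell$ at count $c$ escapes that perception with probability $\Theta(1/c)$, an arrival facing count vector $(c_1,\dots,c_m)$ escapes with probability $\prod_\ell\bigl(1-\Theta(1/c_\ell)\bigr)$. Once the transient is over the counts sit near their average $\Theta(n/m)=\Theta(\sqrt n)$, making this product $e^{-\Theta(m/\sqrt n)}$; this is only $n^{-\Theta(1)}$ --- small enough that $o(n)$ players escape in total --- when $m$ is pushed up to $\Theta(\sqrt{n\log n})$, which is why the prize count is $\tilde\Theta(\sqrt n)$ rather than exactly $\sqrt n$. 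Turning this into a proof needs a drift/concentration argument for the count vector, including a check that the counts never pile up on a few resources (which would let too many arrivals escape). The remaining steps are routine: verifying the noise law meets $\prob[z_{i,r}<\used{i}{r}]\ge\tfrac12$ and $\exp{|z_{i,r}-\used{i}{r}|}=1$ at every count, and verifying that on the greedy path no player ever strictly prefers her private fallback to the prize she is steered onto (which is exactly why each prize is worth infinitesimally more than every fallback).
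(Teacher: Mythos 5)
Your construction is essentially the paper's: a batch of ``prize'' resources worth slightly more than the fallbacks to their first user and $0$ thereafter, one private fallback per player, and one\-/sided noise exploiting the fact that an estimate with expected absolute error $1$ may report a count-$c$ resource as unused with probability up to $1/c$. The paper instantiates the noise differently --- exact with probability $1-1/\sqrt n$ and low by $\sqrt n$ with probability $1/\sqrt n$ --- which makes the trap probability a flat $1/\sqrt n$ for every count in $[1,\sqrt n]$ and thereby sidesteps the drift/concentration analysis of the count vector that you correctly anticipate needing for your count-dependent $\Theta(1/c)$ law. You have also put your finger on the genuine weak point of the $m=\Theta(\sqrt n)$ version: a constant fraction (about $1/e$) of players see every prize as used, escape to a fallback worth $\approx 1$, and collect $\Theta(n)$ welfare, so that version only yields a constant ratio. (The paper's own two-line argument simply drops these escapees when it asserts the achieved welfare is $\sqrt n\, H$.)

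The gap is in your repair. Once you push the number of prizes up to $m=\Theta(\sqrt{n\log n})$ so that the per-player escape probability $\prod_\ell\bigl(1-\Theta(1/c_\ell)\bigr)=e^{-\Theta(m^2/n)}$ falls to $n^{-1/2}$, the prizes themselves become a non-negligible welfare source: whenever a prize is opened, its first user collects $1+\epsilon_\ell\approx 1$, and no player ever escapes to a fallback while an unopened prize remains (an unopened prize is perceived as strictly better than the fallback), so either $\Omega(m)$ prizes get opened --- giving achieved welfare $\Omega(\sqrt{n\log n})$ --- or almost no one escapes and you must account for why only $o(m)$ prizes absorb all $n$ players. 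In the first (generic) case the achieved welfare is $\Theta(\sqrt{n\log n})$, not $O(\sqrt n)$ as you claim, and the ratio you obtain is $\Omega(\sqrt{n/\log n})$, a $\sqrt{\log n}$ factor short of the statement. This loss is intrinsic to the gadget rather than to your bookkeeping: Markov's inequality caps the probability that a count-$c$ prize looks free at $1/c$, the counts sum to at most $n$, so forcing the escape probability down to $n^{-1/2}$ forces $\sum_\ell 1/c_\ell=\Omega(\log n)$ and hence $m=\Omega(\sqrt{n\log n})$ prizes, each of which pays its first user in full. To recover the clean $\Omega(\sqrt n)$ you would need either a genuinely different construction or to argue (as the paper implicitly and without justification does) that the escapees' contribution can be discarded; as it stands, neither your argument nor the paper's establishes more than $\Omega(\sqrt{n/\log n})$ (and the paper's, read literally, establishes only $\Omega(1)$). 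Separately, the dynamical step you flag --- that the counts neither pile up nor spread so thin that $\sum_\ell 1/c_\ell$ drops below $\Omega(\log n)$ mid-game --- is real unfinished work, not routine: under natural tie-breaking the counts evolve like $\sqrt t$ rather than sitting at $n/m$, which changes the escape probability by more than a constant.
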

\begin{proof}
  Let there be $n + \sqrt{n}$ resources, with resources $r*_{1,
    \ldots, \sqrt{n}}$ having $\dval{0}{r*_f} = H$, $\dval{t}{r*_f} =
  0$ for all $t > 0$, and resource $r_i$ such that $\dval{t}{r_i} =
  H-\epsilon$ for all $t$. Player $i$ has access to all resources
  $r*_f$ and $r_i$. Then, $OPT = H\sqrt{n} + (H-\epsilon)(n -
  \sqrt{n}) = H n - (n - \sqrt{n})\epsilon$. 

  Consider the counter vector which is exactly correct with
  probability $1-\frac{1}{\sqrt{n}}$ and undercounts by $\sqrt{n}$
  with probability $\frac{1}{\sqrt{n}}$ (note that the expected error
  is just $1$ and it undercounts with probability 1). Then, greedy
  behavior with respect to this counter will (in expectation) have
  $\sqrt{n}$ players choose $r*_f$ for each $f$, achieving welfare
  $\sqrt{n}H$. Thus, the competitive ratio is $\Omega(\sqrt{n})$ as
  $\epsilon\to 0$, as desired.
\end{proof}

\subsection{Undominated strategic play with Empty Counters: Lower
  bounds}\label{section:noinfo}
\noinfo*

\begin{proof}
  Let $g$ be the following game. For each player $i$, there is a resource $r_i$ such that $\val{1}{r_i} = H$ but $\val{> 1}{r_i} =
  0$. Furthermore, let there be some other resource $r$ such that
  $\val{1}{r} = 1$. Let $A_i$ contain 2 allowable actions: selecting
  $r_i$ and selecting $r$.

  OPT in this setting would have each player select $r_i$, which has
  $SW(OPT) = n H$. On the other hand, we claim it is undominated for
  each player to select $r$ instead (call this joint action $a$). If
  each player were to have a ``twin'', then $r_i$ could have already
  been selected by another player so that $i$ would get more utility
  from $r$ than $r_i$. Then, this undominated strategy $a$ has $SW(a)
  = n$. Thus, we have a game $g$ for which

\[\comp_\ud(g) \geq \frac{n H}{n} = H\]

which, as $H\to \infty$ is unbounded.
\end{proof}

The negative result above isn't particularly surprising: if there is
some coordination to be done, but there is no coordinator and no
information about the target, all is lost. On the other hand, our
positive result for undominated strategies
(Theorem~\ref{thm:undominated}) in the case of private information
relies on a very particular rate of decay of the resources' value.
Theorem~\ref{thm:noinfospecial} show that, even under this stylized
assumption where all resources' values shrink slowly, a total lack of
information can lead to very poor behaviour in undominated strategies.

\noinfospecial*

\begin{proof}[Proof of Theorem~\ref{thm:noinfospecial}]
  For each player $i$, let $r_i$ be a resource where $\val{1}{r_i} =
  n$ (note that this uniquely determines $\val{c}{i}$ for all
  $c$). Let there be another resource $r$ such that $\val{1}{r} =
  1$. Let each $A_i$ contain all resources. Since
  $\frac{\val{1}{r_i}}{n} = 1$, it is not dominated for player $i$ to
  select $r$. Let $a$ denote the joint strategy where each player
  selects resource $r$. Thus the social welfare attained by this strategy profile is $O(\log(n))$, where as the optimal social welfare is $n^2$, implying that $\comp_\ud\geq \Omega(\frac{n^2}{\log(n)})$.
\end{proof}

\subsection{Omitted proofs for Undominated strategies with Privacy-preserving counters}
\label{section:undominated-independent}
\begin{proof}[Proof of Theorem~\ref{thm:undominated}]
Consider the optimal allocation and let $r_{i}$ and $z_{i}$ denote that the $z_{i}^{th}$ copy of resource $r_{i}$ got allocated to player $i$ under the optimal allocation. Now consider any run of the game under undominated strategic play and based on the run, partition all the players into two groups. Group $A$ consists of players $i$ such that $\used{i}{r_{i}} \le z_{i}$ (i.e., the copy (or a more valuable copy) of the resource that was allocated to player $i$ was present when the player arrived) and group $B$ consists of all other players.

For the player in group $B$, the copy of the resource that they received in the optimal allocation was already allocated by the time they arrived in the run of the undominated strategic play. Hence, the total social welfare achieved by the undominated strategic play is at least as much the welfare achieved by group $B$ player under optimal allocation.

Now consider any player $i$ in group $A$. We show that the resource picked by player $i$ under undominated strategic play gets her a reasonable fraction of the value she would have received under optimal allocation.  For any resource $r$, given the displayed counter value of $\sused{i}{r}$, by the guarantees of the $(\alpha, \beta)$-accuracy guarantee of the counters, we directly argue about the possible range of the consistent beliefs or estimates $\hat{\used{i}{r}}$ by which player $i$ can make her choice.
%  conceivable range of $\used{i}{r}$. Let $\hat{\used{i}{r}}$ denote   player $i$'s reasoned possible value of $\used{i}{r}$, consistent   with the announcement $\sused{i}{r}$. Now, consider the current true   count $\used{i}{r}$. We will directly argue about the possible range   of the perceived counts $\hat{\used{i}{r}}$ as a function of the   true count. 

Specifically, by the bounds on $(\mult, \add)$-counters, for a given true value $x$, it must be the case that all announcements $\sused{i}{r}$ satisfy:
\[ \alpha \used{i}{r} + \beta \geq \sused{i}{r} \geq \frac{1}{\alpha}\used{i}{r} - \beta\]

Rearranging, we have $\sused{i}{r} \in [ \frac{1}{\mult}\used{i}{r} -
\add, \mult\used{i}{r} + \add]$. Suppose these bounds are realized; we
wish to upper and lower bound $\hat{\used{i}{r}}$ as a function of
these announcement values. By the quality of the announcement, we have
that $\mult \hat{\used{i}{r}} + \beta \geq \sused{i}{r} \geq
\frac{1}{\mult}\used{i}{r} - \add$.

We can similarly upper bound $\hat{\used{i}{r}}$, e.g.  $ \mult
\used{i}{r} + \beta \geq \sused{i}{r}\geq
\frac{1}{\mult}\hat{\used{i}{r}} - \add$, which, by the fact that the
true count is at least 0, implies $\hat{\used{i}{r}} \in [
\max\{0,\frac{\used{i}{r}}{\mult^2} - \frac{2\add}{\mult}\}, \mult^2
\used{i}{r} + 2\mult\add].$ 

Now, suppose player $i$ chose resource $r'$ which was undominated and not $r_{i}$ which he received in the optimal allocation. Since resource $r'$ is undominated:
\begin{align}
\dval{\hat{\used{i}{r'}}}{r'} \geq \dval{\hat{\used{i}{r_{i}}}}{r_{i}} \label{eqn:undom}
\end{align}

We have
\begin{align}
\dval{\hat{\used{i}{r'}}}{r'} \leq \dval{\max\{0,\frac{\used{i}{r'}}{\mult^2}}{r'} - \frac{2\add}{\mult}\}) \leq \psi(\mult, \add) \dval{\used{i}{r'}}{r'}\label{eqn:upper}
\end{align}
where the first inequality came from the lower bound on the counter, and the fact that the valuations are decreasing, and the second from the assumption about $\dvalo{r}$ on $x$ and its lower bound.
Similarly, we know for each $r$ that
\begin{align}
\dval{\hat{\used{i}{r_{i}}}}{r_{i}} \geq \dval{ \mult^2 \used{i}{r_{i}} + 2\mult\add}{r_{i}} \geq \frac{\dval{\used{i}{r_{i}}}{r_{i}}}{\phi(\mult, \add)} \label{eqn:lower}
\end{align}

Combining the three equations above, we have the actual value received by the player $i$ on choosing resource $r'$, $\dval{\used{i}{r'}}{r'}$ is at least $\frac{1}{ \psi(\mult, \add)\phi(\mult, \add)}$ fraction of the value $\dval{\used{i}{r_{i}}}{r_{i}}$ that he would receive under the optimal allocation. Therefore, by virtue of partition of the players in groups $A$ and $B$, we have that social welfare achieved under undominated strategic play is at least $\frac{1}{1 + \psi(\mult, \add)\phi(\mult, \add)}$ fraction of the optimal social welfare.
%
%
%
%Finally, we can make a charging argument comparing $R$ to the choice $i$ made in $OPT$, which we can call $L$. It is either the case that $\frac{1}{2} \sum_{r\in \hat L} V_r(\used{i}{r})$ is still available by choosing $L$, in which case $R$ is at least that good, or at least $\frac{1}{2} \sum_{r\in \hat L} V_r(\used{i}{r})$ was acheived by earlier players choosing pieces of $L$. Summing up, we have 
%\begin{align}
%\sum_{i\in [n]} \sum_{r\in R_i}\dval{\used{i}{r}}{r} \geq \frac{1}{4} \sum_{r\in L_i}\dval{\used{i}{r}}{r} = \frac{1}{4} OPT.\label{eqn:charge}
%\end{align}
%Combining \ref{eqn:undom},\ref{eqn:upper}, \ref{eqn:lower}, and~\ref{eqn:charge}, we have 
%\[\psi(\mult, \add) \phi(\mult, \add) \sum_{r\in R'} \dval{\used{i}{r}}{r} \geq \sum_{r\in R} \dval{\used{i}{r}}{r} \geq \frac{1}{4} OPT\]\\
%and so we have the desired ratio.
\end{proof}

%%% Local Variables: 
%%% mode: latex
%%% TeX-master: "paper"
%%% End: 

%%% Local Variables: 
%%% mode: latex
%%% TeX-master: "paper"
%%% End: 

\section{Resource Sharing: Continuous version}
\label{section:continuous}
\label{section:cont}
In this section, we allow investments in resources to be non-discrete. The utility of player $i$ in the continuous model is the following:
\[\util{i} = \sum_{r=1}^m\int_{\used{i}{r}}^{\used{i}{r} + \ract{i}{r}}\val{t}{r}dt,\]
where $\used{i}{r} = \sum_{i'=1}^{i-1}\ract{i'}{r}$ is the amount already invested in resource $r$ by earlier players.

In this setting, in order to prove a theorem analogous to Theorem~\ref{thm:greedy} in the discrete setting, we need an analogue to Lemma~\ref{thm:greedy4} that holds in the full-information continuous setting. We no longer have the tight connection between our setting and matching; nonetheless, the fact that the greedy strategy is a $4$-approximation to $OPT$ continues to hold.

\begin{lemma}\label{lem:greedycont}
  The greedy strategy for many-to-one online, continuous,   resource-weighted ``matching'', where players arrive online and have   tuples of allowable volumes of resources, has a competitive ratio of   $\frac{1}{4}$.
\end{lemma}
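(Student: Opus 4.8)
The plan is to reprove the KVV-style bound from the proof of Theorem~\ref{thm:greedy4} directly on the continuous instance, without passing through bipartite matching. The starting point is that, for \emph{any} allocation, the value extracted from a resource $r$ depends only on the \emph{total} amount invested in it: if players invest amounts summing to $X_r$, the value they collectively get from $r$ is $\int_0^{X_r}\dvalo{r}(t)\,dt$, since the successive per-player contributions telescope. So I would fix an optimal allocation $(b_i)_i$ and the greedy allocation $(a_i)_i$, write $x_{i,r}=\sum_{j<i}a_{j,r}$ for the posted usage of $r$ when player $i$ arrives under greedy, and, for each resource $r$, order its OPT-users by arrival time and let $z_{i,r}$ be the amount OPT invests in $r$ strictly before player $i$. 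Then OPT assigns player $i$ the ``slice'' $(z_{i,r},z_{i,r}+b_{i,r}]$ of resource $r$; for fixed $r$ these slices tile $(0,Y_r]$ (with $Y_r$ the total OPT investment in $r$); and, writing $o_i$ and $g_i$ for the values of player $i$ under OPT and greedy, we get $\sum_i o_i=\sum_r\int_0^{Y_r}\dvalo{r} = \mathrm{OPT}$ and $\sum_i g_i=\sum_r\int_0^{X_r}\dvalo{r}=\mathrm{GREEDY}$.

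The heart of the argument is a per-player inequality $g_i\ \ge\ o_i-D_i$, where $D_i:=\sum_r\int_{z_{i,r}}^{\min\{x_{i,r},\,z_{i,r}+b_{i,r}\}}\dvalo{r}$ (with the convention that the $r$-term is $0$ whenever $x_{i,r}\le z_{i,r}$) measures the part of OPT's bundle for player $i$ that greedy has already consumed by the time $i$ moves. To see this, note that when player $i$ arrives under greedy she is free to invest, on each resource $r$ of her OPT bundle, only its currently-available portion, i.e.\ the part of $(z_{i,r},z_{i,r}+b_{i,r}]$ lying above $x_{i,r}$ (this uses that the allowable-volume sets $A_i$ are downward closed, as in the proof of Theorem~\ref{thm:greedy4}). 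Since $\dvalo{r}$ is non-increasing, the value this choice yields from $r$ is at least $\int_{z_{i,r}}^{z_{i,r}+b_{i,r}}\dvalo{r}$ minus the piece of OPT's slice that lies below $x_{i,r}$; summing over $r$, and using that greedy does at least as well as this particular feasible choice, gives $g_i\ge o_i-D_i$.

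Finally I would run the usual dichotomy. If $D_i\le\tfrac12 o_i$ then $g_i\ge o_i-D_i\ge\tfrac12 o_i$; otherwise $o_i\le 2D_i$. Summing over all players, $\mathrm{OPT}=\sum_i o_i\le 2\sum_i g_i+2\sum_i D_i=2\,\mathrm{GREEDY}+2\sum_i D_i$. To conclude it suffices to show $\sum_i D_i\le\mathrm{GREEDY}$: the slice charged to player $i$ on resource $r$ is contained in OPT's slice $(z_{i,r},z_{i,r}+b_{i,r}]$, so across players (for each fixed $r$) these charged slices are pairwise disjoint, and each is contained in $(0,x_{i,r}]\subseteq(0,X_r]$; hence $\sum_i D_i\le\sum_r\int_0^{X_r}\dvalo{r}=\mathrm{GREEDY}$. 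Therefore $\mathrm{OPT}\le 4\,\mathrm{GREEDY}$, which is the claimed competitive ratio $\tfrac14$.

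The step I expect to require the most care is $g_i\ge o_i-D_i$: it depends on the precise structure of the allowable-volume sets $A_i$ (downward-closedness is what lets greedy player $i$ truncate her OPT bundle to its currently available part), and on some measure-theoretic bookkeeping when a player's unit budget is split across several resources and when one fixes a consistent arrival order on the OPT-users of each resource. The rest is the same telescoping-and-dichotomy accounting used in the discrete case.
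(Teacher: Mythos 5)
Your proof is correct and is essentially the paper's own argument: the paper establishes this lemma by carrying the charging/dichotomy from the proof of Theorem~\ref{thm:greedy4} over to ``correspondences between continuous regions'' of each resource, which is exactly what your slice-by-slice $D_i$-accounting makes precise. (In fact your per-player inequality $g_i \ge o_i - D_i$, summed and combined with $\sum_i D_i \le \mathrm{GREEDY}$, already yields $\mathrm{OPT} \le 2\,\mathrm{GREEDY}$, so the closing half/half dichotomy is redundant and only weakens your bound back to the claimed factor of $4$.)
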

\begin{proof}[Proof Sketch]
The proof is identical to the proof of Lemma~\ref{thm:greedy4}, with   the exception that we no longer want matchings $\mu, \mu'$ but   rather correspondences between continuous regions of $v'_r$. See   Figure~\ref{fig:cont} for a visual proof sketch. 
\end{proof}

\begin{figure}[ht!]\label{fig:cont}
 \begin{tikzpicture}
    \tikzset{
        hatch distance/.store in=\hatchdistance,
        hatch distance=10pt,
        hatch thickness/.store in=\hatchthickness,
        hatch thickness=2pt
    }

    \makeatletter
    \pgfdeclarepatternformonly[\hatchdistance,\hatchthickness]{flexible hatch}
    {\pgfqpoint{0pt}{0pt}}
    {\pgfqpoint{\hatchdistance}{\hatchdistance}}
    {\pgfpoint{\hatchdistance-1pt}{\hatchdistance-1pt}}%
    {
        \pgfsetcolor{\tikz@pattern@color}
        \pgfsetlinewidth{\hatchthickness}
        \pgfpathmoveto{\pgfqpoint{0pt}{0pt}}
        \pgfpathlineto{\pgfqpoint{\hatchdistance}{\hatchdistance}}
        \pgfusepath{stroke}
    }

    \begin{axis}[
        xmin=0,xmax=4,
        xlabel={$x_r$},
        ymin=0,ymax=1,
        axis on top,
        legend style={legend cell align=right,legend plot pos=right}] 
    \addplot[color=black,domain=0.03:5,samples=100] {1/(6 * (x+.5))};

    \addlegendentry{$v'_r$}
    \addplot+[mark=none,
        domain= 0.01:2.5,
        samples=100,
        pattern=flexible hatch,
        hatch distance=5pt,
        hatch thickness=0.5pt,
        draw=blue,
        pattern color=cyan,
        area legend]{1/(6 * (x+.5))} \closedcycle;    
        \addlegendentry{Optimal players'regions}
    \addplot+[mark=none,
        domain=0.01:.5,
        samples=100,
        pattern=flexible hatch,
        area legend,
        pattern color=red]{1/(6 * (x+.5))} \closedcycle;
    \addplot+[mark=none,
        domain=.8:1.3,
        samples=100,
        pattern=flexible hatch,
        area legend,
        pattern color=red]{1/(6 * (x+.5))} \closedcycle;
        \addlegendentry{Greedy players' regions}
    \addplot+[mark=none,
        domain=1.8:2.1,
        samples=100,
        pattern=flexible hatch,
        area legend,
        pattern color=red]{1/(6 * (x+.5))} \closedcycle;

    \end{axis}
\end{tikzpicture}
\caption{Suppose the blue regions are those selected by the players
  who got those regions in OPT, and the red regions are those selected
  by some other player. Then, if some greedy player(s) have taken at
  least half of the value of the optimal regions for another player,
  at least that much utility has been gained by the greedy players. If
  not, half the value is still available for the player at hand.}
\end{figure}
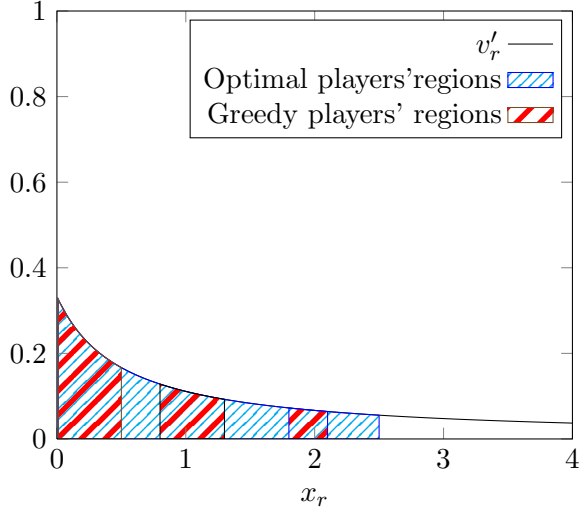

With Lemma~\ref{lem:greedycont}, following analysis similar to  Theorem~\ref{thm:greedy}, we have the following.

\begin{theorem}\label{thm:continuous}
  Suppose that \mech is an $(\mult, \add, \pfail)$-underestimating counter vector. Then, for any continuous,   future-independent resource-sharing game $g$,   $\comp_{\greedy}(\mech,g) = O(\mult \add)$.
\end{theorem}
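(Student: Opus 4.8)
The plan is to transcribe the proof of Theorem~\ref{thm:greedy} to the continuous model, substituting Lemma~\ref{lem:greedycont} for Theorem~\ref{thm:greedy4} and replacing its two discrete ingredients --- the perceived-value bound of Lemma~\ref{lem:perceived} and the auxiliary instance built from the run of the game --- by continuous counterparts. Define the \emph{perceived value} that player $i$ obtains from her investment $\ract{i}{r}$ in resource $r$ to be $\int_{\sused{i}{r}}^{\sused{i}{r}+\ract{i}{r}}\val{t}{r}\,dt$; greedy play with respect to the counters is then exactly water-filling of each player's unit of investment against the perceived marginal-value curves.

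First I would prove a continuous analogue of Lemma~\ref{lem:perceived}: for a monotone $(\mult,\add)$-underestimating counter vector, the total actual welfare $\sum_{i,r}\int_{\used{i}{r}}^{\used{i}{r}+\ract{i}{r}}\val{t}{r}\,dt$ is at least a $\tfrac{1}{c\mult\add}$-fraction of the total perceived welfare, for an absolute constant $c$. Fix a resource $r$ with total invested mass $k_r$; since between two consecutive $r$-investors the true count of $r$ changes only by the earlier player's investment, the true occupied intervals tile $[0,k_r]$ and the actual welfare on $r$ equals $\int_0^{k_r}\val{t}{r}\,dt$. For the perceived side, partition the true-count axis into blocks of length $\mult\add$: a player whose true start lies in block $T$ has displayed count at least $\tfrac{1}{\mult}(T\mult\add)-\add=(T-1)\add$, so by monotonicity of $\val{\cdot}{r}$ her perceived density is at most $\val{\max\{0,(T-1)\add\}}{r}$, while --- because the true starts are consecutive partial sums of investments, each at most $1$ --- the total mass of players starting in any single block is at most $\mult\add+1$. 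Summing this block estimate and comparing it against $\int_0^{k_r}\val{t}{r}\,dt\ge\add\sum_{T\ge1}\val{T\add}{r}$ (integrate over length-$\add$ sub-blocks) gives the claimed ratio, exactly as in the proof of Lemma~\ref{lem:perceived}; the topmost partial block and the players straddling block boundaries add only $O(1)$ per block, and a resource whose curve collapses to (near) $0$ inside the first block is handled directly, since there both the perceived and the actual intervals lie where $\val{\cdot}{r}$ is essentially constant.

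Next, as in the proof of Theorem~\ref{thm:greedy}, I would form an auxiliary resource-weighted matching instance $g'$ on the same players: for each $r$, replace $\val{\cdot}{r}$ by the curve obtained by taking the piece of $\val{\cdot}{r}$ occupied perceptually by the $j$-th $r$-investor --- i.e.\ $\val{\cdot}{r}$ shifted so that her slot begins at perceived level $\sused{i_j}{r}$ --- laying these pieces end to end over $[0,k_r]$, and setting the density to $\val{k_r}{r}$ beyond $k_r$. Since each player's displayed count never exceeds her true count, the resulting density dominates $\val{\cdot}{r}$ pointwise, so $OPT(g')\ge OPT(g)$; and since the counters are monotone underestimators, the perceived density a player sees for $r$ on arrival never exceeds what an earlier $r$-investor saw and never drops below $\val{k_r}{r}$, which is exactly what is needed to argue (by induction on arrivals, as in the discrete proof) that running greedy in $g'$ reproduces the mass-region correspondence of greedy-with-respect-to-counters in $g$, so the greedy welfare in $g'$ equals the perceived welfare. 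Lemma~\ref{lem:greedycont} then gives that the greedy correspondence in $g'$ has weight at least $\tfrac14 OPT(g')\ge\tfrac14 OPT(g)$, and chaining with the perceived-value bound shows the actual welfare of greedy-with-respect-to-counters is at least $\tfrac{1}{4c\mult\add}\,OPT(g)$ on the probability-$(1-\pfail)$ event that the counter guarantees hold, i.e.\ $\comp_{\greedy}(\mech,g)=O(\mult\add)$. (Alternatively one can skip $g'$ and charge directly: each infinitesimal unit of $OPT(g)$ placed on $r$ at level $z$ and assigned to $i$ is charged either to $i$'s greedy choice --- if the displayed count of $r$ is $\le z$ when $i$ arrives, so $i$ perceives density $\ge\val{z}{r}$ --- or to one of the $>z$ units of $r$ truly taken before $i$, each perceived at density $\ge\val{z}{r}$; downward-closedness handles split investments at a factor of $2$, as in Lemma~\ref{lem:greedycont}.)

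The step I expect to be the main obstacle is making the correspondence in $g'$ rigorous: in the continuous model the ``greedy matching in $G'$'' of the discrete proof becomes a correspondence between measurable sub-regions of the resource axes (the regions in Figure~\ref{fig:cont}), the rearranged curve of $g'$ need not be monotone, and one must verify that greedy water-filling against the counters tracks greedy in $g'$ even when a player splits her unit across resources --- and that no player is ``tempted'' in $g'$ by mass beyond $k_r$ that she would not have claimed in $g$, which is exactly where the extension height $\val{k_r}{r}$ and the inequality $\sused{i}{r}\le\used{i}{r}\le k_r$ enter. Phrasing $g'$ as a resource-weighted matching instance, so that Lemma~\ref{lem:greedycont} applies without requiring monotone curves, rather than as a resource-sharing game, is what makes this go through; the perceived-value lemma, by contrast, is a routine continuous restatement of Lemma~\ref{lem:perceived}.
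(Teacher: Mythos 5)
Your overall plan is the one the paper intends (its own ``proof'' is a one-line pointer to Lemma~\ref{lem:greedycont} plus ``analysis similar to Theorem~\ref{thm:greedy}''), and your treatment of the auxiliary instance $g'$ is fine. But the step you set aside as ``a routine continuous restatement of Lemma~\ref{lem:perceived}'' is where the argument actually breaks. In the discrete model, at most $\mult\add$ \emph{players} can have true count in a window of length $\mult\add$, because each player consumes a full unit; your block bound silently replaces ``number of players'' by ``mass,'' and the two come apart exactly when investments are small. Concretely, take $\val{t}{r}=1$ for $t\le\eps$ and $\val{t}{r}=0$ for $t>\eps$, and let $n=\mult\add/\eps$ players each invest $\ract{i}{r}=\eps$ in $r$. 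The all-zeros counter is a legitimate monotone $(\mult,\add)$-underestimator here (the true count never exceeds $\mult\add$, so $0\ge \used{i}{r}/\mult-\add$ throughout), every player perceives $\int_0^{\eps}\val{t}{r}\,dt=\eps$, so the perceived welfare is $n\eps=\mult\add$, while the actual welfare is $\int_0^{n\eps}\val{t}{r}\,dt=\eps$. The perceived-to-actual ratio is $\mult\add/\eps$, unbounded relative to $\mult\add$. Your escape clause --- that a curve collapsing inside the first block is ``handled directly'' because both perceived and actual intervals sit where $\val{\cdot}{r}$ is essentially constant --- is exactly wrong here: the perceived intervals all sit in $[0,\eps]$ where the curve is constant, but the actual intervals extend into the region where it is zero. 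So the first link of your chain, actual $\ge$ perceived$/O(\mult\add)$, fails, and $OPT(g')\ge OPT(g)$ plus Lemma~\ref{lem:greedycont} no longer yields anything about the actual welfare.

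This is not a removable technicality: composing $m$ such ``trap'' resources of width $\eps=1/m$ with one safe private resource of constant density $1/2$ per player, a greedy water-filler facing all-zero trap counters puts her entire unit into the traps, and the realized competitive ratio is $\Theta(\mult\add\, m)$, not $O(\mult\add)$. So either the continuous theorem needs an extra hypothesis (e.g., a lower bound on per-resource investment granularity, or a bound on how many resources one player may split across), or the accounting must change so that a player's over-perception is charged per unit of \emph{invested mass} in a way that also controls how many distinct resources can simultaneously fool a single player. Your alternative ``direct charging'' parenthetical has the same hole: an infinitesimal unit of $OPT$ on $r$ at level $z<\eps$ can be claimed by the perceived slivers of unboundedly many greedy players at once. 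Identifying and repairing this (or adding the missing hypothesis) is the real content of the continuous extension; the $g'$ correspondence you worried about is comparatively benign.
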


\section{Resource Sharing with Future-Dependent utilities}
\label{section:future}
The second model of utility we consider is one where the benefit of
choosing a resource for a player depends not only the actions of the
past players but also on the actions taken by future
players. Specifically, all the players who selected a given resource
incur the same benefit regardless of the order in which they made the
choice. The utility of player $i$,
\[\util{i} = \dval{\w{r}}{r},\]
where $r$ is the resource chosen by player $i$ and $\w{r} =
\sum_{i'=1} ^n \ract{i'}{r}$ is the total utilization of resource $r$
by all players. As a warm-up, we start with the special case of market
sharing in the section below, and then move to the case of more
general value curves.

\subsection{Market sharing}
Market sharing is the special case of $\dval{x_{r}}{r} = c/x_{r}$ for
all $x_{r} \ge 1$. We show that with \emph{greedy} play and
\emph{private} counters, it is possible to achieve a logarithmic
factor approximation to the social welfare, while with
\emph{undominated} strategies and \emph{perfect} counters, one cannot
hope to achieve an approximation that is linear in the number of the
players.

\citet{Goemans:2004} showed that for market-sharing games, the
competitive ratio of $\mult$-approximate greedy play is at most
$O(\mult \log(n))$. Using analysis similar to theirs, we have the
following result.
\begin{corollary}\label{cor:marketlog}
  With $(\mult, \add, \pfail)$-counter vector and greedy play, with
  probability at least $1-\pfail$, the welfare achieved is at least
  $(OPT - 2\add \mult n)/O((1+\mult^{2}) \log(n))$.
\end{corollary}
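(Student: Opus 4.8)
The plan is to obtain the bound from \citet{Goemans:2004}'s analysis of approximately-greedy play in market-sharing games, absorbing the counter error partly as a multiplicative blow-up of the approximation factor and partly as an additive loss in the welfare. First I would condition on the accuracy event, which by the $(\mult,\add,\pfail)$-guarantee holds with probability at least $1-\pfail$: at every step $i$ and for every resource $r$, $\frac{1}{\mult}\used{i}{r}-\add\le\sused{i}{r}\le\mult\used{i}{r}+\add$. Everything that follows is deterministic on this event.

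The first substantive step is to show that a player choosing greedily against the displayed counters behaves like an $O(1+\mult^{2})$-approximately-greedy player (in the sense of \citet{Goemans:2004}), up to an additive count-slack of $O(\mult\add)$. Fix player $i$. Since the market-sharing curve is $\dval{x}{r}=c/x$, maximizing her perceived value $\dval{\sused{i}{r}}{r}$ over $r\in A_{i}$ means selecting a resource $r'$ with (near-)minimal displayed count, whereas a player seeing exact counts would select the resource $r^{\star}\in A_{i}$ minimizing $\used{i}{r}$. From $\sused{i}{r'}\le\sused{i}{r^{\star}}$ together with the lower accuracy bound on $\sused{i}{r'}$ and the upper accuracy bound on $\sused{i}{r^{\star}}$, one gets $\used{i}{r'}\le\mult^{2}\used{i}{r^{\star}}+2\mult\add$, by the same manipulation used in the proof of Theorem~\ref{thm:cut-game-greedy-private}. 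Hence the realized per-capita value $\dval{\used{i}{r'}}{r'}$ is within a factor $O(\mult^{2})$ of $\dval{\used{i}{r^{\star}}}{r^{\star}}$ once $\used{i}{r^{\star}}=\Omega(\mult\add)$, and when it is not, both resources have count $O(\mult\add)$ at that step (the boundary case that forces the ``$1+$'' rather than a bare $\mult^{2}$).

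Next I would re-run \citet{Goemans:2004}'s charging argument, which for the curve $c/x$ reduces the social welfare to a coverage-type quantity and pulls out the $\log n$ factor through a harmonic sum over resources ordered by value, now carrying the $O(1+\mult^{2})$ approximate-greedy factor and the per-player additive count-slack $2\mult\add$ through each inequality. The multiplicative parts compound to $O((1+\mult^{2})\log n)$, while the additive count-slacks, one per player, accumulate into an additive welfare term, yielding $OPT\le O\!\big((1+\mult^{2})\log n\big)\cdot SW+2\add\mult n$, which rearranges to the claimed $SW\ge\big(OPT-2\add\mult n\big)/O((1+\mult^{2})\log n)$. That the game is future-dependent --- the count a player sees is not the final count of the resource she joins --- does not obstruct this: the realized welfare $\sum_{r:\w{r}\ge 1}c_r$ is determined solely by which resources are chosen at least once, and an approximately accurate counter still steers players toward under-used resources, which is exactly what the charging argument exploits.

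The step I expect to be the main obstacle is the propagation of the additive error in the last paragraph. Because $c/x$ is extremely sensitive to additive perturbations of $x$ near $0$, one cannot simply assert ``each player loses $O(\mult\add)$ in value''; instead the count-slack has to be inserted into the charging argument precisely where it compares a resource's count in the greedy run to its count in OPT, and one must argue the accumulated effect on realized welfare is only the additive $2\add\mult n$ rather than a multiplicative factor --- getting the constant to come out as exactly $2$ is where the bookkeeping is delicate. The remaining ingredients, the factor-$O(1+\mult^{2})$ approximate-greedy reduction and the appeal to \citet{Goemans:2004}, are routine.
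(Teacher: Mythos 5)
Your route is the same one the paper takes: the paper's entire argument for Corollary~\ref{cor:marketlog} is the remark that \citet{Goemans:2004} bound the competitive ratio of $\mult$-approximate greedy play in market-sharing games by $O(\mult\log n)$ and that ``analysis similar to theirs'' gives the result, so your plan --- condition on the accuracy event, recast counter-greedy as approximately-greedy play, and push the slack through the charging argument --- is exactly the intended reduction and is in fact more explicit than what the paper writes down. Two points need attention, though. First, your characterization of counter-greedy as ``selecting a resource with (near-)minimal displayed count'' is only valid when all markets have equal value; in general the player maximizes $\dval{\sused{i}{r}+1}{r}=c_r/(\sused{i}{r}+1)$ over $r\in A_i$, and a resource with larger $c_r$ can be chosen despite a larger displayed count (take $c_{r'}=100$, $\sused{i}{r'}=10$ versus $c_{r^\star}=1$, $\sused{i}{r^\star}=0$). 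Hence the count inequality $\used{i}{r'}\le\mult^{2}\used{i}{r^{\star}}+2\mult\add$ you import from Theorem~\ref{thm:cut-game-greedy-private} does not hold here; the approximate-greedy factor must instead be extracted from the ratio inequality $c_{r'}/(\sused{i}{r'}+1)\ge c_{r}/(\sused{i}{r}+1)$ combined with $\sused{i}{r'}\ge\used{i}{r'}/\mult-\add$ and $\sused{i}{r}\le\mult\used{i}{r}+\add$, which is the form Goemans et al.'s framework actually consumes. Second, the step you yourself identify as the obstacle --- showing that the additive count-slack surfaces only as the additive welfare term $2\add\mult n$ rather than as a multiplicative factor of $\add$ (the naive manipulation gives $c_{r'}/(\used{i}{r'}+1)\ge c_{r'}/\bigl(\mult(\add+1)(\sused{i}{r'}+1)\bigr)$, which puts $\add$ in the denominator multiplicatively) --- is the real content of the corollary, and your proposal defers it rather than resolving it. Since the paper also leaves this bookkeeping entirely implicit, this does not put you behind the paper's own proof, but as written your submission is a correct plan with one wrong intermediate inequality and one acknowledged open step, not a complete argument.
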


For undominated strategies, we have the following result.
\begin{lemma}\label{lem:marketundom}
With perfect counters and undominated strategic play, there are games for which the welfare achieved is at most $OPT/(n\log(n))$. 
\end{lemma}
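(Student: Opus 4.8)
The plan is to exhibit, for each $n$, a market-sharing instance together with a profile of undominated strategies whose realized welfare is at most an $O(1/(n\log n))$ fraction of $OPT$. This mirrors the empty-counter lower bound (Theorem~\ref{thm:noinfospecial}), but now a player who picks a resource $r$ learns the \emph{exact} number of \emph{past} users of $r$ and is only free to hold beliefs about \emph{future} users; the leverage for a bad profile comes entirely from beliefs of the form ``every later player who can reach resource $r$ will crowd onto $r$''.

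\textbf{The construction and the accounting.} I would include one cheap \emph{decoy} resource $D$ available to everyone and, for each player $i$, a \emph{target} resource $B_i$ of cost $C_i$ reachable by player $i$ and by a carefully chosen set of later players. The intended bad profile has every player pick $D$. When player $i$ moves she sees $i-1$ past users of $D$, so her optimistic value for $D$ is $\dval{i}{D}=c_D/i$; picking $D$ is undominated provided that for every resource $r\in A_i$ other than $D$ there is a future profile making $r$ no better --- i.e.\ (with $p_r=0$, since in the bad profile nobody else touches the targets) $c_D/i\ge C_r/(1+g_r)$, where $g_r$ is the number of players after $i$ who can reach $r$. Since $\dval{x}{B_i}=C_i/x$, discrediting $B_i$ down to $c_D/i$ needs about $C_i\,i/c_D$ future users of $B_i$, which forces $C_i=O\!\big(c_D(1+g_{B_i})/i\big)=O(c_D\,n/i)$ for the early players. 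Choosing the $C_i$ at this threshold, $OPT\ge\sum_i C_i$ is $\Theta\!\big(c_D\sum_i n/i\big)=\Theta(c_D\,n\log n)$, whereas only $D$ is used in the bad profile so the realized welfare is exactly $c_D$ (market sharing gives $SW=\sum_{r\ \text{used}}c_r$); dividing gives the claimed bound. The $\log n$ is the harmonic sum $\sum_i 1/i$ coming from the shrinking ``discrediting budget'' $c_D/i$, and the extra factor $n$ is the number of future users one may imagine crowding onto a single target.

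\textbf{The main obstacle.} The hard part is making all these beliefs \emph{simultaneously} consistent. A later player used to discredit $B_i$ also plays $D$ in the bad profile, so she in turn must be able to discredit every target she can reach (including $B_i$); but she has fewer remaining players to imagine crowding, and if she is the \emph{last} player able to reach $B_i$ she cannot discredit it at all, which would force $C_i$ to collapse. Resolving this requires choosing the reachability sets of the targets as nested/staircase blocks (possibly stratifying players into ``owners'' and ``crowders'' and using several decoys), and verifying by a Hall-type argument that the crowd needed to discredit one player's whole reachable set of targets can be assigned disjointly among later players. Pinning down the block sizes so that the surviving targets still sum to $\Theta(n\log n)$ while the realized welfare stays $O(1)$ is where essentially all the work goes; computing $OPT$ and the realized welfare themselves is immediate.
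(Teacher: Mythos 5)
Your construction and its accounting are essentially the paper's: one public decoy plus, for each player $i$, a target of value roughly $(n-i+1)/i$ that only she uses in $OPT$, so that the harmonic sum gives $OPT=\Theta(n\log n)$ against realized welfare $O(1)$. The gap is precisely the step you flag as ``where essentially all the work goes'': rationalizing the profile in which everyone plays the decoy. You implicitly require a player's belief to respect the \emph{actual} action sets of the later players, which forces you to make each target $B_i$ genuinely reachable by a crowd of later players and then to argue (via staircase blocks and a Hall-type matching) that those crowders can in turn rationalize playing the decoy. That route is not merely unfinished; it appears to be blocked. Whoever is the \emph{last} player actually able to reach $B_i$ --- say her index is $j$ --- sees zero past users of $B_i$ and can imagine no future ones, so for her $B_i$ is worth its full cost $C_i$ while the decoy is worth at most $c_D/j$; the decoy is therefore strictly dominated for her unless $C_i\le c_D/j$. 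Combined with the Hall condition that $OPT$ itself needs in order to match $n$ distinct players to the $n$ targets, this forces the $k$-th largest admissible $C_i$ to be $O(c_D/k)$, so $\sum_i C_i=O(c_D\log n)$ and the construction can certify at best a ratio of $O(\log n)$, not $\Omega(n\log n)$.

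The paper's proof avoids this entirely because its notion of a consistent belief is weaker than the one you are using: a belief need only be consistent with the \emph{displayed counter values}, not with the (unobserved) types of the other players, and different players' beliefs need not be mutually consistent. In the paper's instance resource $i$ is in reality reachable only by player $i$, yet player $i$ is permitted to believe that all $n-i$ later players are interested only in resource $i$; under that belief resource $i$ is worth $(1-\epsilon)/i$ per user while the shared decoy is worth $1/i$, so the decoy is her unique best response under that belief and in particular undominated. No coordination among beliefs, no overlap structure on the targets, and no Hall-type argument are needed. If you adopt the paper's solution concept, your construction closes in a few lines; if you insist on beliefs that respect actual types, this construction cannot yield the stated bound.
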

\begin{proof}
  Here is an example with $n$ players. Consider the case where for
  every $i \ge 1$, player $i$ is interested in resource 0 and resource
  $i$. For every $i\ge 1$, the total value of resource $i$ is
  $(n-i+1)(1-\epsilon)/i$ (for some small $\epsilon>0$). The value of
  resource 0 is 1.

  We claim that there is an undominated strategy game play where every
  player chooses resource 0 giving a social welfare of 1, whereas the
  optimal welfare is achieved by assigning player $i$ resource $i$
  giving a total welfare of $n(\log(n)-1)(1-\epsilon)$.

  Here is such an undominated strategy profile: for each $i$, player
  $i$ believes that every player after her is only interested in
  resource $i$. With this belief, it is easy to see that choosing
  resource 0 is an undominated strategy for every player.
\end{proof}

\subsection{General value curves}\label{section:future-appendix}
In this general setting, we will be interested in value curves that do
not decrease too quickly. Furthermore, we study only the greedy
strategy since we have already seen that undominated strategy does not
perform well even for simple curves (Lemma~\ref{lem:marketundom}).

\begin{definition}[$(w,l)$-shallow value curve]
  A value curve $\dvalo{r}$ is \shallow{\vol}{\length} if for all $x
  \leq \length$, it is the case that $\dval{x}{r} \ge
  \frac{\sum_{t=0}^{ x}\dval{t}{r}}{ \vol x}$.
\end{definition}
The definition of $(w,l)$-shallow value curve says that the actual
payoff all players get from the resource being utilized with $x$
weight is not too much smaller than the integral of $\dvalo{r}$ from
$0$ to $x$.

In the following result, we show that this restriction on the
\emph{rate of decay} of the value curves is necessary to say anything
nontrivial about the performance of the greedy strategy.
\begin{lemma}\label{lem:future-lb}
  Even with perfect counters, there exist sequential resource-sharing
  games $g$, where each resource $r$'s value curve $\dvalo{r}$ is
  $\shallow{w}{n}$, such that in the future-dependent setting,
  $\comp_\greedy(\full, g) \geq 2w$.
\end{lemma}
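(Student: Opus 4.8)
The plan is to build an instance with a single ``trap'' resource $r_0$ whose value curve $\dvalo{r_0}$ is (essentially) flat at a high value over an initial stretch and then decays at a rate calibrated so that $\dvalo{r_0}$ is exactly $\shallow{w}{n}$, together with one low-value private resource $r_i$ for each player $i$, where player $i$ is interested only in $\{r_0,r_i\}$. The crucial feature of the future-dependent model is that a perfect counter only reports the usage \emph{so far}, $\used{i}{r}$, whereas player $i$'s utility from $r$ is $\dval{\w{r}}{r}$ with $\w{r}$ the \emph{eventual} usage; a greedy player therefore evaluates $r$ at (roughly) $\dval{\used{i}{r}}{r}$ --- the value the counter would imply if it were the final count --- which badly overestimates $r_0$ while few players are on it.

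First I would fix the arrival order and tie-breaking adversarially and trace the greedy trajectory: each arriving player sees $r_0$ with a count that is still small enough that its perceived value exceeds her private value, so by induction (nearly) all $n$ players pile onto $r_0$. The eventual usage of $r_0$ is then $\approx n$, so each of those players collects only $\dval{n}{r_0}$, giving $SW(\greedy)=\Theta\!\big(n\,\dval{n}{r_0}\big)$.

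Next I would lower bound $OPT$ by the allocation that puts only a small number $j^*$ of players on $r_0$ --- so they enjoy the high flat value, contributing $\approx \max_j j\,\dval{j}{r_0}$ --- and sends the remaining players to their private resources. Using that $\dvalo{r_0}$ is $\shallow{w}{n}$: the shallowness inequality at $x=n$ already gives $n\,\dval{n}{r_0}\ge \tfrac1w\sum_{t=0}^n \dval{t}{r_0}\ge \tfrac1w\max_j j\,\dval{j}{r_0}$, i.e.\ a factor-$w$ separation between greedy's crowded welfare and $OPT$'s use of $r_0$ alone; with the curve (and the private values) chosen so this bound is tight and so that $OPT$ also harvests comparable welfare from the lightly loaded $r_0$ and from the private resources that greedy's overloaded players forgo, one obtains $OPT \ge 2w\cdot SW(\greedy)$, hence $\comp_\greedy(\full,g)\ge 2w$.

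The main obstacle is getting the constant right: the ``free'' computation above yields only a ratio of order $w$, so the instance must be engineered so that (i) $\dvalo{r_0}$ really is $\shallow{w}{n}$ with the defining inequality tight at the count to which greedy crowds (a weaker shallowness constant would weaken the bound), (ii) the greedy trajectory is exactly as claimed under the worst tie-breaking --- in particular no player prefers her private resource until $r_0$ has already been crowded down --- and (iii) the arrival/interest structure lets $OPT$ extract enough additional welfare that the overcrowded greedy solution is off by the full factor $2w$ and not merely $\Theta(w)$. Once the instance is pinned down, the three steps above are routine calculations.
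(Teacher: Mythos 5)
There is a genuine gap: the lemma is an existence statement, so its entire content is a concrete instance together with three verifications (the curve is \shallow{w}{n}, the greedy trajectory is as claimed, and the ratio is at least $2w$), and your proposal supplies none of these. You sketch a family of instances and then explicitly concede that the ``free'' computation only yields a ratio of order $w$, deferring to unspecified ``engineering'' the step that upgrades $\Theta(w)$ to the stated $2w$. That deferred step is exactly where the difficulty lives: in your $n$-player pile-on construction, the shallowness constraint $\dval{x}{r_0} \geq \frac{1}{wx}\sum_{t\le x}\dval{t}{r_0}$ must hold at \emph{every} $x\le n$ (not just at $x=n$), which limits how fast the trap curve can decay; simultaneously, for greedy to keep piling on you need $\dval{0}{r_i}\le \dval{\used{i}{r_0}+1}{r_0}$ for every arriving $i$, which caps the welfare OPT can harvest from the private resources by essentially the same sum $\sum_t \dval{t}{r_0}$ that shallowness already bounds. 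Making both inequalities simultaneously tight so that the two contributions to OPT each account for a full factor $w$ is not routine, and you never exhibit a curve for which it works.

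The paper's proof shows that all of this machinery is unnecessary: two players and two resources suffice. Resource $r$ is a step function worth $w$ to a lone user and about $\tfrac12$ per user once two users share it; resource $r'$ is worth $w-\epsilon$; player $1$ can use either, player $2$ is captive to $r$. Greedy sends player $1$ to $r$ (the perfect counter reads $0$, so she perceives value $w > w-\epsilon$; no tie-breaking adversary is needed), both players end up sharing $r$ for total welfare about $1$, while OPT routes player $1$ to $r'$ for total welfare $2w-\epsilon$. The factor $2w$ arises transparently as ``two players, each of whom could have earned $w$, jointly earn $1$.'' Your underlying intuition --- that a perfect counter of \emph{past} usage systematically overstates the value of a resource whose worth depends on \emph{eventual} usage, so greedy overcrowds it while OPT diverts players to modest alternatives --- is exactly the right one and is the same mechanism the paper exploits. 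But as written the proposal is a plan rather than a proof: to complete it you would either need to carry out the calibration you postpone, or (much more simply) collapse your construction to the two-player special case, where the only remaining check is that the step curve satisfies the \shallow{w}{n} inequality at $x=2$ (which itself requires setting the second step slightly above $\tfrac12$, a point the paper also glosses over).
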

\begin{proof}
  Consider two players and two resources $r, r'$. Let $r$ have a value
  curve which is a step function, with $\val{0}{r} = w$, $\val{1}{r} =
  \frac{1}{2}$ and $\val{0}{r'} = w - \epsilon$. Suppose player one
  has access to both resources and player two has only resource $r$ as
  an option. Then, player one will choose $r$ according to greedy, and
  player two will always select $r$. The social welfare will be
  $SW(\greedy) = 1$, whereas $OPT$ is for player $1$ to take $r'$ and
  will have $SW(OPT) = 2w - \epsilon$. As $\epsilon \to 0$, this ratio
  approaches $2w$.
\end{proof}

Thus, as $w\to\infty$, the competitive ratio of the greedy strategy is
unbounded. Fortunately, the competitive ratio cannot be worse than
this, for fixed $w$, as we show in the theorem below.

\begin{theorem}\label{thm:full-dep}
  Suppose, for a sequential resource-sharing game $g$, each resource
  $r$'s value curve $v_r$ is $\shallow{w}{n}$. Then, in the in the
  future-dependent setting, $\comp_\greedy(\mech, g) =
  O(w\alpha\beta)$ for an $(\alpha, \beta)$-counter $\mech$.
\end{theorem}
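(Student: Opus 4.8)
The plan is to mirror the argument for Theorem~\ref{thm:greedy}, adapting the two main ingredients---the ``perceived value vs.\ actual value'' comparison and the reduction to a matching/greedy bound---to the future-dependent setting. The key conceptual shift is that here a player's utility $\dval{\w{r}}{r}$ depends on the \emph{total} load $\w{r}$ on resource $r$, which is not known to her when she arrives; the counter only tells her (approximately) the number of players who have chosen $r$ so far. So I would first define, for each player $i$ choosing resource $r$, her \emph{perceived value} to be $\dval{\sused{i}{r}}{r}$ (the value she would get if her displayed count were the final load), and separately reason about how the true final load $\w{r}$ compares to the counts seen along the way.

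First I would establish a ``grouping'' lemma analogous to Lemma~\ref{lem:perceived}: if $k$ players in total choose resource $r$, then since the counter is $(\alpha,\beta)$-accurate, the displayed values $\sused{i}{r}$ can be small (at most $(T-1)\beta$, say) only for a bounded number $O(\alpha\beta)$ of those players before the true count---and hence the displayed count---must jump to the next threshold $T\alpha\beta$. Using the $(w,l)$-shallow hypothesis, $\dval{x}{r} \ge \frac{\sum_{t=0}^{x}\dval{t}{r}}{wx}$, I can lower-bound the total \emph{actual} welfare $\sum_{i: a_i \ni r} \dval{\w{r}}{r} = (\text{number of choosers})\cdot \dval{\w{r}}{r}$ against $\frac{1}{wx}\sum_{t=0}^{x}\dval{t}{r}$ evaluated at $x=\w{r}$, while the total \emph{perceived} welfare $\sum_i \dval{\sused{i}{r}}{r}$ is upper-bounded, via the grouping and monotonicity of the counter and the value curve, by $O(\alpha\beta)\sum_{T}\dval{(T-1)\beta}{r}$, which telescopes against the integral-like sum $\sum_{t=0}^{\w{r}}\dval{t}{r}$. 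Combining these gives that the actual social welfare of greedy play is at least an $\Omega(1/(w\alpha\beta))$-fraction of the perceived social welfare; the extra factor of $w$ (compared with the $2\alpha\beta$ of Lemma~\ref{lem:perceived}) is exactly the loss coming from the shallowness parameter.

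Next I would relate the perceived social welfare to $OPT(g)$. As in the proof of Theorem~\ref{thm:greedy}, I would build an auxiliary instance $G'$ whose node weights for each resource $r$ are set so that the greedy play according to the counters corresponds exactly to greedy matching in $G'$ (using monotonicity of the counters so that earlier copies look more valuable, and the underestimator property so that the max-weight matching in $G'$ dominates $OPT(g)$). Then I would invoke the constant-factor greedy-matching bound---Theorem~\ref{thm:greedy4} (or Lemma~\ref{lem:greedycont} in the continuous case)---to conclude that the perceived welfare is within a constant of $OPT(g)$. Chaining the two bounds yields $\comp_\greedy(\mech,g) = O(w\alpha\beta)$.

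The main obstacle I anticipate is the interaction between the future-dependence of the utility and the matching reduction: in the future-independent case each chooser of $r$ gets a \emph{distinct} value $\dval{k}{r}$, whereas here all $\w{r}$ choosers get the \emph{same} value $\dval{\w{r}}{r}$, so the natural ``matching'' picture degenerates and I must be careful that the auxiliary graph $G'$ is set up to capture the \emph{perceived} (and not actual) values, and that the shallowness assumption is precisely what bridges the gap between ``one copy of weight $\dval{\w{r}}{r}$ times $\w{r}$'' and ``the sum $\sum_{t\le \w{r}}\dval{t}{r}$'' that the matching argument manipulates. I would also need to double-check the edge case where the greedy players undervalue or overvalue a resource because the counter is off---here, unlike the greedy future-independent analysis, a player's perceived value can be far from her actual value since the final load is unknown---but the aggregate grouping argument should absorb this, just as the remark in the Discussion section notes for the future-independent case.
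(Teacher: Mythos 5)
Your proposal is correct and follows essentially the same route as the paper: both arguments use the shallowness hypothesis to relate the future-dependent welfare $F_r\,\dval{F_r}{r}$ of each resource to the future-independent sum $\sum_{t\le F_r}\dval{t}{r}$ (costing the factor $w$), and then invoke the perceived-welfare/grouping machinery of Lemma~\ref{lem:perceived} and the matching reduction of Theorem~\ref{thm:greedy} to absorb the counter error (costing $O(\mult\add)$). The only cosmetic difference is that the paper anchors the perceived value at the true count $\dval{\used{i}{r}+1}{r}$ so that the perceived welfare telescopes exactly to $\sum_r\sum_{x\le F_r}\dval{x}{r}$ before applying shallowness, whereas you anchor it at the displayed count and fold the grouping and shallowness steps together; both orderings yield the same $O(w\mult\add)$ bound.
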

\begin{proof}[Proof Sketch]
According to the greedy strategy, player $i$ chooses the resource in $A_{i}$ that maximizes $\dval{\used{i}{r}+1}{r}$, and we say $\dval{\used{i}{r}+1}{r}$ is her perceived value if $r$ is the resource she chose.
Terming the sum of the perceived values of all players as the perceived social welfare, $PSW(\greedy)$, we have
\begin{equation}\label{eqn:shallow}
\begin{split}
 PSW(\greedy) = & \sum_{i\in [n]} \sum_{r}\ract{i}{r} \dval{\used{i}{r}+1}{r} dx = \sum_{r}\sum_{x=0}^{\used{n}{r}+\ract{n}{r}}\dval{x}{r}\\
 \leq & \sum_{r} w \left(\used{n}{r} + \ract{n}{r}\right)\dval{\used{n}{r} + \ract{n}{r}}{r} = w\,SW(\greedy)
\end{split}
 \end{equation}
\noindent where the last inequality comes from our assumption about the value curves all being $\shallow{w}{n}$. 

The final part of the argument must show that the actual welfare from
greedy play with respect to the counters is well-approximated by the
perceived welfare with respect to the true counts. Since the counters
are accurate within some quantity $\leq n$, this is the case.
Following an analysis similar to that of Theorem \ref{thm:greedy},we
have our result.
\end{proof}

\section{Analysis of Private Counters}\label{section:app-counters}

\begin{proof}[Proof of Lemma~\ref{lemma:treesum}] We assume the reader is familiar with
  the TreeSum mechanism.
The privacy of this construction follows the same argument as for the
original constructions. One can view $m$ independent copies of the TreeSum protocol
as a single protocol where the Laplace mechanism is used to release
the entire vector of partial sums. Because the $\ell_1$-sensitivity of
each partial sum is 1 (since $\|a_t\|\leq 1$), the amount of Laplace
noise (per entry) needed to release the $m$-dimensional vector partial sums 
case is the same as for a dimensional $1$-dimensional counter. 

To see why the approximation claims holds, 
  we can apply Lemma 2.8 from \cite{ChanSS11} (a tail bound for sums
  of independent Laplace random variables) with $b_1=\cdots = b_{\log n} =
  {\log n}/\eps$, error probability $\delta = \pfail/mn$, 
  $\nu = \frac{ (\log n)\sqrt{\log(1/\delta)}}{\eps} $ and
  $\lambda = \frac{ (\log n)(\log (1/\delta)}{\eps}$, we get that each
  individual counter estimate $s_t(j)$ has additive error $O(\frac{(\log n )(\log
    (nm/\pfail))}{\eps})$ with probability at least $1-\pfail/(mn)$. Thus,
    all $n\cdot m$ estimates satisfy the bound simultaneously with probability at
    least $1-\pfail$.
\end{proof}
\begin{proof}[Proof of Lemma~\ref{lemma:FTSum}] 
  We begin with the proof of privacy. The first phase of the protocol
  is $\eps/2$-differentially private because it is an instance of the
  ``sparse vector'' technique of \citet{HR10} (see also \cite[Lecture
  20]{Roth-course11} for a self-contained exposition). The second
  phase of the protocol is $\eps/2$-differentially private by the
  privacy of TreeSum. Since differential privacy composes, the scheme
  as a whole is $\eps$-differentially private. Note that since we are
  proving $(\eps,0)$-differential privacy, it suffices to consider
  nonadaptive streams; the adaptive privacy definition then follows
  \cite{dwork2010}.
  
  We turn to proving the approximation guarantee. Note that the each
  of the Laplace noise variables added in phase 1 of the algorithm (to
  compute $\tilde{\used{t}{r}}$ and $\tau_j$) uses parameter
  $2/\eps'$. Taking a union bound over the $mn$ possible times that
  such noise is added, we see that with probability at least
  $1-\pfail/2$, each of these random variables has absolute value at
  most $O(\frac{\log(mn/\pfail)}{\eps'}$. Since $\frac {2}{\eps'}=
  O(\frac{mk}{\eps})$ and $k = O(\log \log (\frac {nm}{\pfail}) + \log
  \frac 1 \eps)$, we get that each of these noise variables has
  absolute value $\tilde O_\alpha (\frac{m \log(mn/\pfail)}{\eps})$
  with probability all but $\gamma/2$. We denote this bound $E_{1}$.

  Thus, for each counter, the $i$-th flag is raised no earlier than
  when the value of the counter first exceeds $\alpha^i (\log n) -
  E_1$, and no later than when the counter first exceeds $\alpha^i
  (\log n) + E_1$. The very first flag might be raised when counter
  has value 0. In that case, the additive error of the estimate is
  $\log n$, which is less than $E_1$. Hence, he mechanism's estimates
  during the first phase provide an
  $(\alpha,E_1,\pfail/2)$-approximation (as desired).
  
  The flag that causes the algorithm to enter the second phase is
  supposed to be raised when the counter takes the value
  $A:=\alpha^k(\log n) \geq \frac{\alpha}{\alpha -1}\cdot
  C_{tree}\cdot \frac{\log
      (nm/\gamma)}{\eps} $; in fact, the counter could be as small as
    $A-E_1$.  After that
  point, the additive error is due to the TreeSum protocol and is at most
  $B:= C_{tree}\cdot \log(n)\cdot \log(nm/\pfail)/\eps$ (with probability at least
  $1-\pfail/2$)
  by Lemma~\ref{lemma:treesum}. The reported value $\sused{i}{r}$ thus satisfies
  \[\sused{i}{r}\geq \used{i}{r} - B = \frac{1}{\alpha}\used{i}{r} +
  \underbrace{(1-\frac{1}{\alpha}) \used{i}{r} -B}_{\text{residual error}}\, .\]

  Since $\used{i}{r}\geq A-E_1$, the ``residual error'' in the
  equation above is at least $(1-\frac{1}{\alpha}) (A-E_1) - B =
  -(1-\frac{1}{\alpha})E_1\geq -E_1$. Thus, the second phase of the
  algorithm also provides $(\alpha, E_1,\pfail/2)$-approximation. With
  probability $1-\pfail$, both phases jointly provide a $(\alpha,
  E_1,\pfail)$-approximation, as desired.
\end{proof}

%%% Local Variables: 
%%% mode: latex
%%% TeX-master: "ec"
%%% End: 

\end{document}